\newtheorem{theorem}{Theorem}
\newtheorem{lemma}[theorem]{Lemma}
\newtheorem{corollary}[theorem]{Corollary}
\newtheorem{conjecture}[theorem]{Conjecture}
\crefname{conjecture}{conjecture}{conjectures}
\Crefname{conjecture}{Conjecture}{Conjectures}
\newtheorem{definition}[theorem]{Definition}
\newtheorem{example}[theorem]{Example}
\crefname{secinapp}{appendix}{appendices}
\Crefname{secinapp}{Appendix}{Appendices}
\newcommand{\abs}[1]{\left\lvert#1\right\rvert}
\newcommand{\onenorm} [1]{\ensuremath{\lVert#1\rVert_1}}
\newcommand{\twonorm} [1]{\ensuremath{\lVert#1\rVert_2}}
\newcommand{\infnorm} [1]{\ensuremath{\lVert#1\rVert_\infty}}
\newcommand{\pnorm}   [1]{\ensuremath{\lVert#1\rVert_p}}
\newcommand{\qnorm}   [1]{\ensuremath{\lVert#1\rVert_q}}
\newcommand{\trnorm} [1]{\ensuremath{\lVert#1\rVert_{\Tr}}}
\newcommand{\bra}[1]{\ensuremath{\left\langle{#1}\right|}}
\newcommand{\ket}[1]{\ensuremath{\left|{#1}\right\rangle}}
\newcommand{\projdyad}[1]{\ket{#1}\bra{#1}}
\newcommand{\braket}[2]{\ensuremath{\left\langle{#1}\middle|{#2}\right\rangle}}
\newcommand{\braopket}[3]{\ensuremath{\left\langle{#1}\middle|{#2}\middle|{#3}\right\rangle}}
\newcommand{\braopketsmall}[3]{\ensuremath{\langle{#1}|{#2}|{#3}\rangle}}
\newcommand{\hilb}[1]{\ensuremath{\mathcal{H}_{#1}}}
\newcommand{\Tr}[0]{\ensuremath{\textnormal{Tr}}}
\newcommand{\ot}[0]{\otimes}
\newcommand{\bigomic}[0]{\textnormal{O}}
\newcommand{\expect}[1]{\mathbb{E}\left[ #1 \right]}
\newcommand{\matabs}[1]{\ensuremath{\bar{#1}}}
\newcommand{\onesvec}[0]{\ensuremath{\mathbf{1}}}
\newcommand{\zeromat}[0]{\ensuremath{\mathbf{0}}}
\newcommand{\pathidx}[0]{{i_0 \dotsc i_S}}
\newcommand{\intfc}[0]{\mathcal{I}}
\newcommand{\intfs}[0]{\mathcal{J}}
\newcommand{\ifmax}[0]{\mathcal{I}_{\textnormal{max}}}
\newcommand{\DC}[0]{\mathcal{D}}
\newcommand{\myvec}[1]{\bm{#1}}
\newcommand{\vecprod}[2]{#1^\top #2}
\newcommand{\sgn}[0]{\textnormal{sgn}}
\newcommand{\poly}[0]{\textnormal{poly}}
\newcommand{\polylog}[0]{\textnormal{polylog}}
\newcommand{\aA}[0]{\matabs{A}}
\newcommand{\aAT}[0]{\matabs{A}^\top}
\newcommand{\ihat}[0]{\hat{\imath}}
\newcommand{\jvec}[0]{\myvec{j}}
\newcommand{\kvec}[0]{\myvec{k}}
\newcommand{\uvec}[0]{\myvec{u}}
\newcommand{\vvec}[0]{\myvec{v}}
\newcommand{\xvec}[0]{\myvec{x}}
\newcommand{\yvec}[0]{\myvec{y}}
\newcommand{\haarn}[0]{G_n}
\newcommand{\abshaarn}[0]{\matabs{G}_n}
\newcommand{\fval}[0]{f}
\newcommand{\bval}[0]{b}
\newcommand{\bmax}[0]{\bval_{\textrm{max}}}
\newcommand{\bopt}[0]{\bval_{\textrm{opt}}}
\newcommand{\epsp  }[2]{\textnormal{EPS}_p(#1,#2)}
\newcommand{\epstwo}[2]{\textnormal{EPS}_{2}(#1,#2)}
\newcommand{\epsinf}[2]{\textnormal{EPS}_{\infty}(#1,#2)}
\newcommand{\ehtp  }[2]{\textnormal{EHT}_p(#1,#2)}
\newcommand{\ehttwo}[2]{\textnormal{EHT}_{2}(#1,#2)}
\newcommand{\ehtinf}[2]{\textnormal{EHT}_{\infty}(#1,#2)}
\newcommand{\Pdist  }[1]{P  (#1)}
\newcommand{\Qdist  }[1]{Q  (#1)}
\newcommand{\Rdist  }[1]{R  (#1)}
\newcommand{\Wdist  }[1]{W  (#1)}
\newcommand{\Pdistsub}[2]{P_{#1}(#2)}
\newcommand{\Qdistsub}[2]{Q_{#1}(#2)}
\newcommand{\Rdistsub}[2]{R_{#1}(#2)}
\newcommand{\Rdistopt}[1]{\Rdistsub{\textrm{opt}}{#1}}
\begin{document}

\title{Quantum interference as a resource for quantum speedup}

\author{Dan Stahlke}
\email[Electronic address:]{dan@stahlke.org}
\affiliation{Department of Physics, Carnegie Mellon University,
	Pittsburgh, Pennsylvania 15213, USA}

\date{\today}

\begin{abstract}
	Quantum states can in a sense be thought of as generalizations of classical probability
	distributions, but are more powerful than probability distributions
	when used for computation or communication.  Quantum speedup therefore requires
	some feature of quantum states that classical probability distributions lack.  One such
	feature is interference.  We quantify interference and show that there can be no quantum
	speedup due to a small number of operations incapable of generating large amounts
	of interference (although large numbers of such operations can in
	fact lead to quantum speedup).
	Low-interference operations include sparse unitaries, Grover reflections, short time/low
	energy Hamiltonian evolutions, and the Haar wavelet transform.
	Circuits built from such operations can be classically simulated via a Monte Carlo
	technique making use of a convex combination of two Markov chains.
	Applications to query complexity, communication complexity, and the Wigner representation
	are discussed.
\end{abstract}

\pacs{03.67.Ac}  

\maketitle

\tableofcontents

\section{Introduction}

It is well known that certain quantum algorithms, such as Shor's and Grover's, provide a
speedup compared to classical algorithms.
However, the source of such quantum speedup is still somewhat of a mystery.
Insight can be gained by determining necessary resources.
Suppose that any quantum circuit not making use of some resource $X$ can be efficiently
simulated.
Being efficiently simulated, such circuits do not exhibit quantum speedup.
One can then conclude that resource $X$ is necessary for quantum speedup.
Many such resources have been identified.
For circuits on pure states there is no quantum speedup if at all times (i.e.\ before and after
every unitary) the state has small Schmidt rank~\cite{PhysRevLett.91.147902} or factors into a
product state on small subsystems~\cite{Jozsa08082003}.
For qubit circuits there is no quantum speedup if the discord across all bipartite cuts is
zero at all times~\cite{arxiv:1006.4402}.
There is no quantum speedup
for circuits that use only Clifford gates~\cite{arxiv:quant-ph/9807006},
or matchgates~\cite{Valiant:2001:QCS:380752.380785}, that have small
tree width~\cite{Markov:2008:SQC:1405087.1405105,arxiv:quant-ph/0603163},
or that use only operations having nonnegative Wigner
representation~\cite{1367-2630-14-11-113011,1367-2630-15-1-013037,PhysRevLett.109.230503}.
For a brief overview of resources identified as important for quantum speedup see section 9
of~\cite{arxiv:1206.0785}.

A tempting but naive explanation for quantum speedup is the exponentially large dimensionality
of Hilbert space ($2^n$ for $n$ qubits), combined with ``quantum parallelism''.
Shor's algorithm begins by preparing a state
$\sqrt{2^{-n}} \sum_x \ket{x}\ot\ket{f(x)}$ which can be interpreted as simultaneously
evaluating $f$ for all $2^n$ values of $x$.
However, this is not a satisfactory explanation for quantum speedup since classical probability
distributions over $n$ bits can also be considered as vectors of dimension $2^n$, and allow a
similar sort of parallelism.
We show that the quantum speedup is connected to \textit{interference}, something which
classical probability distributions lack.
Prior works have mentioned interference as being important for quantum speedup but without
offering a quantitative
definition~\cite{bennett:24,Fortnow2003597,PhysRevA.61.010301,vandennest2011}
or have quantified interference without providing a strong connection to
speedup~\cite{PhysRevA.73.022314}.

We consider quantum circuits composed of an initial state, followed by
several unitary operators, and terminated by measurement of a Hermitian observable.
The expectation value of this measurement can be written as a sum of Feynman-like paths
in the computational basis, and
this sum can be estimated via a Monte Carlo technique that considers an
ensemble of paths drawn according to a suitable probability distribution.
The required size of the ensemble is lower bounded by the square of the interference,
which we define as a sum of absolute values of the path amplitudes (\cref{def:intf_circuit}).
We are not able to reach this lower bound, however by using a convex combination of a pair of
Markov chains we are able to provide a simulation algorithm that runs in time quadratic in
the product of the
\textit{interference producing capacities} of each operator in the circuit,
defined as the largest amount of interference an operator
is capable of producing (\cref{def:ifmax}).
This ends up being equal to the largest singular value of the entrywise absolute value
of the operator in the computational basis.
Briefly, we can estimate expressions of the form $\braopket{\psi}{A \dotsm Z}{\phi}$,
of which quantum circuits
$\braopketsmall{\psi}{U^{(1)\dag} \dotsm U^{(T)\dag} M U^{(T)} \dotsm U^{(1)}}{\psi}$
are a special case, in time proportional to
$\twonorm{\matabs{A}}^2 \dotsm \twonorm{\matabs{Z}}^2$
where $\twonorm{\cdot}$ denotes maximum singular value and where a bar over an operator denotes
entrywise absolute value in the computational basis.
This work was inspired by, and extends,~\cite{vandennest2011} which provides an efficient
simulation when $A,\dotsc,Z$ are all sparse.

Operations with small interference producing capacity include the
\textit{efficiently computable sparse} operations as defined in~\cite{vandennest2011}
(e.g.\ permutation matrices and gates acting on a constant number of qubits), as well as
the Grover reflection operation, short time/low energy Hamiltonian evolutions, and the Haar
wavelet transform.  Our simulation algorithm will generally be exponentially slow
in the length of the circuit, but for the classes of gates listed in the previous sentence has
only polynomial dependence on the number of qubits.
An example of a circuit that apparently uses much ``quantum magic,'' but
which can nevertheless be simulated in time polynomial in the number of qubits, is depicted in
\cref{fig:big_circuit}.

We (of course) cannot efficiently simulate Shor's algorithm.  However, replacing the Fourier
transform by the Haar transform, which has low interference producing capacity, yields a
circuit that we can simulate (\cref{fig:shor}).
We show that there is no quantum advantage for communication
protocols that use small interference, although curiously this result does not apply to
one-round communication protocols.
To our knowledge, interference producing capacity is the first continuous-valued quantity that
has been shown necessary for quantum speedup, escaping the theorem of~\cite{arxiv:1204.3107}
which shows that a large class of continuous-valued quantities, such as entanglement and
discord, are not necessary for quantum speedup.

In sections~\ref{sec:monte} and~\ref{sec:markov} we explain our method for estimating
expectation values using a Monte Carlo technique with Markov chains.
In section~\ref{sec:eps_eht} we formalize and extend this technique and provide guarantees on runtime.
In section~\ref{sec:circuits} we characterize the types of quantum circuits that our technique can
efficiently simulate, and explore a variety of circuits that we cannot efficiently
simulate.
Section~\ref{sec:app} discusses further applications, including the Wigner representation and
communication complexity.
In section~\ref{sec:conj} we formalize our conjecture that interference, rather than interference
producing capacity, is required for quantum speedup.
Nontrivial proofs are deferred to appendices.

\section{Monte Carlo technique}
\label{sec:monte}

\subsection{Sampling of paths}

We will make use of the following circuit model.  Let $\rho$ be an initial density operator.
This state is acted upon by a sequence of unitaries $U^{(1)}, \dotsc, U^{(T)}$.
Finally, a Hermitian observable (e.g.\ a projector) $M$ is measured.
It is not assumed that the unitary operations or the final observable are local, they can be
arbitrary operations potentially involving all qubits or qudits (e.g.\ a quantum
Fourier transform).
The expectation value of this final measurement is
\begin{equation}
	\label{eq:expect_val}
	\Tr\left\{U^{(1)\dag} \dotsm U^{(T)\dag} M U^{(T)} \dotsm U^{(1)} \rho \right\}.
\end{equation}
Our goal is to estimate this expectation value to within small additive error, using a
classical computer.
We allow the unitaries to be oracle operations (as in Grover's algorithm), in which case we
grant the classical computer that runs the simulation access to an equivalent oracle (this is
further discussed in \cref{subsec:query}).

This is not the most general type of simulation.
In particular, we do not consider the case of a many-outcome measurement (e.g.\ individual
measurements on several qubits, or a measurement given by a projective decomposition
of the identity) in which the simulation is required to produce individual outcomes according
to the same probability distribution with which the quantum circuit produces those outcomes.
The ability to estimate the expectation value of a projector to within small multiplicative
error would allow simulation of such sampling, as discussed in \cite{terhal2004},
however the algorithm of the present paper only estimates to within additive error.

Although our primary goal is to estimate expressions of the form~\eqref{eq:expect_val}, we
generalize the task by considering products of the form
$\Tr\{A^{(1)} \dotsm A^{(S)} \sigma \}$
where $\sigma$ and the $A^{(s)}$ are matrices, not necessarily unitary or Hermitian, and possibly
rectangular (we label $\sigma$ separately from the $A^{(s)}$ in anticipation of the results of
the next section).
This product can be written as a sum over paths,
\begin{equation}
	\label{eq:sum_over_paths}
	\Tr\{A^{(1)} \dotsm A^{(S)} \sigma \}
	= \sum_\pathidx
	A^{(1)}_{i_0 i_1} \dotsm A^{(S)}_{i_{S-1}i_S} \sigma_{i_S i_0}.
\end{equation}
Or, by defining the tuple index $\pi = (\pathidx)$, this can be written as
\begin{align}
	& \Tr\{A^{(1)} \dotsm A^{(S)} \sigma \} = \sum_\pi V(\pi)
	\\
	& V(\pi) = A^{(1)}_{i_0 i_1} \dotsm A^{(S)}_{i_{S-1}i_S} \sigma_{i_S i_0}.
\end{align}
Our strategy is to estimate this sum by drawing a reasonably small number of paths $\pi$
according to a probability distribution, denoted $\Rdist{\pi}$.
Any probability distribution can be used, although some are more suitable than others.
Finding a good $\Rdist{\pi}$ will be a central goal of this section and the next.
Denote by $\Pi$ a random variable that takes value $\pi$ with probability $\Rdist{\pi}$.
Consider the expectation value of $V(\Pi) / \Rdist{\Pi}$.
\begin{align}
	\expect{\frac{V(\Pi)}{\Rdist{\Pi}}}
	&= \sum_\pi \frac{V(\pi)}{\Rdist{\pi}} \Rdist{\pi}
	\\ &= \sum_\pi V(\pi).
\end{align}

By the weak law of large numbers, $\sum_\pi V(\pi)$ can be approximated to arbitrary accuracy by
computing the mean of sufficiently many samples of $V(\Pi) / \Rdist{\Pi}$, however the
efficiency of this strategy hinges on two things.
First, it must be possible using a classical computer to efficiently draw random samples
according to the probability distribution $\Rdist{\pi}$ and to compute the corresponding values
$V(\pi) / \Rdist{\pi}$.
This is an important point that we will return to throughout the paper.
Second, the sample mean of $V(\Pi) / \Rdist{\Pi}$ must rapidly converge to its expectation
value.
The Chernoff-Hoeffding bound states that for a random
variable whose magnitude is bounded by $\bval$, the mean of $\bigomic(\epsilon^{-2} \bval^2)$
samples is very likely to approximate the expectation value to within additive error
$\epsilon$.
Thus there is rapid convergence when $\max_\pi \{ \abs{V(\pi)}/\Rdist{\pi} \}$ is small.
Note that this is a sufficient but not necessary condition for rapid convergence, for example
considering the variance of $V(\Pi)/\Rdist{\Pi}$ could in some cases reveal that convergence
happens more rapidly.

We now present the Chernoff-Hoeffding bound in one of its standard forms, along with a
corollary that adapts it to our application.

\begin{theorem}[Chernoff-Hoeffding bound~\cite{Hoeffding1963}]
	\label{thm:chernoff1}
	Let $X_1, \dotsc, X_K$ be independent identically distributed real-valued random variables
	with expectation value $\expect{X}$ and satisfying $\abs{X_k} \le \bval$.
	Let $\epsilon > 0$.  Then
	\begin{equation}
		\Pr\left\{\abs{ \frac{1}{K} \sum_{k=1}^K X_k - \expect{X} } > \epsilon
		\right\} \le 2 e^{-K \epsilon^2 / 2 \bval^2}.
		\label{eq:ch_bound}
	\end{equation}
\end{theorem}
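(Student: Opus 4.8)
The plan is to use the standard exponential-moment (Chernoff) method, with Hoeffding's lemma supplying the key estimate on the moment generating function. Fix $t > 0$ and set $Y_k = X_k - \expect{X}$, so that $\expect{Y_k} = 0$ and, since $\abs{X_k} \le \bval$, each $Y_k$ takes values in an interval of length $2\bval$. Applying Markov's inequality to the nonnegative random variable $e^{t \sum_k Y_k}$ gives
\[
	\Pr\left\{ \frac{1}{K}\sum_{k=1}^K X_k - \expect{X} > \epsilon \right\}
	= \Pr\left\{ \textstyle\sum_k Y_k > K\epsilon \right\}
	\le e^{-t K \epsilon}\, \expect{ e^{t \sum_k Y_k} },
\]
and by independence the last expectation factors as $\prod_{k=1}^K \expect{e^{t Y_k}}$.

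Next I would invoke Hoeffding's lemma: if a random variable $Y$ has mean zero and is supported on an interval of length $\ell$, then $\expect{e^{tY}} \le e^{t^2 \ell^2 / 8}$ for all $t$. With $\ell = 2\bval$ this yields $\expect{e^{t Y_k}} \le e^{t^2 \bval^2 / 2}$, so the upper-tail probability is at most $e^{-t K \epsilon + K t^2 \bval^2 / 2}$. The exponent $K(-t\epsilon + t^2 \bval^2 / 2)$ is minimized over $t > 0$ at $t = \epsilon / \bval^2$, which gives the bound $e^{-K \epsilon^2 / 2 \bval^2}$. Running the identical argument with $X_k$ replaced by $-X_k$ bounds $\Pr\{\frac{1}{K}\sum_k X_k - \expect{X} < -\epsilon\}$ by the same quantity, and a union bound over the two one-sided events produces the factor $2$ in \eqref{eq:ch_bound}.

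The only nonroutine ingredient is Hoeffding's lemma, so that is where I would concentrate the real work. The cleanest route is to write $g(t) = \log \expect{e^{tY}}$ and check that $g(0) = 0$, $g'(0) = \expect{Y} = 0$, and $g''(t) \le \ell^2 / 4$ uniformly in $t$. The point is that $g''(t)$ equals the variance of $Y$ computed under the exponentially tilted law $d\mu_t \propto e^{tY} d\mu$; since that tilted law is still supported on an interval of length $\ell$, its variance is at most $(\ell/2)^2$, the worst case being the two-point distribution concentrated at the endpoints. A second-order Taylor expansion of $g$ about $0$, together with this uniform bound on $g''$, then gives $g(t) \le t^2 \ell^2 / 8$, i.e.\ the lemma. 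The variance-of-a-tilted-bounded-variable step is the one mildly delicate computation; everything surrounding it — Markov's inequality, the independence factorization, and the optimization over $t$ — is mechanical.
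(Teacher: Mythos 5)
The paper does not prove this theorem; it simply cites it to Hoeffding (1963) and uses it as a black box. Your argument is the standard and correct proof of that result — Markov's inequality applied to $e^{t\sum Y_k}$, factorization by independence, Hoeffding's lemma (via the bound $g''(t)\le \ell^2/4$ on the log-MGF through the tilted variance), optimization at $t=\epsilon/b^2$, and a union bound for the two-sided version — and it reproduces exactly the constant $2e^{-K\epsilon^2/2b^2}$ quoted in \eqref{eq:ch_bound}.
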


\begin{corollary}
	\label{thm:chernoff3}
	Let $V(\pi)$ be a complex valued function of $\pi$
	and $\Rdist{\pi}$ be a probability distribution.
	Define
	\begin{equation}
		\bmax = \max_\pi \left\{ \frac{\abs{V(\pi)}}{\Rdist{\pi}} \right\}.
		\label{eq:bmax}
	\end{equation}
	Let $\epsilon, \delta > 0$.
	Then, with probability less than $\delta$ of exceeding the error bound,
	$\sum_\pi V(\pi)$ can be estimated to within additive error $\epsilon$
	using $\bigomic(\log(\delta^{-1}) \epsilon^{-2} \bmax^2)$ draws from the distribution
	$\Rdist{\pi}$ and the same number of evaluations of $V(\pi)/\Rdist{\pi}$.
\end{corollary}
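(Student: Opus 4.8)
The plan is to reduce the claim to two applications of \cref{thm:chernoff1}, one for the real part and one for the imaginary part of the sampled quantities. First I would fix $K$ i.i.d.\ draws $\Pi_1, \dotsc, \Pi_K$ from $\Rdist{\cdot}$ and set $X_k = V(\Pi_k)/\Rdist{\Pi_k}$. The computation already displayed in the text gives $\expect{V(\Pi)/\Rdist{\Pi}} = \sum_\pi V(\pi)$, and by the definition of $\bmax$ in \eqref{eq:bmax} we have $\abs{X_k} \le \bmax$ with certainty, hence also $\abs{\Re X_k} \le \bmax$ and $\abs{\Im X_k} \le \bmax$; since the $X_k$ are i.i.d., so are the sequences $\{\Re X_k\}$ and $\{\Im X_k\}$.

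Next I would apply \cref{thm:chernoff1} separately to $\{\Re X_k\}_{k=1}^K$ and to $\{\Im X_k\}_{k=1}^K$, in each case with bound parameter $\bval = \bmax$ and target accuracy $\epsilon/\sqrt{2}$. Controlling both the real and imaginary parts of the sample mean to within $\epsilon/\sqrt{2}$ forces the (complex) sample mean $\frac1K \sum_k X_k$ to lie within $\epsilon$ of $\sum_\pi V(\pi)$, because $(\epsilon/\sqrt2)^2 + (\epsilon/\sqrt2)^2 = \epsilon^2$. Each of the two applications fails with probability at most $2 e^{-K\epsilon^2 / 4\bmax^2}$, so a union bound gives total failure probability at most $4 e^{-K\epsilon^2/4\bmax^2}$.

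Finally I would choose $K$ large enough that $4 e^{-K\epsilon^2/4\bmax^2} \le \delta$, i.e.\ $K \ge 4 \epsilon^{-2} \bmax^2 \ln(4/\delta)$, which is $\bigomic(\log(\delta^{-1}) \epsilon^{-2} \bmax^2)$ as claimed; each of the $K$ iterations uses exactly one draw from $\Rdist{\cdot}$ and one evaluation of $V(\pi)/\Rdist{\pi}$. I expect no serious obstacle: the only steps needing a little care are the passage from the real-valued hypothesis of \cref{thm:chernoff1} to the complex-valued $V$ (handled by the real/imaginary split plus a union bound) and the bookkeeping of the constants $\sqrt2$ and $4$, neither of which affects the stated asymptotic scaling. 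One could instead cite a complex-valued Hoeffding inequality directly, but the decomposition into real and imaginary parts keeps the argument self-contained.
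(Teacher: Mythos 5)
Your argument is correct and is essentially identical to the paper's own proof: both decompose into real and imaginary parts, apply \cref{thm:chernoff1} to each with accuracy $\epsilon/\sqrt{2}$, obtain the $4e^{-K\epsilon^2/4\bmax^2}$ failure bound via a union bound, and set $K = 4\ln(4/\delta)\epsilon^{-2}\bmax^2$. No discrepancies.
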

\begin{proof}
	It can be shown\footnote{
		This is shown by applying~\cref{thm:chernoff1} separately to the real and imaginary
		parts and using the fact that the sample mean is within additive error $\epsilon$ of
		the expectation value as long as both the real and imaginary parts are within
		$\epsilon/\sqrt{2}$.
	}
	that~\cref{thm:chernoff1} can be extended to complex variables at the expense
	of replacing the right hand side of~\eqref{eq:ch_bound} by
	$4 e^{-K \epsilon^2 / 4 \bval^2}$.
	Define the independent identically distributed random variables
	$X_k = V(\Pi_k) / \Rdist{\Pi_k}$ with $k \in \{1,\dotsc,K\}$.
	Applying the complex valued version of~\cref{thm:chernoff1}, and noting that
	$\abs{X_k} \le \bmax$ and $\expect{ V(\Pi) / \Rdist{\Pi} } = \sum_\pi V(\pi)$, we get
	\begin{equation}
		\label{eq:chernoff_VP}
		\Pr\left\{\abs{ \frac{1}{K} \sum_{k=1}^K \frac{V(\Pi_k)}{\Rdist{\Pi_k}} -
				\sum_{\pi} V(\pi) } > \epsilon
		\right\} \le 4 e^{-K \epsilon^2 / 4 \bmax^2}.
	\end{equation}
	Setting $K = \ln(4/\delta) 4 \epsilon^{-2} \bmax^2 = \bigomic(\log(\delta^{-1})
	\epsilon^{-2} \bmax^2)$ makes the right hand side of~\eqref{eq:chernoff_VP} equal to
	$\delta$.
\end{proof}

Since the number of samples needed depends only logarithmically on $\delta$, it is possible to
choose $\delta$ to be extremely small (say, one part in a billion) while having only minimal
impact on the number of samples needed.  With such a small $\delta$, the estimate will be very
likely to be within additive error $\epsilon$.

The number of samples needed for an accurate estimate is quadratic in $\bmax$, so finding an
$\Rdist{\pi}$ for which $\bmax$ is small is of crucial importance.
However, feasibility of the simulation also depends on the difficulty of drawing random paths
$\pi$ according to the distribution $\Rdist{\pi}$ and computing the corresponding values
$V(\pi)/\Rdist{\pi}$.
We will denote by the letter $\fval$ the time needed to carry out these operations.  Specifically,
we require that sampling from $\Rdist{\pi}$ and computing $V(\pi)/\Rdist{\pi}$ can be carried
out in average time $\bigomic(\fval)$ where $\fval$ is some function of the dimension or number
of qubits of a quantum circuit.
Since $\sum_\pi V(\pi)$ can be estimated by averaging
$\bigomic(\log(\delta^{-1}) \epsilon^{-2} \bmax^2)$
samples of $V(\Pi) / \Rdist{\Pi}$, each of which can be computed in time
$\bigomic(\fval)$, the total runtime of the algorithm is
$\bigomic(\log(\delta^{-1}) \epsilon^{-2} \bmax^2 \fval)$.

Some probability distributions are easier to sample from than others, and this needs to be
decided on a case by case basis.
For example, consider $\Rdist{i} = \abs{\psi_i}^2$ where $\ket{\psi}$ is a quantum state.
If $\ket{\psi}$ is a computational basis state then $\Rdist{i}$ is rather
trivial and can be sampled by simply outputting the sole index $i$ for which
$\Rdist{i} \ne 0$.
If $\ket{\psi}$ is a graph state on $n$ qubits then $\Rdist{i}$ is the uniform distribution
over the $2^n$ basis states.  This can be sampled in time $\bigomic(n)$ by tossing a
fair coin $n$ times, once for each qubit, so in this case $\fval=n$.
On the other hand, if $\ket{\psi}$ is defined as being the state just before the final
measurement in Shor's algorithm, then it is probably not feasible to sample from $\Rdist{i}$
efficiently on a classical computer.

For simplicity we will assume that all operations can be carried out with perfect
computational accuracy, including the degree to which the probability distribution of the
generated samples $\pi$ agrees with an ideal distribution $\Rdist{\pi}$, and the
precision of the computed $V(\pi)/\Rdist{\pi}$ values.
Of course, computers can only compute with finite precision.
However, since we are
concerned only with approximating expectation values to within additive error $\epsilon$,
carrying out the computations to finite but high precision is sufficient as long as the total
accumulated computational error is small compared to the error tolerance $\epsilon$.  This is
discussed in more detail in appendix~A of \cite{vandennest2011}.

\subsection{Interference}
\label{subsec:intf}

An efficient simulation requires choosing a probability distribution $\Rdist{\pi}$ for
which $\bmax$ of~\eqref{eq:bmax} is not large.
A tempting choice is
\begin{equation}
	\label{eq:holistic_dist}
	\Rdistopt{\pi} := \frac{ \abs{ V(\pi) } }{ \sum_{\pi'} \abs{ V(\pi') } }.
\end{equation}
It can be shown\footnote{
	Let $\Rdist{\pi}$ be any probability distribution that differs from $\Rdistopt{\pi}$
	of~\eqref{eq:holistic_dist}.  Then there must be a $\pi'$ such that
	$\Rdist{\pi'} < \Rdistopt{\pi'}$.  It follows that
	$\max_\pi \{ \abs{V(\pi)}/\Rdist{\pi} \}
		> \abs{V(\pi')}/\Rdistopt{\pi'} = \sum_\pi \abs{V(\pi)}$.
} that this is the unique distribution yielding the minimum possible value of $\bmax$,
\begin{align}
	\label{eq:circuit_abs_sum}
	\bopt &= \sum_{\pi} \abs{ V(\pi) }.
\end{align}
Being lowest possible value of $\bmax$,~\eqref{eq:circuit_abs_sum} represents a
lower bound on the number of samples needed as guaranteed by the Chernoff-Hoeffding bound,
although a more careful analysis of variances, for instance, could show that the algorithm
actually produces a faster than expected convergence.

An efficient algorithm requires both that $\bmax$ be small and that $\Rdist{\pi}$ can
be sampled from efficiently.
We do not know of a way to efficiently sample from the probability
distribution~\eqref{eq:holistic_dist} in general, so this is not useful for
computing the expectation value.
Nevertheless, it is worthwhile to discuss for a moment the case where the one condition is met
(small $\bmax$) even if the other condition is not met (ability to efficiently draw samples).
For concreteness,
consider a simple quantum circuit with only one unitary, $\Tr\{U^\dag M U \rho \}$.
This can be written as a sum over paths
\begin{equation}
	\label{eq:circuit_sum}
	\Tr\{U^\dag M U \rho \} = \sum_\pi V(\pi)
\end{equation}
with $\pi=(i,j,k,l)$ and $V(i,j,k,l) = U^\dag_{ij} M_{jk} U_{kl} \rho_{li}$.
Plugging this into~\eqref{eq:circuit_abs_sum} gives
\begin{equation}
	\label{eq:UMUrho_cost_optimal}
	\bopt = \Tr\{\matabs{U}^\dag \matabs{M} \matabs{U} \matabs{\rho}\}
\end{equation}
where a bar over a vector or matrix denotes entrywise absolute value in the computational
basis, a notation that will be used throughout this paper.
This generalizes in the obvious way for circuits with more than one unitary.

Comparing~\eqref{eq:circuit_abs_sum} and~\eqref{eq:circuit_sum}, both are sums
over paths but the latter involves an absolute value for each path.
The sum~\eqref{eq:circuit_sum} has magnitude bounded by 1 if the observable $M$ has
eigenvalues bounded in magnitude by 1.
The sum~\eqref{eq:circuit_abs_sum} on the other hand can take a much larger value
than~\eqref{eq:circuit_sum} when the terms in the latter sum exhibit cancellations due to
destructive interference.
For example, consider the case $\ket{\psi} = N^{-1/2} \sum_i \ket{i}$, $U$ the Fourier
transform, and $M$ the identity, giving $\bopt=\sqrt{N}$.

It may be enlightening to consider a physical example.
To this end, we introduce a simple toy-model version of Young's double-slit experiment.
Let states $\ket{0}$ and $\ket{1}$ represent a particle immediately exiting the upper and
lower slits, respectively, and let $\ket{x}$ represent a particle impacting the detector
at position $x$.
The transfer operator representing passage of the particle from the slits to
the detector will be some unitary $U$ satisfying
$U(\alpha\ket{0} + \beta\ket{1}) = \int_x (\alpha\psi_x + \beta\phi_x)\ket{x} dx$.
A particle passing through the upper slit will impact the detector at position $x$ with
probability density $\abs{\psi_x}^2$; for a particle passing through the lower slit the
probability density is $\abs{\phi_x}^2$.
A particle in a superposition of passing through upper and lower slits,
in state $\ket{+}=(\ket{0}+\ket{1})/\sqrt{2}$, will impact the
screen at $x$ with probability density
\begin{align}
	\abs{\frac{1}{\sqrt{2}} \psi_x + \frac{1}{\sqrt{2}} \phi_x}^2 =
		\frac{1}{2} \abs{\psi_x}^2 + \frac{1}{2} \abs{\phi_x}^2 +
		\textrm{Re}(\psi_x^* \phi_x).
\end{align}
The first two terms on the right hand side represent the probability that would be
expected if the particle were in a classical stochastic mixture of passing through one slit
or the other.  The third is the interference term.  Integrating this term over $x$
yields zero, as it must in order for the probabilities to sum to $1$.  The total
amount of interference can be quantified by instead integrating the absolute value of this
term.
Similarly, if we were interested in only part of the detector, say $x \in [0,1]$, the
interference associated with that region could be defined by integrating only over this range.
It turns out to be more mathematically convenient to include all three terms in
the definition of interference; for one thing the resulting quantity will be
multiplicative when considering a system composed of non-interacting subsystems.
The $\abs{\psi_x}^2/2 + \abs{\phi_x}^2/2$ terms contribute at
most $1$ (exactly $1$ if integrating over the entire range).
In summary, we may define the interference associated with the $x \in [0,1]$ region of the
detector as
\begin{align}
	\intfc = \int_{x \in [0,1]} \left( \frac{1}{2} \abs{\psi_x}^2 + \frac{1}{2} \abs{\phi_x}^2 +
		\abs{\psi_x^* \phi_x} \right) dx.
\end{align}
This is essentially what is done in~\eqref{eq:UMUrho_cost_optimal}.
Specifically, setting $\rho = \ket{+}\bra{+}$ and $M=\int_{x \in [0,1]} \ket{x}\bra{x} dx$
in~\eqref{eq:UMUrho_cost_optimal} yields
\begin{align}
	\bopt &= \int_{x \in [0,1]} \left( \frac{1}{\sqrt{2}} \abs{\psi_x} +
		\frac{1}{\sqrt{2}} \abs{\phi_x} \right)^2 dx
	\\ &= \int_{x \in [0,1]}
		\left( \frac{1}{2} \abs{\psi_x}^2 + \frac{1}{2} \abs{\phi_x}^2 +
		\abs{\psi_x^* \phi_x} \right) dx.
\end{align}
Note that~\eqref{eq:UMUrho_cost_optimal} depends upon the choice of basis since the entrywise
absolute value is basis dependent.  Typically one has some canonical basis in
mind, for example when one says that the double slit experiment exhibits interference this
is relative to the position basis.
For quantum circuits there is the computational basis, although in the interest of
efficient simulation one may choose to use some other basis.

For a more complicated apparatus, such as a network of beam splitters, similar arguments
apply: we quantify interference by computing a sum over paths, summing the absolute value
of each path contribution.
This definition depends upon a choice of course graining.  For instance, a box
which simply passes a photon from input to output undisturbed could be said to contribute
no interference.  On the other hand, if one were to take a more detailed view of this
box---suppose for example that it contains a perfectly balanced Mach-Zehnder
interferometer---then one could conclude that there is in fact interference.
The same applies to simulation of quantum circuits.  Although our simulation technique has
difficulty simulating the Fourier transform, a Fourier transform followed by its inverse
presents no difficulty if one course grains the circuit by replacing $F^\dag F$ by the
identity.

The above considerations lead to the following definition.

\begin{definition}
	\label{def:intf_circuit}
	The \textit{interference} of a quantum circuit
	with initial state $\rho$, unitary operators $U^{(1)}, \dotsc, U^{(T)}$,
	and measurement $M$
	is
	\begin{equation}
		\label{eq:intfc_circuit_def}
		\intfc\left(
			U^{(1)\dag},\dotsc,U^{(T)\dag},M,U^{(T)},\dotsc,U^{(1)},\rho
		\right) =
			\Tr\left\{
				\matabs{U}^{(1)\dag} \dotsm \matabs{U}^{(T)\dag}
				\matabs{M}
				\matabs{U}^{(T)} \dotsm \matabs{U}^{(1)}
				\matabs{\rho}
			\right\}.
	\end{equation}
	More generally, the interference of an arbitrary expression of the form
	$\Tr\{ A^{(1)} \dotsm A^{(S)} \sigma \}$ is
	\begin{equation}
		\intfc( A^{(1)}, \dotsc, A^{(S)}, \sigma ) =
			\Tr\{ \matabs{A}^{(1)} \dotsm \matabs{A}^{(S)} \matabs{\sigma} \}.
	\end{equation}
	This definition depends on the choice of basis.  Unless otherwise specified the standard
	(a.k.a.\ computational) basis is used.
\end{definition}

With this definition, we have that $\bmax \ge \intfc(U^\dag,M,U,\rho)$
in~\eqref{eq:bmax} for any choice of probability distribution, with equality
when the distribution~\eqref{eq:holistic_dist} is used.
Since the number of samples needed to estimate the expectation value using our technique is
proportional to $\bmax^2$, any quantum circuit with very large interference could never
feasibly be simulated with our technique, no matter the choice of $\Rdist{\pi}$.

While we don't know how to efficiently sample from the optimal probability
distribution~\eqref{eq:holistic_dist}, we conjecture that there is still some way to
efficiently estimate the expectation value of a quantum circuit in cases where the interference
is low.
The precise statement of this conjecture is a delicate matter taken up in section~\ref{sec:conj}.
We will however show, by the end of the next section, that it is possible to
simulate circuits in which each unitary as well as the final observable has a low
\textit{interference producing capacity} (\cref{def:ifmax}).

A connection between $\intfc$ and the decoherence functional of Gell-Mann and Hartle is
discussed in \cref{subsec:decoherence}.

\section{Markov chains}
\label{sec:markov}

\subsection{Introduction}

The problem with the probability distribution~\eqref{eq:holistic_dist}  is that there is no
obvious way to efficiently sample from it using a classical computer.
So while only $\bigomic(\log(\delta^{-1}) \epsilon^{-2} \intfc^2)$ samples are needed (with
$\intfc$ given by \cref{def:intf_circuit}), each sample may be very complicated to evaluate.
The essence of the difficulty is that this distribution treats the circuit
holistically, so drawing samples apparently requires an understanding of how
all the factors of~\eqref{eq:sum_over_paths} interact with each other.
In order to avoid this problem we instead use a probability distribution defined in terms of a
time-inhomogeneous Markov chain with a transition corresponding to each operator
in~\eqref{eq:sum_over_paths}.
More precisely, we take the convex combination of two (unrelated) Markov chains, one proceeding
left-to-right and the other proceeding right-to-left.
This way, it is only necessary to understand each individual operator, not the interactions
between operators.
The computation time of this simulation will end up being related not to the interference
$\intfc$ but rather the product of the interference producing capacities of each factor (a term
that will be defined at the end of this section).

The end result of this section will be an algorithm for estimating products of the form
$\Tr\{A^{(1)} \dotsm A^{(S)} \sigma\}$
where $\sigma$ and the $A^{(t)}$ are matrices, not necessarily unitary or Hermitian and possibly
rectangular.
This includes as a special case quantum circuits of the form
$\Tr\{U^{(1)\dag} \dotsm U^{(T)\dag} M U^{(T)} \dotsm U^{(1)} \rho\}$.
We build the algorithm step by step, considering first an example that
demonstrates why a convex combination of probability distributions is
needed, second an example that explains how the Markov chains are built, and finally
using a convex combination of Markov chains.
The exposition in this section is meant to be instructive; formal theorems will be taken up in
section~\ref{sec:eps_eht}.

\subsection{Inner product}
\label{subsec:dotprod}

Consider the task of estimating the inner product
$\braket{\psi}{\phi} = \sum_i \psi_i^* \phi_i$
where the two vectors satisfy the property $\pnorm{\psi}=\qnorm{\phi}=1$ with
$1/p+1/q=1$.\footnote{
	The $\ell^p$-norm, $\pnorm{\cdot}$, is defined as $\pnorm{\psi} = \left( \sum_i \abs{\psi_i}^p
	\right)^{1/p}$ when $1 \le p < \infty$ and $\pnorm{\psi} = \max_i \abs{\psi_i}$ when
	$p=\infty$.
	When $1/p+1/q=1$, the norms $\pnorm{\cdot}$ and $\qnorm{\cdot}$ are dual to each other.
}
In the context of quantum circuits $p=q=2$ is the natural choice; however, we
allow general $\ell^p$-norms because the case $p=1$, $q=\infty$ is also
important and because the general case may be of independent interest.
Here, as in the more general case that will follow, the key is to find a probability
distribution $\Rdist{i}$ that will be suitable for application of \cref{thm:chernoff3}.
It is needed that
\begin{equation}
	\bmax
	= \max_i \left\{ \frac{ \abs{V(i)} }{ \Rdist{i} } \right\}
	= \max_i \left\{ \frac{ \abs{\psi_i^* \phi_i} }{ \Rdist{i} } \right\}
	\label{eq:psiphiR_cost}
\end{equation}
is not large.
There are two obvious choices for the probability distribution:
$\Pdist{i} = \abs{\psi_i}^p$ and $\Qdist{i} = \abs{\phi_i}^q$.
Unfortunately, neither of these will guarantee a small $\bmax$.
However, for each $i$ at least one of the distributions $\Pdist{i}$ or $\Qdist{i}$ will work well.
The solution is to take a convex combination of these two distributions,
\begin{equation}
	\Rdist{i} = \frac{1}{p} \Pdist{i} + \frac{1}{q} \Qdist{i}.
\end{equation}
The algorithm that follows is an adaptation of one that appears in~\cite{vandennest2011}
(they used $p=q=2$ and a slightly different technique).
We present it as a formal theorem, in order to demonstrate how to carefully track
the algorithm's time complexity.

\begin{example}
	Let $1 \le p \le \infty$ and $1/p + 1/q = 1$.
	Let $\ket{\psi}$ and $\ket{\phi}$ be vectors with $\pnorm{\psi}=\qnorm{\phi}=1$.
	Suppose that it is possible to sample from the probability distributions
	$\Pdist{i} = \abs{\psi_i}^p$ and $\Qdist{i} = \abs{\phi_i}^q$,
	and to compute entries $\psi_i$ and $\phi_i$, in average time $\bigomic(\fval)$ for some
	$\fval$.
	It is possible, with probability less than $\delta>0$ of exceeding the
	error bound, to estimate $\braket{\psi}{\phi}$ to within additive error $\epsilon>0$ in
	average time $\bigomic(\log(\delta^{-1}) \epsilon^{-2} \fval)$.
\end{example}
\begin{proof}
	Let $V(i) = \psi_i^* \phi_i$ and $\Rdist{i} = \Pdist{i}/p + \Qdist{i}/q$.
	To apply \cref{thm:chernoff3} we need to bound $\bmax = \max_i\{ \abs{V(i)}/\Rdist{i} \}$.
	Making use of the (weighted) inequality of arithmetic and geometric means,\footnote{
		The weighted inequality of arithmetic and geometric means is a generalization of the
		more familiar inequality $x/2 + y/2 \ge \sqrt{xy}$.
		If $1 \le p \le \infty$ and $1/p + 1/q = 1$ then $x/p + y/q \ge x^{1/p} y^{1/q}$.
	}
	\begin{align}
		\bmax &= \max_i\{ \abs{V(i)}/\Rdist{i} \}
		\\ &=   \max_i\{ \abs{\psi_i^* \phi_i} / [\Pdist{i}/p     + \Qdist{i}/q    ] \}
		\\ &\le \max_i\{ \abs{\psi_i^* \phi_i} / [\Pdist{i}^{1/p}   \Qdist{i}^{1/q}] \}
		\\ &= 1.
	\end{align}
	By \cref{thm:chernoff3}, $\braket{\psi}{\phi} = \sum_i V(i)$ can be estimated at the cost
	of drawing $\bigomic(\log(\delta^{-1}) \epsilon^{-2})$ samples $i$ according to $\Rdist{i}$
	and computing the corresponding $V(i)/\Rdist{i}$ values.
	Sampling from $\Rdist{i}$ can be accomplished as
	follows: flip a biased coin that lands heads with probability $1/p$.  If it lands heads
	then draw $i$ from $\Pdist{i}$, otherwise draw $i$ from $\Qdist{i}$.
	By assumption this takes average time $\bigomic(\fval)$.
	Next, $V(i)/\Rdist{i}$ can be computed directly from $\psi_i$ and $\phi_i$, each of
	which can in turn be computed in average time $\bigomic(\fval)$.
	The $\bigomic(\log(\delta^{-1}) \epsilon^{-2})$ samples (as well as their mean) can
	therefore be computed in average time $\bigomic(\log(\delta^{-1}) \epsilon^{-2} \fval)$.
\end{proof}

\subsection{Nearly stochastic matrices}

We now move to a more
general case, estimation of $\braopket{\psi}{A^{(1)} \dotsm A^{(S)}}{\phi}$.
For the sake of simplicity, suppose that there are only two operators
(i.e.\ $S=2$) so that the goal is to estimate $\braopket{\psi}{AB}{\phi}$.
This can be written as a sum over paths as in~\eqref{eq:sum_over_paths},
\begin{equation}
	\braopket{\psi}{AB}{\phi} = \sum_{ijk} \psi^*_i A_{ij} B_{jk} \phi_k.
\end{equation}
To apply \cref{thm:chernoff3} to this problem,
set $\pi = (i,j,k)$ and $V(i,j,k) = \psi^*_i A_{ij} B_{jk} \phi_k$.
For efficient simulation it suffices to find a probability distribution $\Pdist{i,j,k}$ from
which we can efficiently draw samples using a classical computer, for which
$V(i,j,k) / \Pdist{i,j,k}$ can be efficiently computed, and for which
\begin{equation}
	\label{eq:cost_pathsum}
	\bmax =
	\max_{ijk} \left\{ \frac{ \abs{
		\psi^*_i A_{ij} B_{jk} \phi_k
	} }{ \Pdist{i,j,k} } \right\}
\end{equation}
is small enough that the estimation will converge reasonably fast.
As discussed in the previous section, a tempting choice for the probability distribution
is given by~\eqref{eq:holistic_dist},
however it is not clear how one would efficiently draw samples from this
since doing so apparently requires an understanding of how $\bra{\psi}$, $A$, $B$,
and $\ket{\phi}$ interact with each other.
To avoid this problem we define $\Pdist{i,j,k}$ in terms of a time-inhomogeneous Markov chain,
\begin{equation}
	\Pdist{i,j,k} = \Pdistsub{\psi}{i} \Pdistsub{A}{j|i} \Pdistsub{B}{k|j},
\end{equation}
with each transition depending on only one of the components of $\braopket{\psi}{AB}{\phi}$.
Plugging this into~\eqref{eq:cost_pathsum} gives
\begin{align}
	\bmax
	&=
	\max_{ijk} \left\{
		\frac{ \abs{
			\psi^*_i A_{ij} B_{jk} \phi_k
		} }{
			\Pdistsub{\psi}{i} \Pdistsub{A}{j|i} \Pdistsub{B}{k|j}
		}
	\right\}
	\\ &=
	\max_{ijk} \left\{
		\frac{ \abs{ \psi^*_i }}{ \Pdistsub{\psi}{i} } \cdot
		\frac{ \abs{ A_{ij} }}{ \Pdistsub{A}{j|i} } \cdot
		\frac{ \abs{ B_{jk} }}{ \Pdistsub{B}{k|j} } \cdot
		\abs{\phi_k}
	\right\}
	\\ &\le
	\label{eq:cost_markov_1inf}
	\max_i \left\{
		\frac{ \abs{ \psi^*_i }}{ \Pdistsub{\psi}{i} }
	\right\} \max_{ij} \left\{
		\frac{ \abs{ A_{ij} }}{ \Pdistsub{A}{j|i} }
	\right\} \max_{jk} \left\{
		\frac{ \abs{ B_{jk} }}{ \Pdistsub{B}{k|j} }
	\right\} \max_{k} \left\{
		\abs{\phi_k}
	\right\}.
\end{align}
The goal is then to find $\Pdistsub{\psi}{i}$, $\Pdistsub{A}{j|i}$, and
$\Pdistsub{B}{k|j}$ that minimize the terms of~\eqref{eq:cost_markov_1inf}.
Consider first the case where $\bra{\psi}$ is a probability
distribution, the matrices $A$ and $B$ are right-stochastic matrices,\footnote{
	A right-stochastic matrix is a nonnegative matrix with each row summing to 1,
	a left-stochastic matrix has each column summing to 1.
	We do not require stochastic matrices to be square.
}
and $\ket{\phi}$ has small entries (say, $\infnorm{\phi} \le 1$).
We can set $\Pdistsub{\psi}{i} = \psi_i$, $\Pdistsub{A}{j|i} = A_{ij}$, and
$\Pdistsub{B}{k|j} = B_{jk}$, with the result that each factor in~\eqref{eq:cost_markov_1inf}
is bounded by $1$.
If $\ket{\phi}$ is not a probability distribution, we can turn it into one by defining
$\Pdistsub{\psi}{i} = \abs{\psi_i} / \onenorm{\psi}$, similarly if $A$ is not a
right-stochastic matrix we can set
$\Pdistsub{A}{j|i} = \abs{A_{ij}} / \sum_{j'} \abs{A_{ij'}}$ (and likewise for $B$).
Then~\eqref{eq:cost_markov_1inf} becomes
\begin{align}
	\bmax
	&\le
	\onenorm{\psi}
	\max_i \left\{ \sum_{j'} \abs{A_{ij'}} \right\}
	\max_j \left\{ \sum_{k'} \abs{B_{jk'}} \right\}
	\infnorm{\phi}
	\\ &=
	\onenorm{\psi}
	\infnorm{\matabs{A}}
	\infnorm{\matabs{B}}
	\infnorm{\phi}.
\end{align}
Here, as in the rest of the paper, we use the induced norm for operators:
$\pnorm{A} = \max_{\uvec} \pnorm{A \uvec} / \pnorm{\uvec}$ (we do not use the entrywise or
Schatten norms).
Under this notation, $\twonorm{M}$ is the largest singular value of $M$, $\onenorm{M}$ is the
maximum absolute column sum, and $\infnorm{M}$ is the maximum absolute row sum.
By \cref{thm:chernoff3}, the value of $\braopket{\psi}{AB}{\phi}$ can be estimated by
drawing
\begin{equation}
	\bigomic\left(\log(\delta^{-1}) \epsilon^{-2} \bmax^2\right) \le
	\bigomic\left(\log(\delta^{-1}) \epsilon^{-2} \onenorm{\psi}^2
		\infnorm{\matabs{A}}^2 \infnorm{\matabs{B}}^2 \infnorm{\phi}^2\right)
\end{equation}
samples $(i,j,k)$ from $\Pdist{i,j,k}$ and averaging the corresponding $V(i,j,k)/\Pdist{i,j,k}$.

\subsection{General \texorpdfstring{$p,q$}{p,q}}
\label{subsec:sim_general_pq}

In the case of quantum circuits, it is the $\ell^2$-norm that is relevant.
Instead of
$\bmax \le \onenorm{\psi} \infnorm{\matabs{A}} \infnorm{\matabs{B}} \infnorm{\phi}$
from the previous example, we want
$\bmax \le \twonorm{\psi} \twonorm{\matabs{A}} \twonorm{\matabs{B}} \twonorm{\phi}$.
For the sake of generality, we allow arbitrary $p,q$ satisfying $1/p+1/q=1$.
The goal is to find a probability distribution that yields
$\bmax \le \pnorm{\psi} \qnorm{\matabs{A}} \qnorm{\matabs{B}} \qnorm{\phi}$.
As in \cref{subsec:dotprod}, the way to proceed is by taking a convex combination of two
probability distributions,
$\Rdist{i,j,k} = \Pdist{i,j,k}/p + \Qdist{i,j,k}/q$.
Here $\Pdist{i,j,k}$ will be a time-inhomogeneous Markov chain proceeding in the $i \to j \to k$
direction and $\Qdist{i,j,k}$ a different Markov chain proceeding in the $k \to j \to i$
direction.
Again the inequality of arithmetic and geometric means plays a crucial role, giving
\begin{align}
	\Rdist{i,j,k}
	&=
	\Pdist{i,j,k}/p + \Qdist{i,j,k}/q
	\\ &\ge
	\Pdist{i,j,k}^{1/p} \Qdist{i,j,k}^{1/q}
	\\ &=
	\left[ \Pdistsub{\psi}{i} \Pdistsub{A}{j|i} \Pdistsub{B}{k|j} \right]^{1/p}
	\left[ \Qdistsub{A}{i|j} \Qdistsub{B}{j|k} \Qdistsub{\phi}{k} \right]^{1/q}.
\end{align}
With this we have
\begin{align}
	\label{eq:cost_markov_pq_firstline}
	\bmax &=
	\max_{ijk} \left\{ \frac{ \abs{
		\psi^*_i A_{ij} B_{jk} \phi_k
	} }{ \Rdist{i,j,k} } \right\}
	\\ &\le
	\max_{ijk} \left\{ \frac{ \abs{
		\psi^*_i A_{ij} B_{jk} \phi_k
	} }{ \Pdist{i,j,k}^{1/p} \Qdist{i,j,k}^{1/q} } \right\}
	\\ &=
	\label{eq:cost_markov_pq_join}
	\max_{ijk} \left\{
		\frac{\abs{ \psi^*_i }}{ \Pdistsub{\psi}{i}^{1/p} } \cdot
		\frac{\abs{   A_{ij} }}{ \Pdistsub{A}{j|i}^{1/p} \Qdistsub{A}{i|j}^{1/q} } \cdot
		\frac{\abs{   B_{jk} }}{ \Pdistsub{B}{k|j}^{1/p} \Qdistsub{B}{j|k}^{1/q} } \cdot
		\frac{\abs{   \phi_k }}{ \Qdistsub{\phi}{k}^{1/q} }
	\right\}
	\\ &\le
	\label{eq:cost_markov_pq_split}
	\max_i    \left\{ \frac{\abs{ \psi^*_i }}{ \Pdistsub{\psi}{i}^{1/p} } \right\}
	\max_{ij} \left\{ \frac{\abs{   A_{ij} }}
		{ \Pdistsub{A}{j|i}^{1/p} \Qdistsub{A}{i|j}^{1/q} } \right\}
	\max_{jk} \left\{ \frac{\abs{   B_{jk} }}
		{ \Pdistsub{B}{k|j}^{1/p} \Qdistsub{B}{j|k}^{1/q} } \right\}
	\max_{k}  \left\{ \frac{\abs{   \phi_k }}{ \Qdistsub{\phi}{k}^{1/q} } \right\}
	\\ &= \bval_\psi \bval_A \bval_B \bval_\phi,
	\label{eq:cost_markov_pq_lastline}
\end{align}
where $\bval_\psi$, $\bval_A$, $\bval_B$, and $\bval_\phi$ label the four factors
of~\eqref{eq:cost_markov_pq_split}.
By \cref{thm:chernoff3}, the number of samples needed in order to estimate
$\braopket{\psi}{AB}{\phi}$ is
$\bigomic(\log(\delta^{-1}) \epsilon^{-2} \bval_\psi^2 \bval_A^2 \bval_B^2 \bval_\phi^2)$.
The quantities $\bval_\psi$, $\bval_A$, $\bval_B$, and $\bval_\phi$ are therefore identified as
being the simulation cost due to each of the components of $\braopket{\psi}{AB}{\phi}$.
We show in appendix~\ref{sec:uAv} (\cref{thm:best_eps_qnorm})
that for any choice of probability distribution
$\bval_A \ge \qnorm{\aA}$ and that there are optimal probability distributions achieving
$\bval_A = \qnorm{\aA}$ (and similarly for $B$, $\psi$, and $\phi$).
Using these gives
\begin{equation}
	\label{eq:cost_psi_A_B_phi}
	\bmax \le \pnorm{\psi} \qnorm{\matabs{A}} \qnorm{\matabs{B}} \qnorm{\phi}.
\end{equation}
Whether these optimal probability distributions can be efficiently sampled from is a matter
that needs to be considered on a case by case basis, however we show in
section~\ref{sec:circuits} that this is indeed the case for a wide range of matrices, both
unitary and Hermitian.
Additionally, in terms of query complexity rather than time complexity these efficient sampling
requirements can for the most part be ignored, as we will discuss further in
\cref{subsec:query}.

\subsection{Dyads and density operators}

It is possible to further generalize to expressions of the form $\Tr\{AB \sigma\}$.
The special case
$\braopket{\psi}{AB}{\phi}$ is obtained by setting $\sigma = \ket{\phi}\bra{\psi}$.
The above derivation is easily adapted by
writing $\sigma_{ki}$, $\Pdistsub{\sigma}{i}$, and $\Qdistsub{\sigma}{k}$ instead of
$\phi_k \psi^*_i$, $\Pdistsub{\psi}{i}$ and $\Qdistsub{\phi}{k}$.
With these
substitutions,~\eqref{eq:cost_markov_pq_firstline}-\eqref{eq:cost_markov_pq_lastline}
become
\begin{align}
	\label{eq:cost_sigma}
	\bmax
	&=
	\max_{ijk} \left\{ \frac{ \abs{
		A_{ij} B_{jk} \sigma_{ki}
	} }{ \Rdist{i,j,k} } \right\}
	\\ &\le
	\label{eq:cost_markov_sigma_join}
	\max_{ijk} \left\{
		\frac{\abs{    A_{ij} }}{ \Pdistsub{ A}{j|i}^{1/p} \Qdistsub{ A}{i|j}^{1/q} } \cdot
		\frac{\abs{    B_{jk} }}{ \Pdistsub{ B}{k|j}^{1/p} \Qdistsub{ B}{j|k}^{1/q} } \cdot
		\frac{\abs{ \sigma_{ki} }}{ \Pdistsub{\sigma}{i}^{1/p} \Qdistsub{\sigma}{k}^{1/q} }
	\right\}
	\\ &\le
	\label{eq:cost_markov_sigma_split}
	\max_{ij} \left\{ \frac{\abs{   A_{ij} }}
		{ \Pdistsub{A}{j|i}^{1/p} \Qdistsub{A}{i|j}^{1/q} } \right\}
	\max_{jk} \left\{ \frac{\abs{   B_{jk} }}
		{ \Pdistsub{B}{k|j}^{1/p} \Qdistsub{B}{j|k}^{1/q} } \right\}
	\max_{ki} \left\{ \frac{\abs{ \sigma_{ki} }}{
			\Pdistsub{\sigma}{i}^{1/p} \Qdistsub{\sigma}{k}^{1/q} } \right\}
	\\ &=
	\label{eq:cost_markov_sigma_bvals}
	\bval_A \bval_B \bval_\sigma.
\end{align}
The $\bval_\sigma$ factor differs from the other two in that the probability distributions are
not conditional.
This stems from the fact that $\sigma$ represents the starting point of the Markov chains.
If $\sigma = \ket{\phi}\bra{\psi}$ then taking the probability distributions
$\Pdistsub{\sigma}{i} = \abs{\psi_i}^p / \pnorm{\psi}$ and
$\Qdistsub{\sigma}{k} = \abs{\phi_k}^q / \qnorm{\phi}$ gives
$\bval_\sigma = \pnorm{\psi} \qnorm{\phi}$
as in~\eqref{eq:cost_psi_A_B_phi}.
If $p=q=2$ and if $\sigma$ is a density operator (positive semidefinite and trace 1) then
taking the probability distributions
$\Pdistsub{\sigma}{i} = \Qdistsub{\sigma}{i} = \sigma_{ii}$ gives $\bval_\sigma = 1$ due to the
inequality $\abs{\sigma_{ki}} \le \sqrt{\sigma_{kk} \sigma_{ii}}$, which is satisfied by
positive semidefinite matrices.

\subsection{Interference producing capacity}

In \cref{subsec:intf} we interpreted the lowest possible $\bmax$ value, obtained by
using the holistic probability distribution~\eqref{eq:holistic_dist}, as being the interference
of a quantum circuit.
Although this probability distribution achieves the lowest $\bmax$, there is no clear way to
draw samples efficiently and for this reason the Markov chain technique of this section was
developed.
The result was a strategy that depends only on properties of the individual operators rather
than on the expression as a whole.
The $\bmax$ value for this strategy is upper bounded by~\eqref{eq:cost_markov_sigma_bvals}.

Consider now the minimum possible value of one of the factors
in~\eqref{eq:cost_markov_sigma_bvals}, for instance $\bval_A$.
In appendix~\ref{sec:uAv} (\cref{thm:best_eps_qnorm}) we will show that the best possible
choice of $\Pdistsub{A}{j|i}$ and $\Qdistsub{A}{i|j}$ yields $\bval_A = \qnorm{\aA}$.
In the case of quantum circuits the relevant norm is $p=q=2$, so this becomes\footnote{
	We focus here on the case $p=2$ of relevance to quantum circuits,
	although the entire subsection could easily be generalized to $p \ne 2$.
}
\begin{equation}
	\bval_A = \twonorm{\aA}.
	\label{eq:bA_twonorm}
\end{equation}
This can be interpreted in terms of interference:
it is the largest possible contribution $A$ can make to the interference
$\intfc$ of \cref{def:intf_circuit}.
Specifically, since $\twonorm{\cdot}$ gives the maximum singular value of its argument, we have
\begin{equation}
	\intfc( A^{(1)}, \dotsc, A^{(S)}, \ket{\phi}\bra{\psi} ) \le
	\twonorm{\aA^{(1)}}
	\dotsm
	\twonorm{\aA^{(S)}}
	\twonorm{\phi}
	\twonorm{\psi}.
\end{equation}
Furthermore, for any operator $A$ we have
\begin{equation}
	\max_{\twonorm{\psi}=\twonorm{\phi}=1} \intfc(A, \ket{\phi}\bra{\psi})
		= \twonorm{\matabs{A}}.
\end{equation}
For this reason, we interpret $\twonorm{\aA}$ as being the interference producing
capacity of $A$.\footnote{
	Our measure of interference is different from, and seemingly unrelated to, the one defined
	in~\cite{PhysRevA.73.022314}, which in the case of unitary matrices reduces to
	$N-\sum_{ij}\abs{U_{ij}}^4$.
}

\begin{definition}
	\label{def:ifmax}
	The \textit{interference producing capacity} of a matrix $A$ is
	\begin{align}
		\ifmax(A) &= \twonorm{\aA}.
	\end{align}
\end{definition}

This definition, like \cref{def:intf_circuit}, is basis dependent.  Here the basis
dependence arises from the entrywise absolute value.  Unless otherwise specified, we will
work in the computational basis.  In the next sections we will show the product of
the $\ifmax$ values for the operations and final measurement of a circuit to be a necessary
resource for quantum speedup: if this quantity is low then a circuit can be classically
simulated.  The same claim applies also for other bases, and even for more exotic
representations (as we will show in \cref{sec:wigner}).
The situation is not so much different from, for instance, Gottesman-Knill theorem which
claims that stabilizer circuits may be efficiently simulated~\cite{arxiv:quant-ph/9807006}.
Although a circuit may at first not appear to be a stabilizer circuit it may be so after a
change of basis (i.e.\ after conjugating the initial state, all unitary operations, and
all measurements by some unitary).

The $\ifmax$ value for various operators is listed in \cref{tab:ifmax}.
As was shown informally in this section, and more formally in the next section, it is
possible to efficiently simulate quantum circuits when the product of the $\ifmax$ values
of all operators is not large.  So, one may interpret a small $\ifmax$ value to mean
that a unitary operator contributes only minimally to quantum speedup.
On the high end of the table are the Fourier and Hadamard transforms, having the maximum
possible value of $\ifmax$; these are difficult for us to simulate (at least in the
computational basis).  On the low end are the Pauli and the permutation
matrices, having $\ifmax=1$; these contribute nothing to quantum speedup (relative to our
simulation scheme).
Among unitaries, the only operators with $\ifmax=1$ are permutations with phases,
$U = \sum_j e^{i \theta_j} \ket{\sigma(j)} \bra{j}$.

\begin{table}[h]
	\centering
	\begin{tabular}{|p{7cm}|l|l|}
		\hline
		\multicolumn{1}{|c|}{Matrix} &
		\multicolumn{1}{|c|}{$\ifmax$} \\
		\hline
		Fourier or Hadamard transform on $n$ qubits & $2^{n/2}$ \\
		\hline
		Arbitrary gate on $n$ qudits & no more than $d^{n/2}$ \\
		\hline
		Haar wavelet transform on $n$ qubits & $\sqrt{1+n}$ \\
		\hline
		$k$-sparse unitary & no more than $\sqrt{k}$ \\
		\hline
		Grover reflection & $\ifmax \to 3$ as $n \to \infty$ \\
		\hline
		Permutation in computational basis & 1 \\
		\hline
		Pauli matrices & 1 \\
		\hline
		Rank one projector & 1 \\
		\hline
	\end{tabular}
	\caption{
		The $\ifmax$ value for various matrices.
		Operators with larger $\ifmax$ value are harder to simulate using our technique.
		Proofs for the nontrivial cases are presented in appendix~\ref{sec:circuits_proofs}.
	}
	\label{tab:ifmax}
\end{table}

\section{EPS and EHT operators}
\label{sec:eps_eht}

\subsection{Definitions}

We will now present two definitions codifying the requirements operators
must meet in order that products of the form
$\Tr\{A^{(1)} \dotsm A^{(S)} \sigma\}$
can be estimated using the techniques of the previous section.
In the previous section, using a pair of Markov chains yielded a simulation strategy in which
each component of $\Tr(AB \sigma)$ can be treated independently, with
$A$, $B$, and $\sigma$ contributing costs $\bval_A$, $\bval_B$, and $\bval_\sigma$ to the
total number of samples needed as per~\eqref{eq:cost_markov_sigma_bvals}.
Each sample requires drawing a random path according to the distribution $\Rdist{i,j,k}$
and then computing $V(i,j,k)/\Rdist{i,j,k}$.
Drawing the random path requires considering only one operator at a time since
$\Rdist{i,j,k}$ is defined in terms of Markov chains.
Similarly, computing $V(i,j,k)/\Rdist{i,j,k}$ can be done considering one operator at a time
since
\begin{align}
	\frac{ V(i,j,k) }{ \Rdist{i,j,k} }
	&= \frac{ A_{ij} B_{jk} \sigma_{ki} }{
		\Pdist{i,j,k}/p + \Qdist{i,j,k}/q
	}
	\\ &= \left\{
		\frac{1}{p}
		\frac{ \Pdist{i,j,k} }{ A_{ij} B_{jk} \sigma_{ki} }
		+
		\frac{1}{q}
		\frac{ \Qdist{i,j,k} }{ A_{ij} B_{jk} \sigma_{ki} }
	\right\}^{-1}
	\\ \label{eq:V_over_R_split} &= \left\{
		\frac{1}{p}
		\frac{ \Pdistsub{A}{j|i}  }{ A_{ij} }
		\frac{ \Pdistsub{B}{k|j}  }{ B_{jk} }
		\frac{ \Pdistsub{\sigma}{i} }{ \sigma_{ki} }
		+
		\frac{1}{q}
		\frac{ \Qdistsub{A}{i|j}  }{ A_{ij} }
		\frac{ \Qdistsub{B}{j|k}  }{ B_{jk} }
		\frac{ \Qdistsub{\sigma}{k} }{ \sigma_{ki} }
	\right\}^{-1}.
\end{align}
Focusing on a single component, say $A$, conditions for efficient simulation can be identified
(note that $\sigma$ requires slightly different conditions, which we deal with later).
First, the quantity $\bval_A$ of~\eqref{eq:cost_markov_sigma_bvals} should be small in order that
the number of samples required be small.
Second, it must be possible to efficiently sample from the probability distributions
$\Pdistsub{A}{j|i}$ and $\Qdistsub{A}{i|j}$ and to compute the contributions
due to $A$ in~\eqref{eq:V_over_R_split}, namely
$\Pdistsub{A}{j|i} / A_{ij}$ and $\Qdistsub{A}{i|j} / A_{ij}$.
We express these conditions as a definition.
However, it will be useful to generalize by allowing an extra index $k$ in the definition
below (not related to the $k$ that appears above).
If $k$ takes only a single value (say, $k=0$) the definition below exactly encompasses
the conditions outlined above.
The extra freedom granted by $k$ will allow, as we will show shortly,
treatment of sums, products, and exponentials of matrices
(\cref{thm:eps_sumprodexp}).
In the case $p=1$, $q=\infty$ it was the matrices resembling stochastic matrices that could
be efficiently simulated.  For this reason, for general $p,q$ we give the name
\textit{efficient pseudo-stochastic} (EPS) to matrices that we can efficiently simulate.

\begin{definition}[EPS]
	\label{def:eps}
	Let $1 \le p \le \infty$, $1/p+1/q=1$, and $\bval < \infty$.
	An $M \times N$ matrix $A$ is $\epsp{\bval}{\fval}$ if there is a finite or countable
	set $K$, values $\alpha_{mnk} \in \mathbb{C}$, and conditional probability distributions
	$\Pdist{n,k|m}$ and $\Qdist{m,k|n}$
	with $m \in \{1, \dotsc, M\}$, $n \in \{1, \dotsc, N\}$, and $k \in K$,
	satisfying the following conditions:
	\begin{enumerate}[(a)]
		\item \label{cond:eps_alphasum}
			$\sum_{k \in K} \alpha_{mnk} = A_{mn}$.\footnote{
				We show in appendix~\ref{sec:eps_eht_proofs} (\cref{thm:abs_conv}) that this series
				converges absolutely, so there is no ambiguity regarding the way that an
				infinite $K$ is enumerated.
			}
		\item \label{cond:eps_cost}
			\begin{equation}
				\label{eq:eps_cost}
				\max_{mnk} \left\{
					\frac{\abs{\alpha_{mnk}}}{
						\Pdist{n,k|m}^{1/p}
						\Qdist{m,k|n}^{1/q}
					}
				\right\} \le \bval,
			\end{equation}
			with the convention that $0/0=0$.
		\item \label{cond:eps_samplr}
			Given any $m$, it is possible in average time $\bigomic(\fval)$ on a classical
			computer to sample $n,k$ from the probability distribution $\Pdist{n,k|m}$ and then
			compute $\alpha_{mnk} / \Pdist{n,k|m}$ and $\alpha_{mnk} / \Qdist{m,k|n}$.
		\item \label{cond:eps_samprl}
			Given any $n$, it is possible in average time $\bigomic(\fval)$ on a classical
			computer to sample $m,k$ from the probability distribution $\Qdist{m,k|n}$ and then
			compute $\alpha_{mnk} / \Pdist{n,k|m}$ and $\alpha_{mnk} / \Qdist{m,k|n}$.
	\end{enumerate}
\end{definition}

This definition is related to interference producing capacity in the following way.
It is always possible to satisfy conditions~\eqref{cond:eps_alphasum}
and~\eqref{cond:eps_cost} with $\bval = \qnorm{\aA}$, and it is impossible to do better.
This is proved in appendix~\ref{sec:uAv}.
So, for the case $p=q=2$ the optimal value of $b$ is equal to the interference producing
capacity of $A$.  Since $b$ (multiplied for all operators in a circuit) determines how
many samples will be required for our simulation technique, this connects interference
producing capacity to difficulty of simulation.

Although conditions~\eqref{cond:eps_alphasum} and~\eqref{cond:eps_cost} can always be
satisfied with $\bval = \qnorm{\aA}$ for some $\alpha_{mnk}$, $\Pdist{n,k|m}$, and
$\Qdist{m,k|n}$, it could be the case that these do not satisfy~\eqref{cond:eps_samplr}
and~\eqref{cond:eps_samprl}.  In other words, it may be time consuming to sample from
these probability distributions.
An example would be a permutation matrix $A\ket{x} = \ket{g(x)}$.
Such a matrix has $\qnorm{\aA}=1$, so it has no interference producing capacity.
Nevertheless, it would be difficult to simulate if
the function $g$ were difficult to calculate.
In some sense~\eqref{cond:eps_samplr} and~\eqref{cond:eps_samprl} constitute a requirement
that the matrix $A$ be well understood from a computational perspective.
In practice,~\eqref{cond:eps_samplr} and~\eqref{cond:eps_samprl} have not presented an
obstacle for any of the operators that we have considered.
If one is concerned with query complexity rather than time complexity
then~\eqref{cond:eps_samplr} and~\eqref{cond:eps_samprl} can mostly be ignored.  This
will be explored in \cref{subsec:query}.

There is a subtlety in conditions~\eqref{cond:eps_samplr} and~\eqref{cond:eps_samprl}
that deserves discussion.  It is required that the operations be carried out in
\textit{average} time $\bigomic(\fval)$.  It is allowed that
$\alpha_{mnk} / \Pdist{n,k|m}$ and $\alpha_{mnk} / \Qdist{m,k|n}$
be difficult to compute for some $m,n,k$ triples as long as
those occur rarely when sampling from $\Pdist{n,k|m}$ or $\Qdist{m,k|n}$.
In our implementation of exponentials of operators
(\cref{thm:eps_sumprodexp}\eqref{part:eps_exp}) the time required is proportional to $k$, and
so is unbounded since $k \in \{0,1,\dotsc\}$, however $\Pdist{n,k|m}$ and
$\Qdist{m,k|n}$ decay exponentially in $k$ so the average time is small.

We now present a definition that embodies the conditions $\sigma$ must
satisfy in order to yield an efficient simulation.
Looking to~\eqref{eq:cost_markov_sigma_split} and~\eqref{eq:V_over_R_split}, the difference
between the factors relating to $\sigma$ and those relating to $A$ are that the latter involve
conditional probability distributions.
This stems from the fact that the Markov chains begin at $\sigma$ and so have no index to
condition upon.
With this difference in mind, we provide a definition analogous to \cref{def:eps} but with
non-conditional probability distributions.
Since the Markov chains begin and end at $\sigma$, we name the suitable matrices
\textit{efficient head/tail} (EHT) matrices.

\begin{definition}[EHT]
	\label{def:eht}
	Let $1 \le p \le \infty$, $1/p+1/q=1$, and $\bval < \infty$.
	An $M \times N$ matrix $\sigma$ is $\ehtp{\bval}{\fval}$  if there is a
	finite or countable
	set $K$, values $\alpha_{mnk} \in \mathbb{C}$, and probability distributions
	$\Pdist{n,k}$ and $\Qdist{m,k}$
	with $m \in \{1, \dotsc, M\}$, $n \in \{1, \dotsc, N\}$, and $k \in K$,
	satisfying the following conditions:
	\begin{enumerate}[(a)]
		\item \label{cond:eht_alphasum}
			$\sum_{k \in K} \alpha_{mnk} = \sigma_{mn}$.
		\item \label{cond:eht_cost}
			\begin{equation}
				\label{eq:eht_cost}
				\max_{mnk} \left\{
					\frac{\abs{\alpha_{mnk}}}{
						\Pdist{n,k}^{1/p}
						\Qdist{m,k}^{1/q}
					}
				\right\} \le \bval,
			\end{equation}
			with the convention that $0/0=0$.
		\item \label{cond:eht_samplr}
			It is possible in average time $\bigomic(\fval)$ on a classical computer to
			sample $n,k$ from the probability distribution $\Pdist{n,k}$ and then,
			given any $m \in \{1, \dotsc, M\}$ to
			compute $\alpha_{mnk} / \Pdist{n,k}$ and $\alpha_{mnk} / \Qdist{m,k}$.
		\item \label{cond:eht_samprl}
			It is possible in average time $\bigomic(\fval)$ on a classical computer to
			sample $m,k$ from the probability distribution $\Qdist{m,k}$ and then,
			given any $n \in \{1, \dotsc, N\}$ to
			compute $\alpha_{mnk} / \Pdist{n,k}$ and $\alpha_{mnk} / \Qdist{m,k}$.
	\end{enumerate}
\end{definition}

This definition does not relate to interference.  For the case of quantum circuits we can
assume $\sigma$ to be a density operator.  In \cref{sec:sufficient_eps_eht} we show that
for density operators it is always possible to achieve $b=1$ in the above definition as
long as one can simulate measurements in the computational basis and compute
individual matrix entries in average time $\bigomic(\fval)$.

The definition of EHT is more strict than that of EPS: any EHT operator can be seen to also be
EPS by using the probability distributions $\Pdist{n,k|m} = \Pdist{n,k}$ and
$\Qdist{m,k|n} = \Qdist{m,k}$.  Therefore, since it is not possible to have
$\bval < \qnorm{\aA}$ for EPS operators, it is also not possible to have
$\bval < \qnorm{\matabs{\sigma}}$ for EHT operators.
As mentioned above, in the case of EPS it is always possible to
satisfy conditions~\eqref{cond:eps_alphasum} and~\eqref{cond:eps_cost} with
$\bval = \qnorm{\aA}$, however since EHT is more strict there are operators $\sigma$ for which
it is not possible to have $\bval = \qnorm{\matabs{\sigma}}$.
\Cref{thm:best_eps_qnorm}\eqref{part:best_dyads} in appendix~\ref{sec:uAv} gives that
$\bval=\trnorm{\matabs{\sigma}}$ is possible when $p=q=2$ where
$\trnorm{\cdot}$ is the trace norm (and a generalization is provided for
$p \ne 2$).

In section~\ref{sec:circuits} we will consider the case $p=q=2$, which is the norm relevant to
quantum circuits, and give several examples of states that are $\ehttwo{\bval}{\fval}$ and
operators that are $\epstwo{\bval}{\fval}$ where $\bval$ is small and $\fval$ is polynomial in
the number of qubits (or polylog in the dimension of the system).
Expectation values of circuits built from such states and operators can be efficiently
simulated.
Specifically, we have the following theorem, the central theorem of this paper, whose proof
will be deferred until after \cref{thm:eht_eps_trace}.

\begin{theorem}[Efficient simulation]
	\label{thm:chain_sim}
	Let $\sigma$ be $\ehtp{\bval_\sigma}{\fval_\sigma}$
	and for $t \in \{1,\dotsc,S\}$ let $A^{(t)}$ be $\epsp{\bval_t}{\fval_t}$.
	Then, with probability less than $\delta>0$ of exceeding the error bound,
	$\Tr\{A^{(1)} \dotsm A^{(S)} \sigma\}$ can be estimated to within
	additive error $\epsilon > 0$ in average time
	$\bigomic(\log(\delta^{-1}) \epsilon^{-2} \bval^2 \fval)$
	where $\bval = \bval_\sigma \prod_t \bval_t$ and
	$\fval = \fval_\sigma + \sum_t \fval_t$.
\end{theorem}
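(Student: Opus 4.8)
The plan is to assemble the simulation algorithm from the pieces supplied by \cref{thm:chernoff3} (the Chernoff--Hoeffding corollary) and the EPS/EHT definitions, mirroring the informal construction of \cref{sec:markov} but with $S$ operators rather than two. The core object is the expansion of $\Tr\{A^{(1)} \dotsm A^{(S)} \sigma\}$ as a sum over paths $\pi = (i_0, k_1, i_1, k_2, \dotsc, k_S, i_S)$, where each $A^{(t)}$ contributes a factor $\alpha^{(t)}_{i_{t-1} i_t k_t}$ (using the decomposition from \cref{def:eps}\eqref{cond:eps_alphasum}) and $\sigma$ contributes $\alpha^{(\sigma)}_{i_S i_0 k_0}$ (using \cref{def:eht}\eqref{cond:eht_alphasum}). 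First I would define the two Markov chains: $\Pdist{\pi}$ proceeds left-to-right, starting from the unconditional distribution $\Pdistsub{\sigma}{i_0, k_0}$ and transitioning via the conditionals $\Pdistsub{A^{(t)}}{i_t, k_t \mid i_{t-1}}$; and $\Qdist{\pi}$ proceeds right-to-left, starting from $\Qdistsub{\sigma}{i_S, k_0}$ and transitioning via $\Qdistsub{A^{(t)}}{i_{t-1}, k_t \mid i_t}$. The sampling distribution is the convex combination $\Rdist{\pi} = \Pdist{\pi}/p + \Qdist{\pi}/q$.

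**Next I would bound $\bmax$.** Applying the weighted AM--GM inequality $x/p + y/q \ge x^{1/p} y^{1/q}$ gives $\Rdist{\pi} \ge \Pdist{\pi}^{1/p} \Qdist{\pi}^{1/q}$, which factorizes across the $S+1$ components because both $\Pdist{\pi}$ and $\Qdist{\pi}$ are products. Hence
\begin{align}
	\bmax
	= \max_\pi \frac{\abs{V(\pi)}}{\Rdist{\pi}}
	&\le \max_\pi \frac{\abs{\alpha^{(\sigma)}_{i_S i_0 k_0}} \prod_t \abs{\alpha^{(t)}_{i_{t-1} i_t k_t}}}{\Pdist{\pi}^{1/p} \Qdist{\pi}^{1/q}}
	\\ &\le \max_{i_S, i_0, k_0} \frac{\abs{\alpha^{(\sigma)}_{i_S i_0 k_0}}}{\Pdistsub{\sigma}{i_0,k_0}^{1/p} \Qdistsub{\sigma}{i_S,k_0}^{1/q}}
	\prod_t \max_{i_{t-1}, i_t, k_t} \frac{\abs{\alpha^{(t)}_{i_{t-1} i_t k_t}}}{\Pdistsub{A^{(t)}}{i_t,k_t\mid i_{t-1}}^{1/p} \Qdistsub{A^{(t)}}{i_{t-1},k_t\mid i_t}^{1/q}}
	\\ &\le \bval_\sigma \prod_t \bval_t = \bval,
\end{align}
where the final inequality uses \cref{def:eht}\eqref{cond:eht_cost} and \cref{def:eps}\eqref{cond:eps_cost} term by term. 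Then \cref{thm:chernoff3} says $\bigomic(\log(\delta^{-1}) \epsilon^{-2} \bval^2)$ samples suffice to estimate the sum to additive error $\epsilon$ with failure probability $\delta$.

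**Then I would account for the per-sample runtime $\fval$.** Drawing one sample from $\Rdist{\pi}$ means flipping a coin with bias $1/p$ and then running the appropriate chain: the left-to-right chain calls the $\sigma$-sampler once (time $\bigomic(\fval_\sigma)$, by \cref{def:eht}\eqref{cond:eht_samplr}) and each $A^{(t)}$-sampler once in sequence (time $\bigomic(\fval_t)$, by \cref{def:eps}\eqref{cond:eps_samplr}), for a total of $\bigomic(\fval_\sigma + \sum_t \fval_t) = \bigomic(\fval)$; the right-to-left chain is symmetric using \cref{def:eht}\eqref{cond:eht_samprl} and \cref{def:eps}\eqref{cond:eps_samprl}. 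Computing $V(\pi)/\Rdist{\pi}$ requires the identity generalizing~\eqref{eq:V_over_R_split}, namely
\begin{equation}
	\frac{V(\pi)}{\Rdist{\pi}}
	= \left\{ \frac{1}{p} \frac{\Pdistsub{\sigma}{i_0,k_0}}{\alpha^{(\sigma)}_{i_S i_0 k_0}} \prod_t \frac{\Pdistsub{A^{(t)}}{i_t,k_t\mid i_{t-1}}}{\alpha^{(t)}_{i_{t-1}i_t k_t}}
	+ \frac{1}{q} \frac{\Qdistsub{\sigma}{i_S,k_0}}{\alpha^{(\sigma)}_{i_S i_0 k_0}} \prod_t \frac{\Qdistsub{A^{(t)}}{i_{t-1},k_t\mid i_t}}{\alpha^{(t)}_{i_{t-1}i_t k_t}} \right\}^{-1},
\end{equation}
each factor of which is exactly one of the quantities the EPS/EHT definitions promise can be computed in the stated average time (the ``given any $m$'' / ``given any $n$'' clauses in conditions~\eqref{cond:eps_samplr}--\eqref{cond:eps_samprl} and~\eqref{cond:eht_samplr}--\eqref{cond:eht_samprl} are precisely what is needed to evaluate the cross-chain factors). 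Multiplying $\bigomic(\log(\delta^{-1}) \epsilon^{-2} \bval^2)$ samples by the $\bigomic(\fval)$ cost each gives the claimed $\bigomic(\log(\delta^{-1}) \epsilon^{-2} \bval^2 \fval)$ runtime.

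**The main obstacle** is bookkeeping rather than conceptual: one must be careful that the single auxiliary index $k$ in \cref{def:eps} becomes a tuple $(k_1, \dotsc, k_S, k_0)$ over the circuit, verify absolute convergence of the resulting path sum so that reordering is legitimate (invoking \cref{thm:abs_conv}, cited in \cref{def:eps}), and confirm that the ``average time'' guarantees compose additively even though, e.g., the exponential-of-operator construction has unbounded worst-case cost per factor — this follows since expectations of sums are sums of expectations. I would also need to confirm that the convex-combination trick still lets each chain be sampled without ever needing the other chain's distribution, which holds because conditions~\eqref{cond:eps_samplr}/\eqref{cond:eps_samprl} separately guarantee sampling from $\Pdist{}$ and from $\Qdist{}$, and the cross-terms in $V(\pi)/\Rdist{\pi}$ only require \emph{evaluating}, not sampling from, the opposite distribution.
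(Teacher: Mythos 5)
Your proof is correct, but it takes a different route from the paper's own, which deserves comment. The paper proves \cref{thm:chain_sim} in two short modular steps: first it iterates the product rule \cref{thm:eps_sumprodexp}\eqref{part:eps_prod} to collapse $A^{(1)}\dotsm A^{(S)}$ into a single $\epsp{\prod_t\bval_t}{\sum_t\fval_t}$ matrix, then it invokes the two-operator trace lemma \cref{thm:eht_eps_trace} on $\Tr(A\sigma)$. You instead construct the full $(S+1)$-stage Markov chain explicitly over paths $\pi=(i_0,k_1,i_1,\dotsc,k_S,i_S,k_0)$, bound $\bmax$ directly by factorizing the AM--GM inequality across all $S+1$ components, and account for the per-sample runtime by composing the sampling and evaluation conditions of \cref{def:eps,def:eht} in sequence. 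This is exactly what the paper calls the ``unrolled'' version of its own argument (it remarks that ``unrolling the proof of the product theorem, as well as the proof of the theorem that follows, gives an argument very similar to that presented in section~\ref{sec:markov}''), and your execution of it is sound: the $k$-index promotion to a tuple, the factorization of $V(\pi)/\Rdist{\pi}$ generalizing~\eqref{eq:V_over_R_split}, the observation that cross-chain factors require only evaluation (not sampling) of the opposite distribution, and the absolute-convergence caveat via \cref{thm:abs_conv} are all handled correctly. The trade-off is clear: the paper's modular route keeps each proof short and composable (the authors note it mirrors their C++ implementation), while your direct route makes the algorithm's structure transparent in one place but forces you to carry $S$-fold indexing throughout, which is precisely the notational tedium the paper chose to avoid.
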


\subsection{Operations that preserve EPS/EHT properties}

We now discuss mathematical operations that preserve the EPS and EHT properties.  These
include scaling, transpose, adjoint, multiplication, addition, and exponentiation
(\cref{thm:basic_math,thm:eps_sumprodexp}).
The first three follow immediately from the definitions, so the following theorem is presented
without proof.

\begin{theorem}
	\label{thm:basic_math}
	Let $A$ be $\epsp{\bval}{\fval}$ and $\sigma$ be $\ehtp{\bval}{\fval}$.
	Let $s \in \mathbb{C}$ be a scalar.  Then
	\begin{enumerate}[(a)]
		\item $\sigma$ is $\epsp{\bval}{\fval}$.
		\item $sA$ is $\epsp{\abs{s} \bval}{\fval}$.
		\item $s\sigma$ is $\ehtp{\abs{s} \bval}{\fval}$.
		\item $A^\top$ and $A^\dagger$ are $\epsp{\bval}{\fval}$.
		\item $\sigma^\top$ and $\sigma^\dagger$ are $\ehtp{\bval}{\fval}$.
	\end{enumerate}
\end{theorem}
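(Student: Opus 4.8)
The plan is to prove all five parts directly from \cref{def:eps} and \cref{def:eht}: in each case I would take the witnessing data for the hypothesis --- the set $K$, the coefficients $\alpha_{mnk}$, and the distributions --- and write down modified data witnessing the conclusion, then check conditions \eqref{cond:eps_alphasum}--\eqref{cond:eps_samprl} (resp.\ \eqref{cond:eht_alphasum}--\eqref{cond:eht_samprl}) one at a time. Every verification is a one-line index identity together with an elementary bound on the cost ratio, so I do not anticipate a real obstacle anywhere; the only point that needs a moment's thought is a bookkeeping swap in parts (d)--(e), noted below.

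For part (a), suppose $\sigma$ is $\ehtp{\bval}{\fval}$ via $K$, $\alpha_{mnk}$, and the (unconditional) distributions $\Pdist{n,k}$, $\Qdist{m,k}$. I would exhibit $\sigma$ as $\epsp{\bval}{\fval}$ using the same $K$ and $\alpha_{mnk}$ and the constant conditional distributions $\Pdist{n,k|m} := \Pdist{n,k}$, $\Qdist{m,k|n} := \Qdist{m,k}$: then \eqref{cond:eps_alphasum} is verbatim \eqref{cond:eht_alphasum}, the ratio in \eqref{eq:eps_cost} is verbatim that of \eqref{eq:eht_cost} (hence $\le\bval$), and \eqref{cond:eps_samplr}--\eqref{cond:eps_samprl} are verbatim \eqref{cond:eht_samplr}--\eqref{cond:eht_samprl}. (This is exactly the remark following \cref{def:eht}.) For parts (b) and (c), I would keep $K$ and both distributions and replace $\alpha_{mnk}$ by $s\,\alpha_{mnk}$: condition \eqref{cond:eps_alphasum} still holds since $\sum_k s\,\alpha_{mnk} = s A_{mn}$, the ratio in \eqref{eq:eps_cost} acquires a factor $\abs{s}$ and is thus $\le\abs{s}\bval$, and $s\alpha_{mnk}/\Pdist{n,k|m}$ and $s\alpha_{mnk}/\Qdist{m,k|n}$ follow from the originals by one extra scalar multiplication, leaving the $\bigomic(\fval)$ average-time bound intact; the EHT version (c) is identical with \cref{def:eht} in place of \cref{def:eps}.

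For parts (d) and (e) the observation to exploit is that transposition interchanges the row and column indices, so the ``forward'' distribution $P$ and ``backward'' distribution $Q$ of \cref{def:eps} trade places. Given $A$, an $M\times N$ matrix witnessed by $K$, $\alpha_{mnk}$, $\Pdist{n,k|m}$, $\Qdist{m,k|n}$, I would present the $N\times M$ matrix $A^\top$ via $K$, $\alpha'_{nmk} := \alpha_{mnk}$, and the conditional distributions $\Pdist{m,k|n} := \Qdist{m,k|n}$, $\Qdist{n,k|m} := \Pdist{n,k|m}$; then $\sum_k\alpha'_{nmk} = A_{mn} = (A^\top)_{nm}$, and the cost ratio for $A^\top$ is $\abs{\alpha_{mnk}}/\bigl(\Qdist{m,k|n}^{1/p}\Pdist{n,k|m}^{1/q}\bigr)$, which is the hypothesis's ratio with the exponents $1/p$ and $1/q$ exchanged; so this witnesses $A^\top$ as $\textnormal{EPS}_q(\bval,\fval)$, which for the case $p=q=2$ of quantum circuits is the asserted $\epsp{\bval}{\fval}$. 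Conditions \eqref{cond:eps_samplr} and \eqref{cond:eps_samprl} for $A^\top$ are precisely \eqref{cond:eps_samprl} and \eqref{cond:eps_samplr} for $A$ (sampling a column given a row of $A^\top$ is sampling a row given a column of $A$, and vice versa), so the average-time bounds carry over. For $A^\dagger$ I would use the same construction with $\alpha'_{nmk} := \alpha_{mnk}^*$ instead: the modulus in \eqref{eq:eps_cost} is untouched, and the evaluated ratios are just the complex conjugates of those for $A$, hence no harder to compute. Part (e), for $\sigma^\top$ and $\sigma^\dagger$, is the same argument applied to \cref{def:eht}, swapping the unconditional distributions $\Pdist{n,k}\leftrightarrow\Qdist{m,k}$.

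The single step that is not immediate is the $P\!\leftrightarrow\!Q$ role swap underlying parts (d)--(e); once one notices it, the cost ratio of the transposed object is manifestly the original cost ratio with $p$ and $q$ interchanged, which is harmless in the case $p=q=2$ used throughout the paper. With that understood, each part of \cref{thm:basic_math} reduces to a substitution and a one-line check, which is why no detailed proof is needed.
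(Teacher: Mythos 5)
The paper presents this theorem without proof, stating only that scaling, transpose, and adjoint ``follow immediately from the definitions,'' so there is no argument in the paper to compare against. Your constructions for (a)--(c) are exactly the intended ones: (a) restates the remark immediately after \cref{def:eht}, and (b)--(c) are one-line scalings of the $\alpha_{mnk}$.

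The interesting part is (d)--(e), and here your bookkeeping is not merely a detail but surfaces a genuine imprecision in the theorem as written. As you observe, transposing $A$ interchanges the roles of rows and columns, so the witness $(K,\alpha_{mnk},\Pdist{n,k|m},\Qdist{m,k|n})$ for $A$ becomes a witness for $A^\top$ in which the distribution formerly carrying weight $1/q$ in \eqref{eq:eps_cost} now carries weight $1/p$ and vice versa. This certifies $A^\top$ as $\textnormal{EPS}_q(\bval,\fval)$, not $\epsp{\bval}{\fval}$, and the two coincide only at $p=q=2$. You flag this but call it ``harmless''; in fact the theorem as stated is false for $p\neq 2$. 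Take $A=(1,\dotsc,1)$ a $1\times N$ row vector and $p>2$. Sampling the single column index uniformly shows $A$ is $\epsp{N^{1/p}}{\log N}$, yet by \cref{thm:best_eps_qnorm}\eqref{part:best_bval_ge_qnorm} any $\epsp{\bval}{\fval}$ witness for $A^\top$ requires $\bval\ge\qnorm{\matabs{A}^\top}=N^{1/q}>N^{1/p}$. So (d) and (e) should conclude $\textnormal{EPS}_q$ and $\textnormal{EHT}_q$, respectively; your proof delivers exactly that and is the correct version of the statement. Since every invocation of (d)--(e) in the paper is at $p=q=2$, nothing downstream is affected, but the theorem deserves the correction you implicitly supply.
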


The presence of the $k$ index in \cref{def:eps} allows treatment of sums and products of
operators.
Consider for instance the product $AB$.
The two factors of~\eqref{eq:cost_markov_sigma_join} relating to $A$ and $B$ can be combined to
match the conditions of \cref{def:eps} as follows.
Begin by relabeling the indices of~\eqref{eq:cost_markov_sigma_join} from $i,j,k$ to $m,k,n$
and proceed as follows,
\begin{align}
	\bmax &\le
	\max_{mnk} \left\{
		\frac{\abs{ \sigma_{nm} }}{ \Pdistsub{\sigma}{m}^{1/p} \Qdistsub{\sigma}{n}^{1/q} } \cdot
		\frac{\abs{    A_{mk} }}{ \Pdistsub{ A}{k|m}^{1/p} \Qdistsub{ A}{m|k}^{1/q} } \cdot
		\frac{\abs{    B_{kn} }}{ \Pdistsub{ B}{n|k}^{1/p} \Qdistsub{ B}{k|n}^{1/q} }
	\right\}
	\\ &\le
	\label{eq:cost_sigma_AB}
	\max_{mn} \left\{
		\frac{\abs{ \sigma_{nm} }}{ \Pdistsub{\sigma}{m}^{1/p} \Qdistsub{\sigma}{n}^{1/q} }
	\right\} \cdot
	\max_{mnk} \left\{
		\frac{\abs{    A_{mk} }}{ \Pdistsub{ A}{k|m}^{1/p} \Qdistsub{ A}{m|k}^{1/q} } \cdot
		\frac{\abs{    B_{kn} }}{ \Pdistsub{ B}{n|k}^{1/p} \Qdistsub{ B}{k|n}^{1/q} }
	\right\}
	\\ &\le
	\max_{mn} \left\{
		\frac{\abs{ \sigma_{nm} }}{ \Pdistsub{\sigma}{m}^{1/p} \Qdistsub{\sigma}{n}^{1/q} }
	\right\} \cdot
	\max_{mnk} \left\{
		\frac{\abs{ A_{mk} B_{kn} }}
		{ [\Pdistsub{A}{k|m} \Pdistsub{B}{n|k}]^{1/p}
			[\Qdistsub{A}{m|k} \Qdistsub{B}{k|n}]^{1/q} }
	\right\}
	\\ &= \bval_\sigma \bval_{AB}.
\end{align}
Defining
$\Pdistsub{AB}{n,k|m} = \Pdistsub{A}{k|m} \Pdistsub{B}{n|k}$,
$\Qdistsub{AB}{m,k|n} = \Qdistsub{B}{k|n} \Qdistsub{A}{m|k}$,
and $\alpha_{mnk} = A_{mk} B_{kn}$, the $\bval_{AB}$ factor reduces to
\begin{align}
	\bval_{AB} &=
	\max_{mnk} \left\{
		\frac{\abs{ \alpha_{mnk} }}
		{ \Pdistsub{AB}{n,k|m}^{1/p} \Qdistsub{AB}{m,k|n}^{1/q} }
	\right\}.
\end{align}
This resembles the factors involving $A$ or $B$ that appear
in~\eqref{eq:cost_markov_sigma_split} but with the addition of an extra index $k$ appearing in
both the numerator and in the probability distributions.
Allowing such an extra index enables treatment of $AB$ in the same manner as
the individual factors $A$ and $B$.  This is formalized by
\cref{thm:eps_sumprodexp}\eqref{part:eps_prod} below, which
states that the product of EPS matrices is EPS\@.
In the general case this procedure is slightly complicated by the fact that $A$
and $B$ may in turn have their own extra indices $k'$ and $k''$, which must be inherited by
the product $AB$.

Sums are handled in a similar way.  An expression such as $\Tr((A+B)\sigma)$ is estimated by
using $A$ for a fraction of the samples and $B$ for the remainder.
This works since $\Tr((A+B)\sigma)$ is twice the average of $\Tr(A \sigma)$ and
$\Tr(B \sigma)$.
The $k$ index is used to randomly choose between $A$ or $B$ for each sample.
Exponentials are treated by applying these sum and product rules to
$e^A = \sum_{j=0}^\infty A^j / j! $.

\begin{theorem}[Operations on EPS]
	\label{thm:eps_sumprodexp}
	Let $A$ be a matrix that is $\epsp{\bval_A}{\fval_A}$ and let
	$B$ be a matrix that is $\epsp{\bval_B}{\fval_B}$.
	Then, assuming in each case that $A$ and $B$ have a compatible number of rows and columns,
	the following hold.
	\begin{enumerate}[(a)]
		\item \label{part:eps_sum_simple}
			$A + B$ is $\epsp{\bval_A+\bval_B}{\max\{\fval_A, \fval_B\}}$.
		\item \label{part:eps_prod}
			$AB$ is $\epsp{\bval_A \bval_B}{\fval_A+\fval_B}$.
		\item \label{part:eps_exp}
			$e^A$ is $\epsp{e^\bval}{\bval \fval}$.
	\end{enumerate}
\end{theorem}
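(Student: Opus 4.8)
The plan is to prove the three parts in the order (a), (b), (c), since the exponential reduces to repeated application of sums and products. For each part I must exhibit the data required by \cref{def:eps}: an index set $K$, amplitudes $\alpha_{mnk}$, and conditional distributions $\Pdist{n,k|m}$, $\Qdist{m,k|n}$, then verify the absolute-summability condition~\eqref{cond:eps_alphasum}, the cost condition~\eqref{cond:eps_cost}, and the two sampling/computation conditions~\eqref{cond:eps_samplr}--\eqref{cond:eps_samprl}.

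For part~\eqref{part:eps_sum_simple} (sums): write $A$ with data $(K_A,\alpha^A_{mnk'},\Pdistsub{A}{n,k'|m},\Qdistsub{A}{m,k'|n})$ and $B$ with data $(K_B,\alpha^B_{mnk''},\dots)$. Take $K = (\{a\}\times K_A)\sqcup(\{b\}\times K_B)$ (a disjoint union, so the new index $k$ records both which operator was chosen and the operator's own internal index), set $\alpha_{m,n,(a,k')} = \alpha^A_{mnk'}$ and $\alpha_{m,n,(b,k'')} = \alpha^B_{mnk''}$, so $\sum_k \alpha_{mnk} = A_{mn}+B_{mn}$ as required. For the distributions, mix: $\Pdist{n,(a,k')|m} = \tfrac{\bval_A}{\bval_A+\bval_B}\Pdistsub{A}{n,k'|m}$ and $\Pdist{n,(b,k'')|m} = \tfrac{\bval_B}{\bval_A+\bval_B}\Pdistsub{B}{n,k''|m}$, and symmetrically for $\Qdist{}$. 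Then in the cost ratio~\eqref{eq:eps_cost} the $a$-branch picks up a factor $\big(\tfrac{\bval_A+\bval_B}{\bval_A}\big)^{1/p}\big(\tfrac{\bval_A+\bval_B}{\bval_A}\big)^{1/q} = \tfrac{\bval_A+\bval_B}{\bval_A}$, so that branch is bounded by $\tfrac{\bval_A+\bval_B}{\bval_A}\cdot\bval_A = \bval_A+\bval_B$, and likewise for the $b$-branch; hence $\bval \le \bval_A+\bval_B$. Sampling: to draw from $\Pdist{n,k|m}$, flip a coin with bias $\bval_A/(\bval_A+\bval_B)$ to pick the operator, then invoke that operator's sampler; the time is $\bigomic(\max\{\fval_A,\fval_B\})$, and the requested ratios $\alpha_{mnk}/\Pdist{n,k|m}$, $\alpha_{mnk}/\Qdist{m,k|n}$ are the corresponding single-operator ratios times the known constant, hence computable in the same time.

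For part~\eqref{part:eps_prod} (products): take $K = \{1,\dots,\dim\} \times K_A \times K_B$, where the first coordinate is the intermediate summation index (call it $\ell$, the shared inner dimension of $AB$) and the other two are the inherited internal indices. Set $\alpha_{m,n,(\ell,k',k'')} = \alpha^A_{m\ell k'}\,\alpha^B_{\ell n k''}$, so that $\sum_k \alpha_{mnk} = \sum_\ell A_{m\ell}B_{\ell n} = (AB)_{mn}$ (absolute convergence follows from \cref{thm:abs_conv} applied to each factor and a product-of-sums argument). Define $\Pdist{(\ell,k',k''),n|m} = \Pdistsub{A}{\ell,k'|m}\,\Pdistsub{B}{n,k''|\ell}$ — a genuine conditional distribution built as a two-step Markov chain starting from $m$, generating $(\ell,k')$ then $(n,k'')$ — and $\Qdist{m,(\ell,k',k'')|n} = \Qdistsub{B}{\ell,k''|n}\,\Qdistsub{A}{m,k'|\ell}$, the reversed chain. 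The cost ratio factors as the product of the $A$-ratio (indices $m,\ell,k'$) and the $B$-ratio (indices $\ell,n,k''$), each bounded by $\bval_A$ and $\bval_B$ respectively, giving $\bval \le \bval_A\bval_B$. Sampling from $\Pdist{}$ given $m$: use $A$'s forward sampler to get $(\ell,k')$ in time $\bigomic(\fval_A)$, then $B$'s forward sampler (conditioned on $\ell$) to get $(n,k'')$ in time $\bigomic(\fval_B)$; to compute $\alpha_{mnk}/\Pdist{n,k|m}$ note it equals $\big(\alpha^A_{m\ell k'}/\Pdistsub{A}{\ell,k'|m}\big)\big(\alpha^B_{\ell n k''}/\Pdistsub{B}{n,k''|\ell}\big)$, each factor available from the respective single-operator subroutine, and similarly $\alpha_{mnk}/\Qdist{m,k|n}$ uses the reversed ratios — so total time $\bigomic(\fval_A+\fval_B)$. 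The reverse sampler from $\Qdist{}$ given $n$ is symmetric. One point that needs care here is that $A$'s and $B$'s forward samplers, by \cref{def:eps}\eqref{cond:eps_samplr}, also return the $\Qdistsub{}{}$-ratios, which is exactly what we need to assemble the $\Qdist{}$-ratios in the forward pass, so no separate lookup is required.

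For part~\eqref{part:eps_exp} (exponential): write $e^A = \sum_{j=0}^\infty A^j/j!$. By \cref{thm:basic_math} and part~\eqref{part:eps_prod}, $A^j$ is $\epsp{\bval_A^j}{j\fval_A}$ (with $A^0 = I$ trivially $\epsp{1}{\bigomic(1)}$), and by part~\eqref{part:eps_sum_simple} each term $A^j/j!$ is $\epsp{\bval_A^j/j!}{j\fval_A}$. The only subtlety is that part~\eqref{part:eps_sum_simple} as stated handles a two-term sum; summing the infinite series requires the $k$-index to encode the choice of $j$ with probability weights proportional to $\bval_A^j/j!$, i.e.\ one samples $j$ from the (normalized) Poisson-like weights $\bval_A^j/(j!\,e^{\bval_A})$, then runs the degree-$j$ product sampler. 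Then the cost is $\sum_j \bval_A^j/j! = e^{\bval_A}$ and the \emph{average} time is $\sum_j \tfrac{\bval_A^j}{j!\,e^{\bval_A}}\cdot \bigomic(j\fval_A) = \bigomic(\bval_A\fval_A)$, since $\sum_j j\,\bval_A^j/(j!\,e^{\bval_A}) = \bval_A$; this is where the ``average time'' clause of \cref{def:eps} is essential, and matches the remark in the text about exponentials. I expect the main obstacle to be purely bookkeeping: carefully threading the inherited internal indices $k',k''$ through the product construction and checking the absolute-convergence hypothesis~\eqref{cond:eps_alphasum} for the doubly- (and in the exponential, countably-) infinite index sets, for which one should cite \cref{thm:abs_conv}. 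A fully careful write-up of part~\eqref{part:eps_prod}, together with the observation that~\eqref{part:eps_sum_simple} generalizes verbatim to countable convex combinations, makes~\eqref{part:eps_exp} essentially immediate.
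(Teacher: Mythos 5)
Your proposal is correct and follows the same route as the paper: for part~\eqref{part:eps_sum_simple} a disjoint-union index set with mixture weights proportional to $\bval_A$ and $\bval_B$, for part~\eqref{part:eps_prod} a product index set that incorporates the intermediate summation index together with a two-step forward Markov chain $\Pdistsub{A}{\ell,k'|m}\Pdistsub{B}{n,k''|\ell}$ (and its reversal for $Q$), and for part~\eqref{part:eps_exp} Poisson-like weights $\bval^j/(j!\,e^{\bval})$ on the terms of the power series. The only organizational difference is that the paper first establishes a general lemma on countable linear combinations (\cref{thm:eps_sum_fancy}) and derives both~\eqref{part:eps_sum_simple} and~\eqref{part:eps_exp} from it, whereas you prove~\eqref{part:eps_sum_simple} directly and then observe that the same construction extends verbatim to countable sums for~\eqref{part:eps_exp}.
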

\begin{proof}
	The proofs are in appendix~\ref{sec:eps_eht_proofs}.
	Rule~\eqref{part:eps_sum_simple} is a special case of \cref{thm:eps_sum_fancy},
	which treats finite or infinite linear combinations.
\end{proof}

Since the value $b$ in \cref{def:eps} (with $p=q=2$) is lower bounded by interference producing
capacity $\ifmax$, \cref{thm:eps_sumprodexp} has the following interpretation.
By~\eqref{part:eps_sum_simple}, $\ifmax$ is convex.
By~\eqref{part:eps_prod}, it is sub-multiplicative.
By~\eqref{part:eps_exp}, the interference producing capacity of a Hamiltonian evolution
$e^{iHt}$ is at most exponential in $t \ifmax(H)$.

We now prove \cref{thm:chain_sim}, regarding estimation of
$\Tr\{A^{(1)} \dotsm A^{(S)} \sigma\}$.
While this can be proved directly using Markov chains, as was done in
section~\ref{sec:markov}, this would be notationally tedious.
It is much easier to first repeatedly apply the product rule,
\cref{thm:eps_sumprodexp}\eqref{part:eps_prod}, to show
that $A = A^{(1)} \dotsm A^{(S)}$ is $\epsp{\prod_t \bval_t}{\sum_t \fval_t}$.
It then suffices to show that $\Tr(A \sigma)$ can be estimated.
Although this may seem like a slightly non-constructive proof, this strategy arose due to
object-oriented techniques (C++) used during actual implementation of the algorithm.
Unrolling the proof of the product theorem, as well as the proof of the theorem that follows,
gives an argument very similar to that presented in section~\ref{sec:markov}.

\begin{lemma}
	\label{thm:eht_eps_trace}
	Let $\sigma$ be an $N \times M$ matrix that is $\ehtp{\bval_\sigma}{\fval_\sigma}$.
	Let $A$ be an $M \times N$ matrix that is $\epsp{\bval_A}{\fval_A}$.
	It is possible to estimate $\Tr(A \sigma)$
	to within additive error $\epsilon>0$, with probability less than $\delta>0$ of
	exceeding the error bound, in average time
	$\bigomic[\log(\delta^{-1}) \epsilon^{-2} \bval_\sigma^2 \bval_A^2 (\fval_\sigma+\fval_A)]$.
\end{lemma}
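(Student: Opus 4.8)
The plan is to estimate $\Tr(A\sigma) = \sum_{mn} A_{mn} \sigma_{nm}$ by applying the Chernoff-Hoeffding corollary (\cref{thm:chernoff3}) to a cleverly chosen probability distribution over paths $\pi = (m,n,k_A,k_\sigma)$, where $k_A \in K_A$ and $k_\sigma \in K_\sigma$ are the auxiliary indices supplied by the EPS decomposition of $A$ and the EHT decomposition of $\sigma$ respectively. First I would expand using conditions~\eqref{cond:eps_alphasum} and~\eqref{cond:eht_alphasum} to write
\begin{equation}
	\Tr(A\sigma) = \sum_{m,n,k_A,k_\sigma} \alpha^A_{mnk_A}\, \alpha^\sigma_{nmk_\sigma},
\end{equation}
so that $V(\pi) = \alpha^A_{mnk_A}\,\alpha^\sigma_{nmk_\sigma}$. (One needs the absolute convergence guaranteed by \cref{thm:abs_conv} to reorder freely.) The sampling distribution mirrors the construction of \cref{subsec:sim_general_pq}: take a convex combination $\Rdist{\pi} = \Pdist{\pi}/p + \Qdist{\pi}/q$ of a ``left-to-right'' chain $\Pdist{\pi} = \Pdistsub{\sigma}{m,k_\sigma}\,\Pdistsub{A}{n,k_A|m}$ and a ``right-to-left'' chain $\Qdist{\pi} = \Qdistsub{\sigma}{n,k_\sigma}\,\Qdistsub{A}{m,k_A|n}$, using exactly the probability distributions furnished by \cref{def:eps} and \cref{def:eht}.

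The key computation is bounding $\bmax = \max_\pi \abs{V(\pi)}/\Rdist{\pi}$. Using the weighted AM-GM inequality $\Rdist{\pi} \ge \Pdist{\pi}^{1/p}\Qdist{\pi}^{1/q}$ (as in~\eqref{eq:cost_markov_pq_firstline}--\eqref{eq:cost_markov_pq_split}), the ratio factors into a piece involving only the $\alpha^A$ and the $A$-distributions and a piece involving only the $\alpha^\sigma$ and the $\sigma$-distributions:
\begin{equation}
	\bmax \le \max_{mnk_A}\left\{\frac{\abs{\alpha^A_{mnk_A}}}{\Pdistsub{A}{n,k_A|m}^{1/p}\Qdistsub{A}{m,k_A|n}^{1/q}}\right\} \cdot \max_{mnk_\sigma}\left\{\frac{\abs{\alpha^\sigma_{nmk_\sigma}}}{\Pdistsub{\sigma}{m,k_\sigma}^{1/p}\Qdistsub{\sigma}{n,k_\sigma}^{1/q}}\right\} \le \bval_A \bval_\sigma
\end{equation}
by conditions~\eqref{cond:eps_cost} and~\eqref{cond:eht_cost}. (A minor point: the EHT distributions carry no conditioning, so one must be slightly careful matching up the $m \leftrightarrow n$ roles of $\sigma_{nm}$ against $\Pdistsub{\sigma}{m,k_\sigma}$ and $\Qdistsub{\sigma}{n,k_\sigma}$, but this is exactly the pattern already displayed in~\eqref{eq:cost_sigma}--\eqref{eq:cost_markov_sigma_bvals}.) \cref{thm:chernoff3} then says $O(\log(\delta^{-1})\epsilon^{-2}\bval_A^2\bval_\sigma^2)$ samples suffice.

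It remains to check that each sample — drawing $\pi$ from $\Rdist{\pi}$ and evaluating $V(\pi)/\Rdist{\pi}$ — takes average time $O(\fval_\sigma + \fval_A)$. Sampling proceeds as in the \cref{subsec:dotprod} example: flip a coin with bias $1/p$; on heads, draw $(m,k_\sigma)$ from $\Pdistsub{\sigma}{\cdot}$ then $(n,k_A)$ from $\Pdistsub{A}{\cdot|m}$; on tails, draw $(n,k_\sigma)$ from $\Qdistsub{\sigma}{\cdot}$ then $(m,k_A)$ from $\Qdistsub{A}{\cdot|n}$ — each step costs $O(\fval_\sigma)$ or $O(\fval_A)$ in average time by conditions~\eqref{cond:eps_samplr}--\eqref{cond:eps_samprl} and~\eqref{cond:eht_samplr}--\eqref{cond:eht_samprl}. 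To evaluate $V(\pi)/\Rdist{\pi}$ one writes it, as in~\eqref{eq:V_over_R_split}, as $\{\tfrac1p \tfrac{\Pdistsub{\sigma}{m,k_\sigma}}{\alpha^\sigma_{nmk_\sigma}}\tfrac{\Pdistsub{A}{n,k_A|m}}{\alpha^A_{mnk_A}} + \tfrac1q \tfrac{\Qdistsub{\sigma}{n,k_\sigma}}{\alpha^\sigma_{nmk_\sigma}}\tfrac{\Qdistsub{A}{m,k_A|n}}{\alpha^A_{mnk_A}}\}^{-1}$, each of whose four ratios the EPS/EHT conditions guarantee can be computed in average time $O(\fval_A)$ or $O(\fval_\sigma)$ — crucially, these conditions promise computability of $\alpha/\Pdist{\cdot}$ and $\alpha/\Qdist{\cdot}$ given the index drawn from \emph{either} chain, which is what lets us compute the ``wrong-direction'' ratio for a sample drawn from the other chain. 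Multiplying the per-sample cost by the sample count gives the stated runtime $O[\log(\delta^{-1})\epsilon^{-2}\bval_\sigma^2\bval_A^2(\fval_\sigma+\fval_A)]$. I expect the main obstacle to be purely bookkeeping: carefully threading the two independent auxiliary indices $k_A, k_\sigma$ through both Markov chains and verifying that the ``average time'' clause survives the convex-combination step (a sample drawn from $\Pdist{}$ may land on a $\pi$ that is expensive under $\Qdist{}$-type evaluation, so one must argue the expensive events are rare under the distribution actually used, exactly the subtlety flagged after \cref{def:eps}).
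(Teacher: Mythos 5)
Your proposal is correct and matches the paper's proof essentially line by line: same reduction of $\Tr(A\sigma)$ to a sum over $(m,n,k_A,k_\sigma)$, same convex combination of the two Markov-chain distributions, same AM-GM bound giving $\bmax \le \bval_A\bval_\sigma$, same appeal to \cref{thm:chernoff3}, and same per-sample cost analysis via~\eqref{eq:V_over_R_split}. The one worry you flag at the end is actually already discharged by the way conditions~\eqref{cond:eps_samplr}--\eqref{cond:eps_samprl} (and their EHT analogues) are phrased: each condition bundles the sampling step together with the computation of \emph{both} ratios $\alpha/\Pdist{\cdot}$ and $\alpha/\Qdist{\cdot}$ into a single average-time guarantee, with the average taken over the distribution actually sampled from, so no separate argument about rare expensive events is needed.
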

\begin{proof}
	The proof is in appendix~\ref{sec:eps_eht_proofs}, and follows along the lines of the
	techniques developed in section~\ref{sec:markov}.
\end{proof}

\begin{proof}[Proof of \cref{thm:chain_sim}]
	By iterated application of \cref{thm:eps_sumprodexp}\eqref{part:eps_prod},
	$A = A^{(1)} \dotsm A^{(S)}$ is $\epsp{\prod_t \bval_t}{\sum_t \fval_t}$.
	By \cref{thm:eht_eps_trace} the value of $\Tr(A \sigma)$ can be estimated
	in time
	$\bigomic(\log(\delta^{-1}) \epsilon^{-2} \bval^2 \fval)$
	where $\bval = \bval_\sigma \prod_t \bval_t$ and
	$\fval = \fval_\sigma + \sum_t \fval_t$.
\end{proof}

\subsection{Query complexity}
\label{subsec:query}

The simulation algorithm of this paper involves sampling a number of paths via Markov chains,
each path evaluation in turn requiring certain operations to be performed.
\Cref{def:eps,def:eht} each consist of two pairs of conditions,
\eqref{cond:eps_alphasum} and~\eqref{cond:eps_cost} relating to the number of paths that need
to be evaluated (quantified by $\bval$), and~\eqref{cond:eps_samplr} and~\eqref{cond:eps_samprl}
concerning tasks that need to be performed for each path (quantified by $\fval$).
In appendix~\ref{sec:uAv} we show (\cref{thm:best_eps_qnorm}) that there are always
$\alpha_{mnk}$, $\Pdist{n,k|m}$, and $\Qdist{m,k|n}$ satisfying
conditions~\eqref{cond:eps_alphasum} and~\eqref{cond:eps_cost} with
$\bval = \qnorm{\aA}$ (and in fact smaller $\bval$ is not possible).
However, these probability distributions may not satisfy~\eqref{cond:eps_samplr}
and~\eqref{cond:eps_samprl}, which require that the distributions can be sampled from
efficiently.
It is difficult to make any general statement regarding satisfaction
of~\eqref{cond:eps_samplr} and~\eqref{cond:eps_samprl}, since time complexity of
computation is in general a difficult problem; satisfaction of these two conditions needs to be
considered on a case-by-case basis.
However, when considering query complexity rather than time complexity,~\eqref{cond:eps_samplr}
and~\eqref{cond:eps_samprl} can for the most part be ignored as we shall now explain.
Note that communication complexity (discussed in \cref{subsec:comm}) offers another context in
which~\eqref{cond:eps_samplr} and~\eqref{cond:eps_samprl} can be ignored, since there too
computation time is free.

Consider the situation where an algorithm is required to answer some question about an
oracle, which is to be thought of as a black box provided to the algorithm (Grover's algorithm
is a prominent example).
For a classical (i.e.\ non-quantum) algorithm the oracle can be any function between two finite
sets, say $g : X \to Y$.
It will be convenient to consider sets of integers,
$X=\{0,1,\dotsc,\abs{X}-1\}$ and $Y=\{0,1,\dotsc,\abs{Y}-1\}$.
The algorithm can query the oracle by providing it a value $x \in X$, and the oracle responds
with $g(x)$.
This is the only allowed way to gain information about $g$.
The query complexity of the algorithm is defined to be the number of times it queries the
oracle.
In particular, the query complexity is not affected by the amount of time spent performing
computations between queries; computation, even lengthy computation, is not charged for.

Quantum circuits are provided access to an oracle in the form of a unitary operator\footnote{
	Sometimes an alternate definition
	$\mathcal{O}'_g = \sum_{x\in X, y\in Y}
	e^{2 \pi i g(x) y / \abs{Y}} \ket{x}\bra{x} \ot \ket{y}\bra{y}$
	is used.  All claims apply to this definition as well, requiring only
	a modification of~\eqref{eq:oracle_eps_first}-\eqref{eq:oracle_eps_last}.
}
\begin{equation}
	\mathcal{O}_g = \sum_{x\in X, y\in Y}
	\ket{x}\bra{x} \ot \ket{y + g(x)}\bra{y}
	\label{eq:quantum_oracle}
\end{equation}
where $\ket{x} \ot \ket{y} \in \mathbb{C}^{\abs{X}} \ot \mathbb{C}^{\abs{Y}}$ are
computational basis vectors and where the addition $y+g(x)$ is modulo $\abs{Y}$.
The query complexity of a quantum circuit is defined to be the number of times $\mathcal{O}_g$
appears in the circuit.
For example, Grover's algorithm has query complexity $\bigomic(\sqrt{N})$.

Computational complexity classes can be analyzed by comparing how two classes perform when
given access to equivalent oracles.  For example, oracles have been constructed relative to
which quantum computers perform exponentially more efficiently than classical computers (e.g.\
Simon's problem~\cite{10.1109/SFCS.1994.365701}), whereas proving that quantum computers are
faster than classical computers in the absence of an oracle is an extremely difficult open
problem.

Considering query complexity rather than time complexity simplifies the analysis of the
present paper.
Suppose we wish to simulate a quantum circuit containing at least one instance of an oracle
$\mathcal{O}_g$ (e.g.\ Grover's algorithm) on a classical computer that also has oracle access
to $g$.
Simulation of the quantum circuit on the classical computer will require making queries to $g$
and we can ask how many queries are needed, ignoring the amount of computational time used.
We do this by modifying conditions~\eqref{cond:eps_samplr} and~\eqref{cond:eps_samprl} of
\cref{def:eps,def:eht} to require that the sampling and computation tasks be completed using
$\bigomic(\fval)$ queries to $g$, rather than requiring $\bigomic(\fval)$ time (time now being
a resource that is not charged for).
We will refer to such modified definitions by invoking the phrase ``in terms of query
complexity.''

We will now show that in terms of query complexity, $\mathcal{O}_g$ is $\epsp{1}{1}$.
Since this unitary operates on two subsystems,
$\mathbb{C}^{\abs{X}} \ot \mathbb{C}^{\abs{Y}}$,
the indices $m$ and $n$ in \cref{def:eps} are tuple valued.
We write $m=(x,y) \in X \times Y$
and $n=(x',y') \in X \times Y$.
Take $K$ to be the singleton set $\{0\}$ and define
\begin{align}
	\label{eq:oracle_eps_first}
	\alpha_{(x,y)(x',y')k}  &:= \Pdist{(x',y'),k|(x,y)}
	\\ &:= \Qdist{(x,y),k|(x',y')}
	\\ &:= \braopket{xy}{\mathcal{O}_g}{x'y'}
	\\ &= \delta(x,x') \delta(y+g(x), y')
	\label{eq:oracle_eps_last}
\end{align}
where $\delta$ is the Kronecker delta.
It is easy to see that these satisfy conditions~\eqref{cond:eps_alphasum}
and~\eqref{cond:eps_cost} of \cref{def:eps} with $\bval=1$.
Sampling from these probability distributions and computing the values of any of these
quantities can be done with a single query of $g$ (note that the conditional probability
distributions are deterministic), therefore
conditions~\eqref{cond:eps_samplr} and~\eqref{cond:eps_samprl} are satisfied with
$\fval=1$.

On the other hand, for matrices that are not defined in terms of the oracle $g$, such as the
$I-2\ket{+}\bra{+}$ reflection operators in Grover's algorithm,
the operations required by conditions~\eqref{cond:eps_samplr} and~\eqref{cond:eps_samprl}
can be carried out using zero queries.
Therefore conditions~\eqref{cond:eps_samplr} and~\eqref{cond:eps_samprl} can be
completely ignored, and we can take $\fval=0$.
We are then free to focus on determining the probability distributions giving the
smallest possible value of $\bval$ in conditions~\eqref{cond:eps_alphasum}
and~\eqref{cond:eps_cost} without regard to whether these can be efficiently sampled from
(since we are charging for queries only and time is free).
It is desirable to make $\bval$ as small as possible, since this determines the number of paths
that need to be sampled.
The number of paths sampled matters, because each will require evaluating the entire Markov
chain, which involves every operator.
At least one of these operators involves the oracle, so at least one query needs to be made for
each path that is sampled.
The total number of oracle queries will be the number of paths sampled times the number of
queries per path.
In appendix~\ref{sec:uAv} we show (\cref{thm:best_eps_qnorm}) the existence of probability
distributions which satisfy conditions~\eqref{cond:eps_alphasum} and~\eqref{cond:eps_cost}
with $b=\qnorm{\aA}$.
So in terms of query complexity, any matrix $A$ not defined in terms of an oracle
is $\epsp{\qnorm{\aA}}{0}$.
In the case $p=q=2$ of relevance to quantum circuits, we have $\twonorm{\aA} = \ifmax(A)$,
the interference producing capacity of $A$.
\Cref{thm:best_eps_qnorm} also shows that any $\sigma$ not defined in terms of an oracle
is $\ehttwo{\trnorm{\sigma}}{0}$
where $\trnorm{\cdot}$ is the trace norm (a generalization is provided for $p \ne 2$).

\subsection{Sufficient conditions for EPS/EHT}
\label{sec:sufficient_eps_eht}

We now present theorems that can be used to show that specific operators are EPS or EHT\@.
As stated above, if one is only interested in query complexity then any matrix $A$ not
depending on an oracle is guaranteed to be $\epsp{\qnorm{\aA}}{0}$.
However, in terms of time complexity it is possible that the probability distributions that
achieve $\bval=\qnorm{\aA}$ cannot be sampled from efficiently (giving large $\fval$).
For this reason it is worthwhile to introduce probability distributions that are more likely to
be efficiently sampled, and which in some cases still achieve a small $\bval$.
In the theorem below each row and column of $A$ is treated as a probability distribution,
correcting for phases and normalization.
This works well when the absolute row and column sums of $A$ are small.

\begin{theorem}
	Let $1 \le p \le \infty$ and $1/p+1/q=1$.
	Let $A$ be an $M \times N$ matrix.  Define the probability distributions
	\begin{equation}
		\Pdist{n|m} = \frac{ \abs{A_{mn}} }{ \sum_{n'} \abs{A_{mn'}}},\quad
		\Qdist{m|n} = \frac{ \abs{A_{mn}} }{ \sum_{m'} \abs{A_{m'n}}}.
		\label{eq:eps1_dist}
	\end{equation}
	Suppose that it is possible in average time $\bigomic(\fval)$ on a classical computer to
	perform the following operations.
	\begin{enumerate}[(a)]
		\item \label{cond:eps1_samp_P}
			Given $m$, sample $n$ from the probability distribution $\Pdist{n|m}$.
		\item \label{cond:eps1_samp_Q}
			Given $n$, sample $m$ from the probability distribution $\Qdist{m|n}$.
		\item \label{cond:eps1_compute}
			Given $m,n$, compute
			$A_{mn}$, $\sum_{n'} \abs{A_{mn'}}$, and $\sum_{m'} \abs{A_{m'n}}$.
	\end{enumerate}
	Then $A$ is $\epsp{\bval}{\fval}$ with
	$\bval = \infnorm{A}^{1/p} \onenorm{A}^{1/q}$.
	Note that $\bval$ is the weighted geometric mean of the maximum row and column sums of $A$.
	\label{thm:eps1}
\end{theorem}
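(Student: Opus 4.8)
The plan is to verify the four conditions of \cref{def:eps} directly, with the auxiliary index playing essentially no role. Concretely I would take the index set $K=\{0\}$ to be a singleton, set $\alpha_{mn0}:=A_{mn}$, and use the conditional distributions of~\eqref{eq:eps1_dist}, i.e.\ $\Pdist{n,0|m}:=\Pdist{n|m}$ and $\Qdist{m,0|n}:=\Qdist{m|n}$. Condition~\eqref{cond:eps_alphasum} then holds trivially, since summing over the singleton $K$ just returns $A_{mn}$.

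The crux is condition~\eqref{cond:eps_cost}, and it comes down to the cancellation afforded by $1/p+1/q=1$. Substituting the definitions,
\[
\frac{\abs{\alpha_{mn0}}}{\Pdist{n|m}^{1/p}\,\Qdist{m|n}^{1/q}}
= \abs{A_{mn}}^{\,1-1/p-1/q}\,\Bigl(\sum_{n'}\abs{A_{mn'}}\Bigr)^{1/p}\Bigl(\sum_{m'}\abs{A_{m'n}}\Bigr)^{1/q}
= \Bigl(\sum_{n'}\abs{A_{mn'}}\Bigr)^{1/p}\Bigl(\sum_{m'}\abs{A_{m'n}}\Bigr)^{1/q},
\]
where entries with $A_{mn}=0$ are absorbed by the $0/0=0$ convention of \cref{def:eps}. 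Since the first factor depends only on $m$ and the second only on $n$, the joint maximum over $(m,n)$ factors into a product of a row-maximum and a column-maximum, giving $\bval=(\max_m\sum_{n'}\abs{A_{mn'}})^{1/p}(\max_n\sum_{m'}\abs{A_{m'n}})^{1/q}=\infnorm{A}^{1/p}\onenorm{A}^{1/q}$, as claimed.

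It remains to check conditions~\eqref{cond:eps_samplr} and~\eqref{cond:eps_samprl}, which reduce immediately to the three stated hypotheses. Because $K$ is a singleton, sampling $(n,k)$ from $\Pdist{n,k|m}$ is just sampling $n$ from $\Pdist{n|m}$, which is hypothesis~\eqref{cond:eps1_samp_P}; likewise sampling $(m,k)$ from $\Qdist{m,k|n}$ is hypothesis~\eqref{cond:eps1_samp_Q}. The two ratios to be computed are $\alpha_{mn0}/\Pdist{n|m}=(A_{mn}/\abs{A_{mn}})\sum_{n'}\abs{A_{mn'}}$ and $\alpha_{mn0}/\Qdist{m|n}=(A_{mn}/\abs{A_{mn}})\sum_{m'}\abs{A_{m'n}}$; each is assembled from $A_{mn}$ together with the appropriate absolute row or column sum, all of which hypothesis~\eqref{cond:eps1_compute} supplies in average time $\bigomic(\fval)$. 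Chaining these $\bigomic(\fval)$-time subroutines costs only a constant factor, so both conditions hold with the stated $\fval$.

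I do not expect a genuine obstacle here; the argument is bookkeeping against \cref{def:eps}. The only points that need care are the $0/0=0$ convention, which is what legitimizes the displayed simplification when some $A_{mn}$ (hence the corresponding $\Pdist{n|m}$ and $\Qdist{m|n}$) vanish; the factorization of the joint maximum into independent row and column maxima; and the endpoint cases $p=1,q=\infty$ and $p=\infty,q=1$, where one exponent is $0$ so the corresponding distribution drops out of~\eqref{eq:eps_cost} yet is still needed for the sampling conditions — all handled uniformly by the reasoning above.
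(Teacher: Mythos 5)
Your proof is correct and takes essentially the same route as the paper, which simply says that the result follows by plugging the distributions of~\eqref{eq:eps1_dist} into \cref{def:eps} with $K=\{0\}$; you have just spelled out the verification in full. One small imprecision worth noting: because of the $0/0=0$ convention, the maximum in condition~\eqref{cond:eps_cost} is effectively taken only over pairs $(m,n)$ with $A_{mn}\ne 0$, so the ``joint maximum factors exactly'' step is really only an inequality $\max_{mn}\{\cdot\}\le(\max_m\sum_{n'}\abs{A_{mn'}})^{1/p}(\max_n\sum_{m'}\abs{A_{m'n}})^{1/q}$, not an equality. That is all the definition of $\epsp{\bval}{\fval}$ requires, so the conclusion stands; I would just phrase it as an upper bound rather than as a factorization of the max.
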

\begin{proof}
	This follows directly from plugging the probability distributions~\eqref{eq:eps1_dist}
	into \cref{def:eps}, with $K=\{0\}$ (i.e.\ not making use of the index
	$k$).  Note that $\infnorm{A}$ is the maximum absolute row sum and $\onenorm{A}$ is the
	maximum absolute column sum of $A$.
\end{proof}

Finally, we present theorems that cover the two most important examples of EHT operators:
dyads and density operators.

\begin{theorem}[Dyads are EHT]
	\label{thm:dyads}
	Let $\ket{\phi}$ and $\bra{\psi}$ be vectors such that the probability distributions
	$\Pdist{n} = \abs{\psi_n}^p/\pnorm{\psi}^p$ and
	$\Qdist{m} = \abs{\phi_m}^q/\qnorm{\phi}^q$ can be sampled from,
	and the corresponding $\psi_n$ and $\phi_m$ can be computed, in average time
	$\bigomic(\fval)$.
	Then the dyad $\ket{\phi}\bra{\psi}$ is $\ehtp{\pnorm{\psi} \qnorm{\phi}}{\fval}$.
\end{theorem}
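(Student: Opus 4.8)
The plan is to verify the four conditions of \cref{def:eht} directly, using the singleton index set $K=\{0\}$ so that the decomposition required in condition~\eqref{cond:eht_alphasum} is trivial. Writing $\sigma=\ket{\phi}\bra{\psi}$ with entries $\sigma_{mn}=\phi_m\psi_n^*$, I would set $\alpha_{mn0}=\phi_m\psi_n^*$ and take the probability distributions to be exactly the ones named in the hypothesis, $\Pdist{n,0}=\abs{\psi_n}^p/\pnorm{\psi}^p$ and $\Qdist{m,0}=\abs{\phi_m}^q/\qnorm{\phi}^q$; these sum to $1$ by the definition of the $\ell^p$- and $\ell^q$-norms, and condition~\eqref{cond:eht_alphasum} holds because the sum over $k$ has the single term $\alpha_{mn0}=\sigma_{mn}$.

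The second step is the cost bound~\eqref{eq:eht_cost}, and this is the only place where a little care is needed. For indices with $\psi_n\ne0$ and $\phi_m\ne0$,
\begin{equation}
	\frac{\abs{\alpha_{mn0}}}{\Pdist{n,0}^{1/p}\,\Qdist{m,0}^{1/q}}
	= \frac{\abs{\phi_m}\,\abs{\psi_n}}{\bigl(\abs{\psi_n}/\pnorm{\psi}\bigr)\bigl(\abs{\phi_m}/\qnorm{\phi}\bigr)}
	= \pnorm{\psi}\,\qnorm{\phi},
\end{equation}
while if $\psi_n=0$ or $\phi_m=0$ the numerator vanishes and the ratio is $0$ under the stated convention $0/0=0$. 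Hence the maximum in~\eqref{eq:eht_cost} equals $\pnorm{\psi}\qnorm{\phi}$, so $\bval=\pnorm{\psi}\qnorm{\phi}$ works (and is in fact optimal as soon as $\psi$ and $\phi$ each have one nonzero coordinate). The exponents $1/p$ and $1/q$ are precisely the ones that make the two normalizations cancel, which is why the hypothesis is phrased with $\abs{\psi_n}^p$ and $\abs{\phi_m}^q$; at the endpoints $p\in\{1,\infty\}$ one simply reads the exponent-zero factor as $1$ and the same computation applies.

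Finally I would check conditions~\eqref{cond:eht_samplr} and~\eqref{cond:eht_samprl}. Sampling $(n,0)$ from $\Pdist{n,0}$ is just sampling $n$ from $\Pdist{n}=\abs{\psi_n}^p/\pnorm{\psi}^p$, which by hypothesis takes average time $\bigomic(\fval)$; symmetrically for $\Qdist{m,0}$. The two quantities that must then be evaluated, given an arbitrary extra index, are
\begin{equation}
	\frac{\alpha_{mn0}}{\Pdist{n,0}} = \phi_m\psi_n^*\,\frac{\pnorm{\psi}^p}{\abs{\psi_n}^p},
	\qquad
	\frac{\alpha_{mn0}}{\Qdist{m,0}} = \phi_m\psi_n^*\,\frac{\qnorm{\phi}^q}{\abs{\phi_m}^q},
\end{equation}
each an arithmetic combination of $\psi_n$, $\phi_m$, and the fixed normalizing constants $\pnorm{\psi}$ and $\qnorm{\phi}$; since $\psi_n$ and $\phi_m$ are computable in average time $\bigomic(\fval)$ by hypothesis, this also costs $\bigomic(\fval)$. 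All four conditions of \cref{def:eht} therefore hold with $\bval=\pnorm{\psi}\qnorm{\phi}$ and the claimed $\fval$, so $\ket{\phi}\bra{\psi}$ is $\ehtp{\pnorm{\psi}\qnorm{\phi}}{\fval}$. I do not anticipate any genuine obstacle here: the statement is a direct instantiation of the definition, and the only subtleties are the $0/0$ convention in step two and the reading of the endpoint exponents.
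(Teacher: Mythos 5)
Your proof is correct and follows the same route as the paper's: take $K=\{0\}$, set $\alpha_{mn0}=\sigma_{mn}=\phi_m\psi_n^*$, plug the given distributions into Definition~\ref{def:eht}, and observe that the ratio in condition~(b) is identically $\pnorm{\psi}\qnorm{\phi}$ wherever it is nonzero (with the $0/0$ convention handling the rest). The paper states this in one line; you have simply written out the arithmetic, and your closing optimality remark matches the paper's appeal to Theorem~\ref{thm:best_eps_qnorm}.
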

\begin{proof}
	This can be seen immediately by plugging the given probability distributions
	into \cref{def:eht}, with $K=\{0\}$ (i.e.\ without making use of index $k$).
	This is the best possible value of $\bval$, which can be seen by applying
	\cref{thm:best_eps_qnorm}\eqref{part:best_bval_ge_qnorm} and using
	$\qnorm{(\ket{\phi}\bra{\psi})} = \pnorm{\psi} \qnorm{\phi}$.
\end{proof}

\begin{corollary}[Estimate matrix entries]
	Let $A$ be $\epsp{\bval}{\fval}$.  Then, given any indices $i,j$, the value of the matrix entry
	$A_{ij}$ can be estimated
	to within additive error $\epsilon>0$, with probability less than $\delta>0$ of
	exceeding the error bound, in average time
	$\bigomic(\log(\delta^{-1}) \epsilon^{-2} \bval^2 \fval)$.
\end{corollary}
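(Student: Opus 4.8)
The plan is to recognize $A_{ij}$ as a trace of the form estimated by \cref{thm:eht_eps_trace}, with the second factor a trivial dyad. Concretely, let $\sigma = \ket{j}\bra{i}$ be the $N \times M$ rank-one matrix whose only nonzero entry is $\sigma_{ji}=1$ (here the ket lives in the $N$-dimensional column space of $A$ and the bra in the $M$-dimensional row space). Then
\begin{equation}
	\Tr(A\sigma) = \sum_{m,n} A_{mn}\,\sigma_{nm} = \sum_{m,n} A_{mn}\,\delta_{nj}\delta_{mi} = A_{ij},
\end{equation}
so it suffices to estimate $\Tr(A\sigma)$ and it remains only to exhibit good EHT parameters for $\sigma$.

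Next I would invoke \cref{thm:dyads} (``dyads are EHT'') with $\ket{\phi} = \ket{j}$ and $\bra{\psi} = \bra{i}$. The required distributions are the point masses $\Pdist{n} = \abs{\braket{n}{i}}^p/\pnorm{\ket{i}}^p = \delta_{ni}$ and $\Qdist{m} = \abs{\braket{m}{j}}^q/\qnorm{\ket{j}}^q = \delta_{mj}$, which can be sampled from by simply outputting the fixed index, and whose amplitudes $\psi_n = \delta_{ni}$, $\phi_m = \delta_{mj}$ can be computed in $\bigomic(1)$ time. Since $\pnorm{\ket{i}} = \qnorm{\ket{j}} = 1$, \cref{thm:dyads} gives that $\sigma$ is $\ehtp{1}{\bigomic(1)}$.

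Finally I would apply \cref{thm:eht_eps_trace} with this $\sigma$, which is $\ehtp{1}{\bigomic(1)}$, and $A$, which is $\epsp{\bval}{\fval}$ by hypothesis: the lemma estimates $\Tr(A\sigma) = A_{ij}$ to additive error $\epsilon$, with failure probability at most $\delta$, in average time $\bigomic(\log(\delta^{-1})\,\epsilon^{-2}\cdot 1^2\cdot \bval^2\cdot(\bigomic(1)+\fval)) = \bigomic(\log(\delta^{-1})\,\epsilon^{-2}\,\bval^2\,\fval)$, which is exactly the claimed bound.

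There is essentially no hard step here; the entire argument is a one-line reduction once \cref{thm:dyads,thm:eht_eps_trace} are in hand. The only points that require care are the index bookkeeping — tracking which of $i,j$ lives in the row space ($\mathbb{C}^M$) versus the column space ($\mathbb{C}^N$) of the possibly rectangular matrix $A$, so that the dyad $\sigma$ has compatible dimensions with $A$ — and the harmless absorption of the $\bigomic(1)$ cost of the trivial dyad into $\fval$, which implicitly uses $\fval \ge 1$ (true for any operator, since even reading an index costs unit time).
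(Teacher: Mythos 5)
Your reduction is exactly the paper's: write $A_{ij} = \Tr\bigl(A\,\ket{j}\bra{i}\bigr)$, invoke \cref{thm:dyads} to make the dyad EHT with $\bval=1$, and then apply \cref{thm:eht_eps_trace}. The one small inaccuracy is the accounting of the dyad's sampling cost: you call it $\bigomic(1)$, but sampling a point mass over $N$ outcomes and reporting the index already takes $\Theta(\log N)$ bit operations, so the dyad is $\ehtp{1}{\log N}$ as the paper states; this is harmless because $\fval \ge \log N$ for any operator on an $N$-dimensional space (it takes that long merely to write down an index), which is the paper's explicit justification, rather than the weaker $\fval\ge 1$ you appeal to.
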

\begin{proof}
	By \cref{thm:dyads} the dyad of computational basis vectors $\ket{j}\bra{i}$ is
	$\ehtp{1}{\log(N)}$.
	Note: $\fval \ge \log(N)$ in all cases (unless one is dealing with query complexity) since it
	takes $\bigomic(\log(N))$ time to even write down the indices $i$ and $j$, which are
	$\log(N)$ bits long.
	By \cref{thm:eht_eps_trace}, $A_{ij} = \Tr(A \ket{j}\bra{i})$ can be estimated
	in time
	$\bigomic(\log(\delta^{-1}) \epsilon^{-2} \bval^2 [\fval+\log(N)]) =
		\bigomic(\log(\delta^{-1}) \epsilon^{-2} \bval^2 \fval)$.
\end{proof}

\begin{theorem}[Density operators are EHT]
	Let $\sigma$ be a density operator.
	Suppose that it is possible to sample from the probability distribution
	$\Pdist{n} = \sigma_{nn}$ in average time $\bigomic(\fval)$ and, given $i,j$, to compute
	$\sigma_{ij}$ in average time $\bigomic(\fval)$.
	Then $\sigma$ is $\ehttwo{1}{\fval}$.
\end{theorem}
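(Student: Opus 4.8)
The plan is to verify Definition~\ref{def:eht} directly for the case $p=q=2$, using the trivial singleton index set $K=\{0\}$ so that the extra index $k$ plays no role. I would take
\begin{equation}
	\alpha_{mn0} = \sigma_{mn}, \qquad \Pdist{n,0} = \sigma_{nn}, \qquad \Qdist{m,0} = \sigma_{mm}.
\end{equation}
Since $\sigma$ is a density operator its diagonal entries are nonnegative and sum to $1$, so $\Pdist{\cdot,0}$ and $\Qdist{\cdot,0}$ are legitimate probability distributions, and condition~\eqref{cond:eht_alphasum} holds immediately because the sum over $k$ has a single term.

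The only substantive step is the cost bound~\eqref{cond:eht_cost} with $\bval=1$, i.e.\ the inequality $\abs{\sigma_{mn}} \le \sqrt{\sigma_{mm}\sigma_{nn}}$ for all $m,n$. I would obtain this from positive semidefiniteness of $\sigma$: the $2 \times 2$ principal submatrix on indices $\{m,n\}$ is positive semidefinite, hence has nonnegative determinant $\sigma_{mm}\sigma_{nn}-\abs{\sigma_{mn}}^2 \ge 0$. This is the same fact already used for the $\sigma$ factor in the Markov-chain discussion of \cref{subsec:dotprod}'s sequel (the ``Dyads and density operators'' paragraph). A corollary is that if $\sigma_{mm}=0$ or $\sigma_{nn}=0$ then $\sigma_{mn}=0$ as well, so the ratio in~\eqref{eq:eht_cost} is $0/0=0$ by the stated convention in that case and equals $\abs{\sigma_{mn}}/\sqrt{\sigma_{mm}\sigma_{nn}} \le 1$ otherwise.

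Finally I would dispatch the efficiency requirements~\eqref{cond:eht_samplr} and~\eqref{cond:eht_samprl}. Sampling $n$ from $\Pdist{n,0}=\sigma_{nn}$ costs average time $\bigomic(\fval)$ by hypothesis, and since $\Qdist{m,0}=\sigma_{mm}$ is a distribution of the identical form, sampling from it does too. Given the sampled index together with an arbitrary query index, the required quantities $\alpha_{mn0}/\Pdist{n,0}=\sigma_{mn}/\sigma_{nn}$ and $\alpha_{mn0}/\Qdist{m,0}=\sigma_{mn}/\sigma_{mm}$ are computed by evaluating the three entries $\sigma_{mn}$, $\sigma_{nn}$, $\sigma_{mm}$ (each in average time $\bigomic(\fval)$ by hypothesis) followed by $\bigomic(1)$ arithmetic, returning $0$ whenever a denominator vanishes. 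Hence $\sigma$ is $\ehttwo{1}{\fval}$. I anticipate no genuine obstacle here; the only point needing care is the bookkeeping around vanishing diagonal entries and the $0/0$ convention, which is handled by the same positive-semidefiniteness argument as the cost bound.
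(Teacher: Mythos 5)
Your proof is correct and follows exactly the route of the paper: plug $\Pdist{n}=\sigma_{nn}$ and $\Qdist{m}=\sigma_{mm}$ into \cref{def:eht} with $K=\{0\}$, $\alpha_{mn0}=\sigma_{mn}$, and use $\abs{\sigma_{mn}}\le\sqrt{\sigma_{mm}\sigma_{nn}}$ for positive semidefinite $\sigma$. The paper states this in a single sentence; you have simply spelled out the verification of each condition (including the $0/0$ bookkeeping, which the paper leaves implicit).
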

\begin{proof}
	This follows from plugging the probability distributions $\Pdist{n} = \sigma_{nn}$ and
	$\Qdist{m} = \sigma_{mm}$ into \cref{def:eht} and using the inequality
	$\abs{\sigma_{mn}} \le \sqrt{\sigma_{mm} \sigma_{nn}}$, which is satisfied by positive
	semidefinite matrices.
\end{proof}

\section{Simulation of quantum circuits}
\label{sec:circuits}

\subsection{Efficiently simulated states and operators}

\begin{figure}
	\begin{center}
		\includegraphics{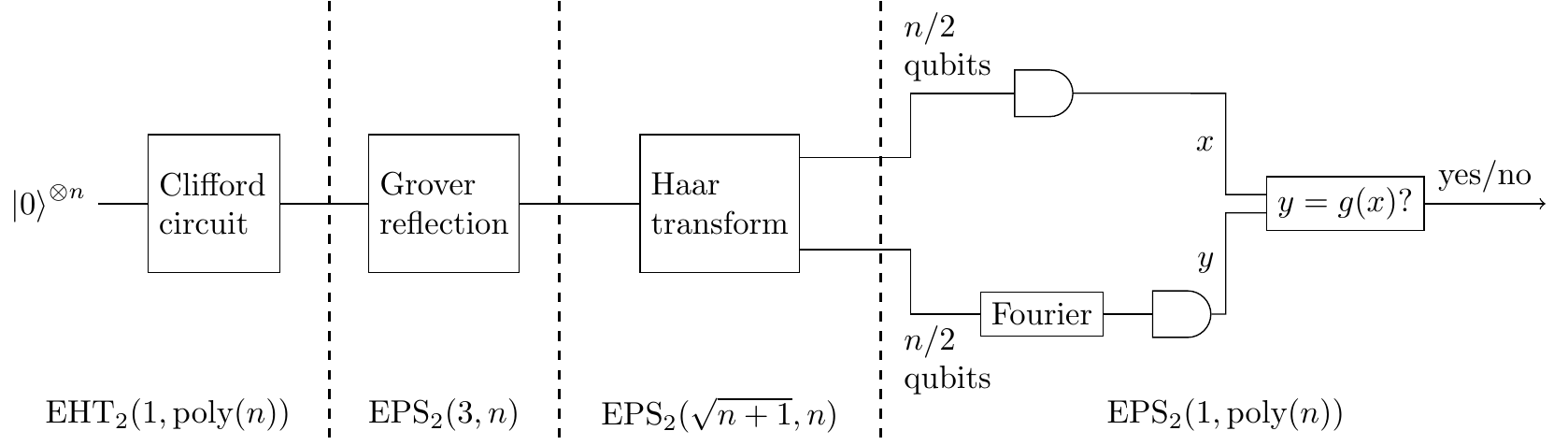}
	\end{center}
	\caption{
		An example of the type of circuit that can be simulated in $\poly(n)$
		time using the techniques of this paper.  The circuit is divided into four
		sections: the first
		section is considered to be the initial state, the middle two sections are unitary
		matrices, and the last section is a projector.
		The block labeled $y=g(x)$ represents a classical computation step that outputs
		``yes'' if the first and second measurement operations result in values that are
		related by an arbitrary (but $\poly(n)$ time computable) function $g$.
	}
	\label{fig:big_circuit}
\end{figure}

In this section we take up the case $p=q=2$, which is relevant to quantum circuits, and list
several examples of $\ehttwo{\bval}{\fval}$ states and $\epstwo{\bval}{\fval}$ operators where
$\bval$ is small and $\fval \le \polylog(N)$ where $N$ is the dimension of the system
(i.e.\ $N=2^n$ where $n$ is the number of qubits).
By \cref{thm:chain_sim}, circuits made of such states and operators can be efficiently
simulated.
For example, the circuit depicted in \cref{fig:big_circuit} can be simulated in
$\polylog(N)$ time.
After providing several examples of such states and operators, we discuss a few circuits that
cannot be efficiently simulated using our technique.

The initial states we are able to efficiently simulate include the \textit{computationally
	tractable} (CT) states of~\cite{vandennest2011}.  We reproduce the definition
here.\footnote{
	Their definition referred to qubits.  We generalize slightly to the abstract case where the
	decomposition into subsystems is not defined, only the total dimension of the space matters.
}

\begin{definition}
	\label{def:ct}
	A normalized state $\ket{\psi}$ of dimension $N$ is called \textit{computationally
	tractable} (CT) if the following conditions hold:
	\begin{enumerate}[(a)]
		\item It is possible to sample in $\polylog(N)$ time with classical means from the
			probability distribution $\Pdist{i} = \abs{\psi_i}^2$.
		\item Upon input of any $i \in \{0, \dotsc, N-1\}$, the coefficient $\psi_i$ can be
			computed in $\polylog(N)$ time on a classical computer.
	\end{enumerate}
\end{definition}

It follows immediately from \cref{thm:dyads} that if $\ket{\psi}$ is a CT state then
$\rho=\ket{\psi}\bra{\psi}$ is $\ehttwo{1}{\polylog(N)}$.
For convenience we present here a brief list of examples of such states
from~\cite{vandennest2011} and refer the reader to their paper for details:
\begin{itemize}
	\item Product states of qubits (we allow also qudits).
	\item Stabilizer states.
	\item States of the form
		$\ket{\psi} = \frac{1}{\sqrt{N}} \sum_{x=0}^{N-1} e^{i\theta(x)} \ket{x}$
		where $e^{i \theta(x)}$ for a given $x$ can be computed in $\polylog(N)$ time.
	\item Matrix product states of polynomial bond dimension.
	\item States obtained by applying a polynomial sized nearest-neighbor matchgate circuit
		to a computational basis state.
	\item States obtained by applying the quantum Fourier transform to a product state.
	\item The output of quantum circuits with logarithmically scaling tree-width acting on
		product input states.
\end{itemize}
We present a list of examples of $\epstwo{\bval}{\fval}$ operators with $\bval$ small and
$\fval \le \polylog(N)$.
All proofs are in appendix~\ref{sec:circuits_proofs}.
\begin{itemize}
	\item If $A$ is $\epsp{\bval}{\fval}$ then
		$I \ot \cdots \ot I \ot A \ot I \ot \cdots \ot I$ is
		$\epsp{\bval}{\max\{\fval, \log^2(N)\}}$
		(\cref{thm:subsystem_eps}).
		In other words, EPS operations on subsystems are EPS\@.
		The $\log^2(N)$ is due to the amount of time needed to
		convert indices of $I \ot \cdots \ot I \ot A \ot I \ot \cdots \ot I$ to
		indices of $A$.
	\item Any operator $A$ on a constant number of qubits or qudits is
		$\epstwo{\ifmax(A)}{1}$ where $\ifmax(A)=\twonorm{\aA}$ is the interference producing
		capacity of $A$.
		In other words, the simulation cost due to such an operator is equal to the fourth
		power of its interference producing capacity (because of the $b_t^4$ term
		in~\eqref{eq:circuitsim_cost}).
	\item If $A$ is an $M \times M$ matrix with maximum singular value bounded by 1 (e.g.\ a
		unitary, projector, or POVM element) then $\ifmax(A) \le \sqrt{M}$.  This inequality is
		saturated when $A$ is a unitary with rows forming a basis mutually unbiased to the
		computational basis (e.g.\ a Hadamard or Fourier transform).
	\item In terms of query complexity rather than time complexity, any operator $A$
		not depending on an oracle is $\epstwo{\ifmax(A)}{0}$ by
		\cref{thm:best_eps_qnorm}.
		Oracles themselves are $\epstwo{1}{1}$.
	\item Efficiently computable sparse matrices as defined in \cite{vandennest2011}
		are $\epsp{\polylog(N)}{\polylog(N)}$ (\cref{thm:ecs_is_eps}).  These include:
		\begin{itemize}
			\item Permutation matrices are $\epsp{1}{\fval}$ as long as the
				permutation and its inverse can be computed in time $\bigomic(\fval)$.
			\item Diagonal unitary matrices are $\epsp{1}{\fval}$ as long as the phases can be
				computed in time $\bigomic(\fval)$.
			\item Pauli matrices are $\epsp{1}{1}$.
		\end{itemize}
	\item Grover reflections $I-2(\ket{+}\bra{+})^{\ot n}$ are $\epstwo{3}{n}$
		(\cref{thm:grover_eps}).
	\item The Haar wavelet transform on $n$ qubits (\cref{def:haar}) is $\epstwo{\sqrt{n+1}}{n}$
		(\cref{thm:haar_eps}).
	\item One dimensional projectors onto CT states are $\epstwo{1}{\polylog(N)}$
		since CT dyads are $\ehttwo{1}{\polylog(N)}$ and EHT operators are EPS
		(\cref{thm:basic_math}).
	\item Rank $r$ projectors onto spaces defined by CT states are $\epstwo{r}{\polylog(N)}$
		(by applying the sum rule \cref{thm:eps_sumprodexp}\eqref{part:eps_sum_simple} to the
		previous item).
	\item Block diagonal matrices where each block is $\epsp{\bval}{\fval}$ are
		$\epsp{\bval}{\fval}$, as long as matrix indices can be converted to/from block indices in
		time $\bigomic(\fval)$ (\cref{thm:block_diag_eps}).
	\item As a special case of block diagonal matrices, projectors of the form $\sum_x
		\ket{x}\bra{x} \ot \ket{\phi_x}\bra{\phi_x}$, where the $\ket{x}$ are computational
		basis states and each $\ket{\phi_x}$ is a CT state, are $\epstwo{1}{\polylog(N)}$.
		Example: given an even number of qubits, measure half of the qubits in the
		computational basis to get $x$, measure
		the other half in the Fourier basis to get $y$, return true if $y=g(x)$ for some
		function $g$ computable in $\polylog(N)$ time (\cref{thm:low_rank_fourier_eps}).  In
		this example, $\ket{\phi_x} = F\ket{g(x)}$.
		The measurement depicted in \cref{fig:big_circuit} is of this form.
\end{itemize}

\subsection{Simulation techniques}

As a matter of convenience, we present a theorem that is essentially a direct corollary of
\cref{thm:chain_sim}, but written in the language of quantum circuits.

\begin{theorem}
	\label{thm:circuitsim}
	Consider a quantum circuit using states of dimension $N$ (i.e.\ $\log_2(N)$ qubits
	or $\log_d(N)$ qudits).
	Let $\ket{\psi}$ be a computationally tractable (CT) state.
	For $t \in \{1,\dotsc,T\}$ let $U^{(t)}$ be an $\epstwo{\bval_t}{\polylog(N)}$ unitary
	and let $M$ be an $\epstwo{\bval_M}{\polylog(N)}$ Hermitian observable.
	It is possible, with probability less than $\delta>0$ of exceeding the error bound, to
	estimate
	\begin{equation}
		\label{eq:circuitsim_braket}
		\braopket{\psi}{U^{(1)\dag} \dotsm U^{(T)\dag} M U^{(T)} \dotsm U^{(1)}}{\psi}
	\end{equation}
	to within additive error $\epsilon>0$ in average time
	\begin{equation}
		\label{eq:circuitsim_cost}
		\bigomic\left( T \log(\delta^{-1}) \epsilon^{-2}
			\polylog(N) \bval_M^2 \prod_{t=1}^T \bval_t^4 \right).
	\end{equation}
	In particular, if $\bval_M$, $\prod_t \bval_t$, and $T$ are $\polylog(N)$, and if
	$\delta$ and $\epsilon$ are constant, then the simulation time is $\polylog(N)$ on average.
\end{theorem}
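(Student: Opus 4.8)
The plan is to read this off as a direct corollary of \cref{thm:chain_sim}, the only work being to rewrite the circuit expectation value as a matrix-product trace and then bookkeep how each ingredient contributes to the two cost parameters $\bval$ and $\fval$.

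First I would write \eqref{eq:circuitsim_braket} as $\Tr\{A^{(1)} \dotsm A^{(S)} \sigma\}$ with $S = 2T+1$, taking $\sigma = \ket{\psi}\bra{\psi}$, $A^{(t)} = U^{(t)\dag}$ for $t \in \{1,\dotsc,T\}$, $A^{(T+1)} = M$, and $A^{(T+1+s)} = U^{(T+1-s)}$ for $s \in \{1,\dotsc,T\}$. Since $\ket{\psi}$ is CT, the remark following \cref{def:ct} (an application of \cref{thm:dyads}) gives that $\sigma$ is $\ehttwo{1}{\polylog(N)}$, so $\bval_\sigma = 1$ and $\fval_\sigma = \polylog(N)$. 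Each $U^{(t)}$ is $\epstwo{\bval_t}{\polylog(N)}$ by hypothesis, and by \cref{thm:basic_math} so is its adjoint $U^{(t)\dag}$, with unchanged parameters; and $M$ is $\epstwo{\bval_M}{\polylog(N)}$ by hypothesis. Thus all of $A^{(1)},\dotsc,A^{(S)}$ are EPS and $\sigma$ is EHT, which is exactly the input \cref{thm:chain_sim} requires.

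Next I would substitute into \cref{thm:chain_sim}. The product of the EPS costs over the $2T+1$ operators is $\prod_t \bval_t = \bval_M \prod_{t=1}^{T} \bval_t^2$, because each unitary $U^{(t)}$ supplies a factor $\bval_t$ twice (once as $U^{(t)}$, once as $U^{(t)\dag}$) while $M$ supplies $\bval_M$ once; hence $\bval = \bval_\sigma \prod_t \bval_t = \bval_M \prod_{t=1}^{T} \bval_t^2$ and $\bval^2 = \bval_M^2 \prod_{t=1}^{T} \bval_t^4$, which is the source of the fourth powers in \eqref{eq:circuitsim_cost}. The combined $\fval$ is $\fval_\sigma + \sum_t \fval_t = \bigomic(T \polylog(N))$, since there are $2T+2$ objects each costing $\polylog(N)$; this is where the overall factor of $T$ enters. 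Plugging $\bval$ and $\fval$ into the $\bigomic(\log(\delta^{-1}) \epsilon^{-2} \bval^2 \fval)$ runtime of \cref{thm:chain_sim} reproduces \eqref{eq:circuitsim_cost} verbatim, and the last sentence follows because a product of finitely many $\polylog(N)$ quantities is again $\polylog(N)$.

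I do not anticipate a genuine obstacle: the argument is pure substitution. The only two points requiring a moment's care are that $U^{(t)\dag}$ inherits the EPS property with the \emph{same} $\bval_t$ and $\fval_t$ (from \cref{thm:basic_math}), and that the double appearance of each $U^{(t)}$ combined with the squaring of $\bval$ in the Chernoff bound is what turns $\bval_t$ into $\bval_t^4$ in the final cost rather than $\bval_t^2$.
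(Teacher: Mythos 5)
Your proposal is correct and is exactly the paper's intended argument: the paper explicitly introduces \cref{thm:circuitsim} as ``essentially a direct corollary of \cref{thm:chain_sim}, but written in the language of quantum circuits'' and never spells out the substitution you have done, so your bookkeeping of $\sigma=\ket{\psi}\bra{\psi}$ via \cref{thm:dyads}, the $2T{+}1$ EPS factors (adjoints preserved by \cref{thm:basic_math}), $\bval = \bval_M\prod_t \bval_t^2$, and $\fval = \bigomic(T\,\polylog(N))$ is precisely the intended reading.
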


Note that in~\eqref{eq:circuitsim_cost} each unitary $U^{(t)}$ incurs a cost of $\bval_t^4$
rather than $\bval_t^2$ since it appears twice in~\eqref{eq:circuitsim_braket}.
If $M$ is a rank one projector onto a CT state, $M = \ket{\phi}\bra{\phi}$, then it is much
more efficient to compute~\eqref{eq:circuitsim_braket} as the absolute square of
\begin{equation}
	\Tr\{ \ket{\psi}\bra{\phi} U^{(T)} \dotsm U^{(1)} \}.
	\label{eq:circuitsim_half}
\end{equation}
Since $\ket{\psi}\bra{\phi}$ is $\ehttwo{1}{\polylog(N)}$, and since each unitary only occurs
once, \cref{thm:chain_sim} gives that this expression can be estimated in average time
\begin{equation}
	\bigomic\left( T \log(\delta^{-1}) \epsilon^{-2} \polylog(N) \prod_{t=1}^T \bval_t^2 \right),
\end{equation}
which is much better than~\eqref{eq:circuitsim_cost}.
If $M$ is a low rank projector, the same trick can be used by decomposing $M$ as the
sum of rank one projectors and computing each resulting term individually.
The complexity of such a technique will scale proportional to the rank of $M$.

\Cref{thm:circuitsim} is just an application of \cref{thm:chain_sim} with $p=q=2$.
One may wonder whether other values of $p,q$ would lead to a lower simulation cost.
Ignore for the moment the efficient sampling
conditions~\eqref{cond:eps_samplr} and~\eqref{cond:eps_samprl}
of \cref{def:eps} and \cref{def:eht}.
When estimating~\eqref{eq:circuitsim_braket},
the optimal probability distributions give (by \cref{thm:best_eps_qnorm})
\begin{align}
	\bval :&= \bval_\psi \bval_{U^{(1)}} \dotsm \bval_{U^{(T)}} \bval_M
		\bval_{U^{(T)}} \dotsm \bval_{U^{(1)}} \bval_\psi
	\\ &=
		\pnorm{\psi}
		\qnorm{\matabs{U}^{(1)\dag}}
		\dotsm
		\qnorm{\matabs{U}^{(T)\dag}}
		\qnorm{\matabs{M}}
		\qnorm{\matabs{U}^{(T)}}
		\dotsm
		\qnorm{\matabs{U}^{(1)}}
		\qnorm{\psi}.
\end{align}
This achieves its minimum value at $p=q=2$, since
\begin{align}
	\bval &=
		\pnorm{\psi}
		\qnorm{\psi}
		\pnorm{\matabs{U}^{(1)}}
		\qnorm{\matabs{U}^{(1)}}
		\dotsm
		\pnorm{\matabs{U}^{(T)}}
		\qnorm{\matabs{U}^{(T)}}
		\left(
			\pnorm{\matabs{M}}
			\qnorm{\matabs{M}}
		\right)^{1/2}
		&& \mbox{(using $\qnorm{A^\dag} = \pnorm{A}$)}
	\\ &\ge
		\braket{\psi}{\psi}
		\pnorm{\matabs{U}^{(1)}}
		\qnorm{\matabs{U}^{(1)}}
		\dotsm
		\pnorm{\matabs{U}^{(T)}}
		\qnorm{\matabs{U}^{(T)}}
		\left(
			\pnorm{\matabs{M}}
			\qnorm{\matabs{M}}
		\right)^{1/2}
		&& \mbox{(H{\"o}lder's inequality)}
	\\ &\ge
		\braket{\psi}{\psi}
		\twonorm{\matabs{U}^{(1)}}^2
		\dotsm
		\twonorm{\matabs{U}^{(T)}}^2
		\twonorm{\matabs{M}}
		&& \mbox{(Riesz-Thorin theorem)}
	\\ &=
		\twonorm{\psi}
		\twonorm{\matabs{U}^{(1)\dag}}
		\dotsm
		\twonorm{\matabs{U}^{(T)\dag}}
		\twonorm{\matabs{M}}
		\twonorm{\matabs{U}^{(T)}}
		\dotsm
		\twonorm{\matabs{U}^{(1)}}
		\twonorm{\psi}.
\end{align}
On the other hand, when estimating an expression of the form~\eqref{eq:circuitsim_half}, each
unitary is no longer repeated twice and Riesz-Thorin cannot be applied.  In this case the
minimum value of $\bval$ does not necessarily occur at $p=2$.

Certain algorithms, such as Shor's algorithm, consist of a quantum circuit terminating in a
many-outcome measurement (e.g.\ measurement in the computational basis of several different
qubits) which is then post-processed by a classical computer to produce a final result.
This does not immediately fit into our scheme of estimating expectation values.
However, in the case where the final result is a two-outcome yes/no answer (e.g.\ ``does $N$
have a prime factor in the range $[a, b]$''), the final measurement and classical
post-processing can be combined into a single collective projector or POVM element as follows.
Suppose the final state is measured using a POVM $\{F_i\}$.  A classical post-processing
step then inspects the measurement outcome $i$ and returns ``yes'' or ``no''.  Denote by $R$
the set of measurement outcomes that will result in ``yes''.  The classical post-processing can
be absorbed into the measurement, resulting in the POVM element $F' = \sum_{i \in R} F_i$.
The expectation value of $F'$ gives the probability that a measurement of $\{F_i\}$ would
yield ``yes'' after post-processing.

In some cases $F'$ may be efficiently simulated, a (somewhat contrived) example being the final
stage of the circuit of \cref{fig:big_circuit}.
Note that this example involves a Fourier transform, which by itself cannot be efficiently
simulated by our technique since it has large interference producing capacity.
However, when the Fourier transform is followed by the particular classical
post-processing depicted in \cref{fig:big_circuit}, the resulting composite operator
\textit{can} be efficiently simulated (\cref{thm:low_rank_fourier_eps}).
Shor's algorithm also has a Fourier transform followed by classical post-processing,
however in that case the composite operator (Fourier transform followed by
post-processing) has large interference producing capacity and so \textit{cannot} be
efficiently simulated (by our algorithm).

\subsection{Circuits that our technique can't efficiently simulate}

Many examples of efficiently simulatable circuits can be constructed, but it is probably more
enlightening to instead discuss examples of circuits that cannot be efficiently simulated
using our technique.
Since the efficiency of our technique depends upon choice of basis and on
choice of representation (see \cref{sec:wigner}), a circuit which our technique cannot
simulate efficiently in one basis may be efficiently simulatable in another basis.
In this section we choose to focus only on the computational basis.
That being said, most of the examples in this section have been proved (relative to an
oracle) to have no efficient classical solution.

We cannot efficiently simulate Shor's algorithm.  The reason for this
is that the Fourier transform has high interference producing capacity: the Fourier transform
$F$ on $n$ qubits has $\ifmax(F) = 2^{n/2}$.
Replacing the Fourier transform by the Haar wavelet transform (\cref{fig:shor}) yields a
circuit that can be efficiently simulated, since the Haar transform has low interference
producing capacity, $\ifmax(G_n)=\sqrt{n+1}$.
Note that this circuit no longer factors numbers (and probably does nothing at all useful).
The Fourier and Haar transforms play similar roles in classical signal processing,
with the latter providing spatially localized rather than global information for the high
frequency components.
The fact that replacing the Fourier transform enables efficient classical simulation points to
the Fourier transform as being the source of the quantum speedup in Shor's algorithm (for a
contrasting point of view, see~\cite{arxiv:quant-ph/0611156,PhysRevA.76.060302}).

\begin{figure}
	\begin{center}
		\includegraphics{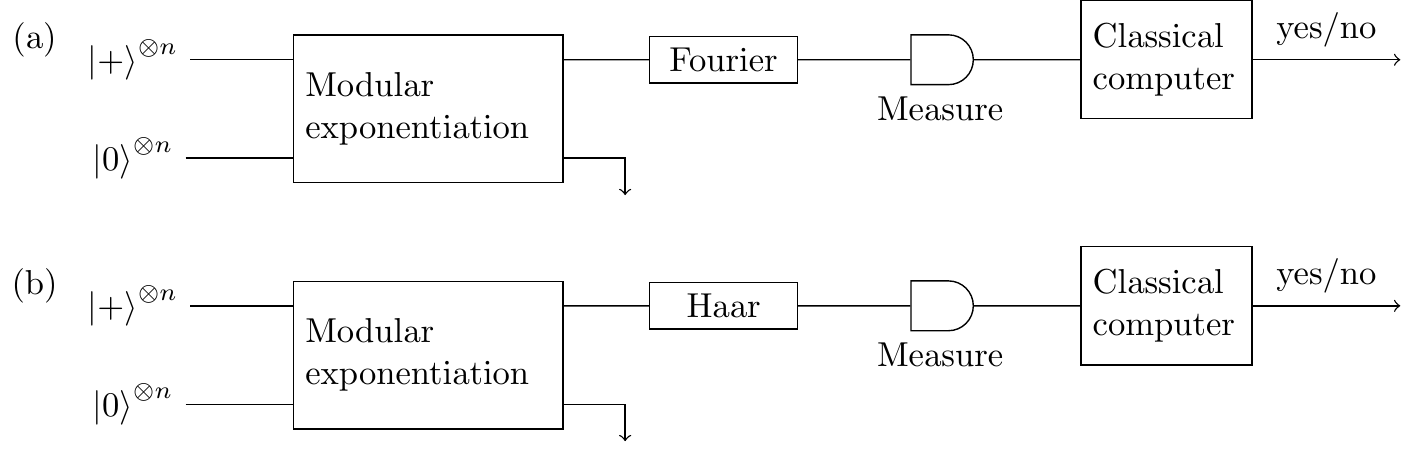}
	\end{center}
	\caption{
		(a) A depiction of the decisional version of Shor's algorithm, which outputs ``yes'' if
		there is a prime factor within some given range.
		(b) The Haar wavelet transform (\cref{def:haar}) plays a similar role as the
		Fourier transform in classical signal processing.  However, substituting the Haar
		transform for the Fourier transform in Shor's algorithm yields a circuit that can be
		efficiently simulated on a classical computer.  Note that the resulting circuit won't
		factor numbers, and in fact probably has no practical use.
	}
	\label{fig:shor}
\end{figure}


Deutsch-Jozsa provides an oracle relative to which deterministic quantum computation is more
powerful than deterministic classical computation.
Our algorithm can efficiently simulate the Deutsch-Jozsa algorithm, but not
deterministically.\footnote{
	This was discussed in~\cite{vandennest2011}, which our paper extends.  However, we mention
	it here for completeness.
}
The Deutsch-Jozsa algorithm
consists of an initial CT state $\ket{+}^{\ot n} \ot \ket{-}$,
acted upon by an oracle $\sum_{xy} \ket{x}\bra{x} \ot \ket{y + g(x)}\bra{y}$, followed by
a rank-one projective measurement onto the state $\ket{+}^{\ot n} \ot \ket{-}$.
The initial state is $\ehttwo{1}{n}$ and the operators are
$\epstwo{1}{n}$, so we can efficiently simulate this algorithm.
However, the simulation will always have a small chance of error due to the $\delta$ in
\cref{thm:circuitsim}.

Our simulation algorithm performs very poorly when applied to Grover's algorithm.  Each
iteration of Grover's algorithm consists of an oracle query followed by a Grover reflection.
These operations have low interference producing capacity: 1 for the oracle and just under 3
for the Grover reflection.  However, our algorithm is
exponentially slow in the circuit length, due to the $\prod_t \bval_t^4$ factor
in~\eqref{eq:circuitsim_cost}.  Since the Grover reflection is used $\Theta(\sqrt{N})$
times, the simulation would run in time $\exp(\Theta(\sqrt{N}))$.  Even though each iteration
of Grover's algorithm produces small interference, the total interference of the whole circuit,
by \cref{def:intf_circuit}, is $\exp(\Theta(\sqrt{N}))$.

In \cite{Childs2003} a quantum random walk is presented that provides
an exponential speedup over any possible classical algorithm for the graph traversal problem.
The walk is carried out by evolving the initial state with a Hamiltonian that is defined in
terms of an oracle.
We cannot efficiently simulate this algorithm for the same reason that we cannot
efficiently simulate Grover: the runtime of the quantum algorithm increases with the problem
size, and our simulation must pay an exponentially large penalty for this due to the $\prod_t
\bval_t^4$ factor in~\eqref{eq:circuitsim_cost}.
On the other hand, short time/low energy Hamiltonian evolutions can be efficiently simulated
by our technique.
In particular, \cref{thm:eps_sumprodexp}\eqref{part:eps_exp} gives that if $H$ is
$\epsp{\bval}{\fval}$ then $e^{iHt}$ is $\epsp{e^{\bval t}}{\bval t \fval}$.
In terms of query complexity the Hamiltonian in the algorithm of \cite{Childs2003} is
$\epstwo{\bigomic(1)}{1}$, so we could feasibly simulate $e^{iHt}$ for small $t$.
However, their algorithm has $t=\Theta(n^4)$, so our simulation would have query
complexity $e^{\Theta(n^4)}$, making it unfeasibly slow.

\section{Applications and discussion}
\label{sec:app}

\subsection{Wigner representation}
\label{sec:wigner}

An $N \times N$ matrix can also be viewed as an $N^2$ dimensional vector, so we can write
for instance $\braket{\myvec{M}}{\myvec{\rho}}$ in place of $\Tr\{M \rho\}$.
Superoperators become $N^2 \times N^2$ matrices in this representation, and we can write
$\braopket{\myvec{M}}{\myvec{V}\myvec{U}}{\myvec{\rho}} = \Tr\{ M VU \rho U^\dag V^\dag \}$.
Simulating a quantum circuit using this representation offers an alternative to the customary
representation that was the focus of section~\ref{sec:circuits}.

Any basis can be used (even ones that are not orthonormal), although some choices of basis may
yield more efficient simulation.  One notable choice is given by the discrete Wigner
representation, which is only defined for qudits of odd dimension.
We will not describe the details here but refer the reader
to~\cite{1367-2630-14-11-113011,PhysRevLett.109.230503}
in which it is shown that in the discrete Wigner representation stabilizer
states become probability distributions and Clifford operations become permutation
matrices.

It was shown independently in~\cite{1367-2630-15-1-013037,PhysRevLett.109.230503}
that when operations in the Wigner representation are given by nonnegative
matrices, such matrices are stochastic and therefore can be efficiently simulated.
Our algorithm, taking $p=\infty$ and $q=1$, extends this result by also allowing states and
operations in which the Wigner representation contains a small quantity of negative
values, although ours is weaker in that it only computes expectation values rather than
allowing sampling of a many-outcome measurement.
With $q=1$ rather than $q=2$, the difficulty of simulating an operation is given not by
$\ifmax(A) = \twonorm{\aA}$ but rather by $\onenorm{\aA}=\onenorm{A}$, the maximum
absolute column sum.
In cases where the matrix in the Wigner representation is nonnegative, the matrix will be
left-stochastic and $\onenorm{A}=1$, such matrices will not increase the number of samples
needed.  If there are some negative values then $\onenorm{A}$ will be larger.

After the present work was completed, the
quantity $\log \onenorm{\myvec{\rho}}$ was investigated in~\cite{1367-2630-16-1-013009}.
This quantity was termed ``mana'' and was shown to be monotone under Clifford operations,
and to be monotone on average under stabilizer measurements, thus providing bounds on
magic state distillation by Clifford circuits.
Given the results of the present paper, it should perhaps make sense to extend the
concept of mana also to quantum operations, defining their mana to be
$\log \onenorm{A}$.
Then Clifford operations have zero mana and in general the following monotonicity relation
is satisfied:
\begin{align}
	\log \onenorm{A \myvec{\rho}} \le \log( \onenorm{A} \onenorm{\myvec{\rho}} )
	= \log \onenorm{A} + \log \onenorm{\myvec{\rho}}.
\end{align}
So $\log \onenorm{A}$, which is the Wigner representation analogue of the log of
interference producing capacity, bounds the amount by which the operator $A$ may increase
the mana of a state.
For each $A$ there will be some $\myvec{\rho}$ that saturates this inequality
(by the definition of operator norm), but it is not clear whether this would correspond to
a physical state.

Stated in this language, \cref{thm:chain_sim}, applied in the Wigner representation, gives
that quantum circuits may be efficiently simulated classically in time polynomial in
$\infnorm{\myvec{M}}$ (where $\myvec{M}$ is the final measurement) and exponential in the sum of
the mana of the initial state and the mana of each operation.
Specifically, write $\braopket{\myvec{M}}{\myvec{V}\myvec{U}}{\myvec{\rho}}
= \Tr\{ \ket{\myvec{\rho}}\bra{\myvec{M}} \myvec{V}\myvec{U}\}$.
Then, ignoring for the moment conditions~\eqref{cond:eps_samplr}-\eqref{cond:eps_samprl}
of \cref{def:eps} and~\eqref{cond:eht_samplr}-\eqref{cond:eht_samprl}
of \cref{def:eht}, we have (by \cref{thm:best_eps_qnorm}) that
$\ket{\myvec{\rho}}\bra{\myvec{M}}$ is
$\ehtinf{\onenorm{\myvec{\rho}} \infnorm{\myvec{M}}}{f}$
and $\myvec{U}$ is $\epsinf{\onenorm{\myvec{U}}}{f}$ (similarly for $\myvec{V}$).
So by \cref{thm:chain_sim} this can be simulated in time
\begin{align}
	\bigomic(\log(\delta^{-1}) \epsilon^{-2} \infnorm{\myvec{M}}
	\onenorm{\myvec{U}} \onenorm{\myvec{V}}
	\onenorm{\myvec{\rho}} \fval).
\end{align}
This complements the result of~\cite{1367-2630-16-1-013009} which showed mana to be a
necessary resource for magic state distillation but did not show that circuits of low
total mana have no quantum speedup (although the zero mana case was treated
in~\cite{1367-2630-15-1-013037,PhysRevLett.109.230503}).

\subsection{Communication complexity}
\label{subsec:comm}

Consider a scenario in which two parties, Alice and Bob, are to cooperatively evaluate a
boolean function.  Specifically, suppose that Alice receives input $x$, Bob receives input
$y$, and they are to evaluate $g(x,y)$ where the function $g : X \times Y \to \{0,1\}$ is known
to the two parties ahead of time.
They must provide the correct answer with probability at least $2/3$.
For non-trivial functions this will require communication, which can be either quantum or
classical.
The communication complexity of $g$ is the number of bits of communication required by the
optimal protocol, with no regard for the amount of time Alice and Bob spend on local
computations.
For some problems quantum communication is exponentially more efficient than classical
communication~\cite{Regev:2011:QOC:1993636.1993642}.

\begin{figure}
	\begin{center}
		\includegraphics{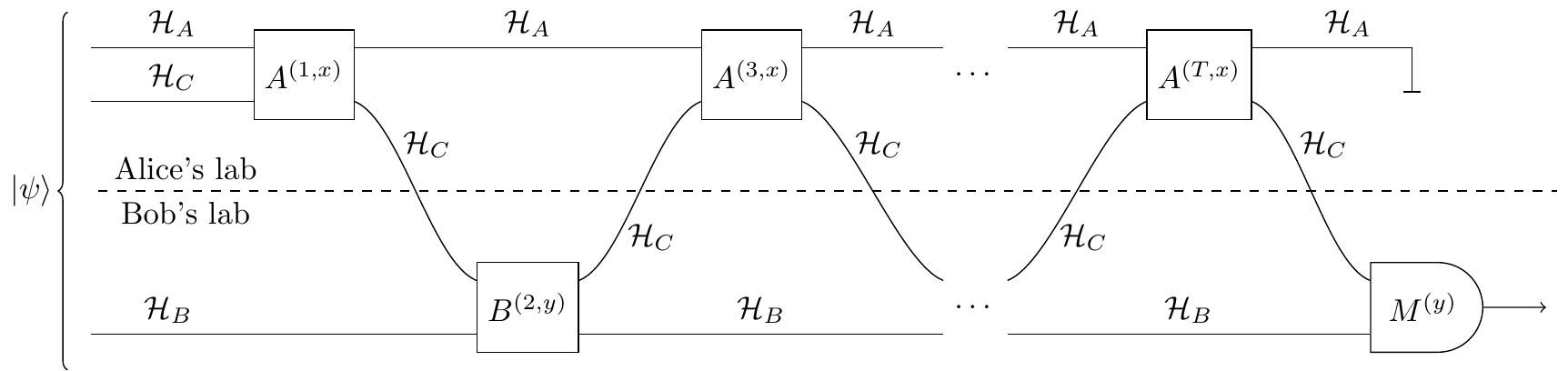}
	\end{center}
	\caption{
		A quantum communication protocol.  The expectation value of the final measurement
		is given by~\eqref{eq:comm_braket}.
	}
	\label{fig:comm}
\end{figure}

Consider a quantum communication protocol as depicted by \cref{fig:comm}.
The initial state, denoted $\ket{\psi}$, is a pure (but possibly entangled) state on three
subsystems $\hilb{A} \ot \hilb{B} \ot \hilb{C}$.
Subsystems $\hilb{A}$ and $\hilb{B}$ are owned by Alice and Bob respectively, and subsystem
$\hilb{C}$ is passed between Alice and Bob through a noiseless quantum channel for each round
of communication.
Alice begins by performing a unitary operation $A^{(1,x)}$, which can depend on her input $x$,
on subsystems $\hilb{A} \ot \hilb{C}$.  She then sends the
$\hilb{C}$ subsystem to Bob, who performs a unitary operation $B^{(2,y)}$, which can depend
on his input $y$, on subsystems
$\hilb{B} \ot \hilb{C}$.
Bob sends $\hilb{C}$ back to Alice who then performs $A^{(3,x)}$ and so on.
Finally, the last party (say, Bob) performs a two outcome projective (or POVM) measurement
$\{M^{(y)}, I-M^{(y)}\}$, which can depend on $y$, on subsystems $\hilb{B} \ot \hilb{C}$ and
reports the outcome.
The expectation value of the final measurement is given by
\begin{equation}
	\braopket{\psi}{A^{(1,x)\dag} B^{(2,y)\dag} A^{(3,x)\dag} \dotsm A^{(T,x)\dag} M^{(y)}
		A^{(T,x)} \dotsm A^{(3,x)} B^{(2,y)} A^{(1,x)}}{\psi}
	\label{eq:comm_braket}
\end{equation}
and must be $\le 1/3$ if $g(x,y)=0$ and $\ge 2/3$ if $g(x,y)=1$.
The communication complexity of the protocol is the number of qubits transmitted,
$T \log(\dim(\hilb{C}))$ where $T$ is the number of rounds of communication.
The dimensionality of the subsystems $\hilb{A}$ and $\hilb{B}$ is not taken into consideration.

The algorithm of this paper can be adapted to provide classical communication simulations of
quantum communication protocols, in the case where the quantum protocols are built using
operators having low interference producing capacity, and making a certain assumption regarding
the initial state $\ket{\psi}$.
Since the expectation value of the final measurement in the quantum protocol will be either
$\le 1/3$ or $\ge 2/3$, a classical simulation of the quantum protocol can with probability
$\ge 2/3$ determine $g(x,y)$ if it can, with chance of error $\delta \le 1/3$, estimate the
expectation value of the quantum protocol to within additive error $\epsilon < 1/6$.
This is exactly the type of estimation provided by the algorithm of this paper, we need only
adapt it to the communication scenario.

The algorithm presented in \cref{subsec:sim_general_pq} involves computing
$\bigomic(\bmax^2)$ path samples,\footnote{
	Specifically, $\bigomic(\log(\delta^{-1}) \epsilon^{-2} \bmax^2)$ samples are needed.
	However, in order to achieve the goal of guessing $g(x,y)$ with probability $\ge 2/3$
	it suffices to set constant $\delta < 1/3$ and $\epsilon < 1/6$.
}
each of which require evaluation of a left-to-right or a right-to-left Markov chain.
Crucially, each transition operator in these chains is defined solely in terms of a single
operator of~\eqref{eq:comm_braket}.  Therefore, each transition can be computed by Alice alone
(for the $A^{(t,x)}$ operators) or by Bob alone (for the $B^{(t,y)}$ and $M^{(y)}$ operators).
The state space of the Markov chains consists of indices corresponding to computational basis
states of $\hilb{A} \ot \hilb{B} \ot \hilb{C}$, so the
indices can be thought of as triples $(i_A, i_B, i_C)$ of indices over
$\hilb{A}$, $\hilb{B}$, and $\hilb{C}$.
Since Alice's operators $A^{(t,x)}$ act only on subsystems $\hilb{A} \ot \hilb{C}$, the
corresponding transition operators in the Markov chain involve only indices $i_A$ and $i_C$.
Similarly, Bob's transition operators involve only $i_B$ and $i_C$.
Therefore Alice and Bob need to communicate only the index $i_C$ for each transition of the Markov
chain.

Also needed is selection of the initial index according to the probability distribution
$\Pdist{i_A, i_B, i_C} = \abs{\braket{i_A,i_B,i_C}{\psi}}^2$ (with Alice getting $(i_A,i_C)$
and Bob getting $i_B$), as well as evaluation of
$\braket{i_A,i_B,i_C}{\psi}$ for a given $(i_A, i_B, i_C)$ triple (where Alice knows
$(i_A,i_C)$ and Bob knows $i_B$).
If the initial state is a product state, $\ket{\psi} = \ket{\psi_{AC}} \ot \ket{\psi_B}$, these
tasks are easily accomplished using no communication.
In fact, even if $\ket{\psi}$ is entangled between Alice and Bob these two tasks can in some
cases be accomplished using only a small amount of communication.
Alice and Bob both know $\ket{\psi}$ (since it does not depend on $x$ or $y$), so they can
individually sample from $\Pdist{i_A, i_B, i_C}$.  If Alice and Bob are granted access to
shared randomness (a.k.a.\ public coins), they can sample from $\Pdist{i_A, i_B, i_C}$ in a
synchronous way (i.e.\ they both get the same outcome).
Computation of $\braket{i_A,i_B,i_C}{\psi}$ for a given $(i_A, i_B, i_C)$ triple, with
$(i_A,i_C)$ known to Alice and $i_B$ known to Bob, is trickier and how much communication is
needed depends on $\ket{\psi}$.
For example, let $\hilb{A} = \hilb{A'} \ot \hilb{A''}$ and
$\hilb{B} = \hilb{B'} \ot \hilb{B''}$ and consider an initial state of the form
\begin{equation}
	\label{eq:initial_entangled}
	\ket{\psi} = \ket{\psi_{A'}} \ot \ket{\psi_{B'}} \ot \ket{\psi_C} \ot
		\sum_i \alpha_i \ket{i}_{A''} \ot \ket{i}_{B''}
\end{equation}
with $\ket{i}_{A''}$ and $\ket{i}_{B''}$ denoting computational basis vectors.
This is the most common type of initial state for quantum protocols that make use of shared
entanglement.
Then
\begin{equation}
	\braket{i_A,i_B,i_C}{\psi} =
		\braket{i_{A'}}{\psi_{A'}}
		\braket{i_{B'}}{\psi_{B'}}
		\braket{i_{C }}{\psi_{C }}
		\alpha_{i_{A''}} \delta(i_{A''}, i_{B''})
\end{equation}
where $\delta$ is the Kronecker delta.
This can be computed using shared randomness and $\bigomic(1)$ communication by making use of a
bounded error protocol for testing equality
of $i_{A''}$ and $i_{B''}$ (example 3.13 of~\cite{KushilevitzNisan200611}).

Since each unitary appears twice in~\eqref{eq:comm_braket},
evaluation of the entire Markov chain is accomplished with twice as much
communication as the classical protocol, or $2 T \log(\dim(\hilb{C}))$ bits.
The algorithm also requires computing the amplitude associated with the path, as well as the
probability of the path.
However, this requires only transmission of $\bigomic(T)$ scalar quantities from Alice to Bob,
using $\bigomic(T)$ bits of communication.\footnote{
	Actually a careful look shows that only $\bigomic(1)$ communication is needed.  Alice can
	locally multiply her transition probabilities and the amplitudes for her operators for the
	given path and report these $O(1)$ values to Bob who is then able to complete the
	computation.
}
The total classical communication complexity of this simulation protocol is therefore
$\bigomic(\bmax^2 T \log[\dim(\hilb{C})])$, a factor $O(\bmax^2)$ greater than that of
the quantum protocol.
Using the optimal probability distributions defined in appendix~\ref{sec:uAv}, $\bmax$ is upper
bounded by the product of the interference producing capacities of the operators
in~\eqref{eq:comm_braket}.  The communication complexity of the classical simulation is then
\begin{equation}
	\bigomic\left(T \log[\dim(\hilb{C})] \max_{x,y} \left\{
		\twonorm{\matabs{A}^{(1,x)}}^4
		\cdot
		\twonorm{\matabs{B}^{(2,y)}}^4
		\cdot
		\twonorm{\matabs{A}^{(3,x)}}^4
		\dotsm
		\twonorm{\matabs{A}^{(T,x)}}^4
		\cdot
		\twonorm{\matabs{M}^{(y)}}^2
	\right\}\right).
\end{equation}

The consequence of this construction is that any quantum communication protocol exhibiting
superpolynomial advantage in communication complexity over any classical protocol must have
a superpolynomial value of $\bmax$ (i.e.\ the product of the interference producing capacities
of the quantum operators must be high) or must make use of an initial state not of the
form~\eqref{eq:initial_entangled}.
There is, however, an interesting caveat to this claim.
Due to the fact that each unitary, as well as the initial state, appears twice
in~\eqref{eq:comm_braket}, our classical simulation will require twice as many communication
rounds as the quantum protocol.\footnote{
	Note that independent evaluations of the Markov chain can be run in parallel, otherwise the
	number of rounds would scale as $\bigomic(\bmax^2)$.
}
Our technique therefore does not apply if one limits the number of rounds.
For example, the quantum protocol for the \textsc{Perm-Invariance} problem described
in~\cite{montanaro2011} has $\bmax = 1$ yet is exponentially more efficient than any one-round
classical protocol.

There is a way to avoid the doubling of the number of rounds of communication, but at a price.
Consider a one-round quantum protocol in which Alice sends a state $\ket{\psi}$ and Bob measures
a projector (or POVM element) $M$.
The expectation value is $\braopket{\psi}{M}{\psi} = \Tr\{ \ket{\psi}\bra{\psi} M \}$.
As described in the previous subsection, the state $\ket{\psi}\bra{\psi}$ and operator $M$ can
be vectorized to give
$\braket{\myvec{\rho}}{\myvec{M}} = \Tr\{ \ket{\psi}\bra{\psi} M \}$.
By taking $p=1$ and $q=\infty$ instead of $p=q=2$ our algorithm can estimate
$\braket{\myvec{\rho}}{\myvec{M}}$ using only a left-to-right Markov chain, thus requiring only
a single round of communication, from Alice to Bob.
However, since $p=1$ and $q=\infty$, the number of bits communicated is
$\bigomic(\onenorm{\myvec{\rho}}^2 \infnorm{\myvec{M}}^2 n)$ with $n$ being the number of
qubits in $\ket{\psi}$.
The reason we can't efficiently simulate the quantum protocol of~\cite{montanaro2011}
using this technique is that $\onenorm{\myvec{\rho}}$ is exponentially large.
Interestingly,~\cite{Kremer:1995:ROC:225058.225277} provides a one-round protocol that can
estimate $\braket{\myvec{\rho}}{\myvec{M}}$ using
$\bigomic(\twonorm{\myvec{\rho}}^2 \twonorm{\myvec{M}}^2)$ bits of classical communication.
However, this again fails to provide an efficient simulation since $\twonorm{\myvec{M}}$ is
exponentially large.

\subsection{Continuity of \texorpdfstring{$\intfc$ and $\ifmax$}{I and Imax}}

Our measures $\ifmax$ of \cref{def:ifmax} (which we have related to quantum speedup) and
$\intfc$ of \cref{def:intf_circuit} (which we have conjectured to be related to quantum
speedup)
are continuous as a function of the states and operators of a circuit.
To our knowledge, this is the first continuous quantity that has been identified as being
a necessary resource for quantum speedup, other resources such as Schmidt
rank~\cite{PhysRevLett.91.147902} or
tree width~\cite{Markov:2008:SQC:1405087.1405105,arxiv:quant-ph/0603163} being discrete valued.

An argument was put forth in~\cite{arxiv:1204.3107} as to why most continuous quantities could
not be considered as a necessary resource for quantum speedup.
Although their argument focuses on functions of the state vector, such as entanglement entropy,
rather than of the operators, it is still worthwhile to examine whether it is
applicable to the present work.
We paraphrase their argument here, modifying it slightly to fit the circuit paradigm that we
have been using in this paper.
Consider a quantum circuit with initial state $\ket{0}^{\ot n}$, followed by several unitaries,
terminated by a final measurement having expectation value $v$.
Add a control to all of the operators in the circuit:
$I \ot \ket{0}\bra{0} + U \ot \ket{1}\bra{1}$ in place of $U$ for each unitary and similarly
for the final measurement.
All operators are controlled by an ancillary qubit initially in the state
$\sqrt{1-\epsilon}\ket{0} + \sqrt{\epsilon}\ket{1}$.
By repeating execution of the circuit $\bigomic(\epsilon^{-2})$
times, the value of $v$ can be recovered to high accuracy.
However, by setting $\epsilon$ to a sufficiently low value, the state at all times during the
computation will be arbitrarily close to $\ket{0}^{\ot n+1}$, and thus will have arbitrarily
low entanglement.
The most commonly used entanglement measures take values that depend polynomially on
$\epsilon$, so entanglement can be made quite low without $\bigomic(\epsilon^{-2})$ growing to
an unfeasible magnitude.
As a consequence, it is not possible to claim without qualification that entanglement is
necessary for quantum speedup.

This construction has no effect on the interference producing capacity of the operators
of the circuit since $\ifmax(I \ot \ket{0}\bra{0} + U \ot \ket{1}\bra{1}) = \ifmax(U)$.
For this reason, our main result regarding $\ifmax$ as a necessary resource for quantum speedup
is immune to the above argument.
On the other hand, the interference measure $\intfc$ of \cref{def:intf_circuit}, which is the
subject of the conjectures of section~\ref{sec:conj}, is immune to this argument for a
different reason.
The value of $\intfc$ can be exponentially high in the number of qubits or number of unitaries
of a circuit.  In order to make $\intfc$ small, $\epsilon$ would have to be exponentially
small, in turn requiring an exponentially large number of repetitions of the circuit.
So the construction of~\cite{arxiv:1204.3107} is not able to significantly lower the
interference of a circuit without also losing the quantum speedup.

\subsection{Connection to decoherence functional}
\label{subsec:decoherence}

There is a close connection between the interference $\intfc$ of \cref{def:intf_circuit} and
the decoherence functional introduced by Gell-Mann and Hartle.%
\footnote{See \cite{PhysRevD.47.3345}.  Here we use the notation of Chs.~7, 8 and 10
  of \cite{Grff02c}, which is more convenient for our purposes because it employs
  the Schr\"odinger rather than the Heisenberg representation.} %
The latter represents an extension of the Born rule so as to be able to define
probabilities for a sequence of events in a closed quantum system.
Consider a \emph{family of histories} corresponding to projection onto the computational basis
at each step
(i.e.\ after the initial state and after each unitary)
of a quantum circuit
$\Tr\{U^{(1)\dag} \dotsm U^{(T)\dag} M U^{(T)} \dotsm U^{(1)} \rho \}$.
In this case the \emph{decoherence functional} is defined as
\begin{equation}
	\DC(\jvec;\kvec) = \Tr[ M W(\jvec) \rho W^\dag(\kvec) ],
\end{equation}
where $\rho$ is the initial state, $M$ is a projector, and
\begin{equation}
	W(\jvec) = \projdyad{j_T} U^{T} \dotsm \projdyad{j_2} U^{(2)} \projdyad{j_1} U^{(1)}
		\projdyad{j_0}.
\end{equation}
It is convenient to think of $\DC(\jvec;\kvec)$ as a matrix with rows labeled
by $\jvec$ and columns by $\kvec$, and then it is not difficult to show that
\begin{equation}
	\sum_{\jvec}\sum_{\kvec} \DC(\jvec;\kvec) =
		\Tr\{U^{(1)\dag} \dotsm U^{(T)\dag} M U^{(T)} \dotsm U^{(1)} \rho \}.
	\label{eq:deco_pathsum}
\end{equation}
If the \emph{consistency condition}
\begin{equation}
	\DC(\jvec;\kvec) = 0 \text{ whenever } \jvec\neq \kvec
	\label{eq:deco_consis}
\end{equation}
is satisfied, then each diagonal element $\DC(\jvec;\jvec)$ can be interpreted (up to
normalization) as the probability of the history corresponding to $\jvec$ occurring.
The sum of these diagonal elements is then equal to the expectation value of the final
observable, the right side of~\eqref{eq:deco_pathsum}, since the off diagonal terms vanish.

It is straightforward to show that $\intfc$ of \cref{def:intf_circuit} is equal to
\begin{equation}
	\label{eq:deco_intf}
	\intfc\left(
		U^{(1)\dag},\dotsc,U^{(T)\dag},M,U^{(T)},\dotsc,U^{(1)},\rho
	\right) =
	\sum_{\jvec}\sum_{\kvec} \abs{\DC(\jvec;\kvec)}.
\end{equation}
When the consistency condition~\eqref{eq:deco_consis} is satisfied, this will be equal to
$\sum_{\jvec}\DC(\jvec;\jvec)$ (since the diagonal entries are always positive), which in turn
is equal to the right hand side of~\eqref{eq:deco_pathsum}.
In general,~\eqref{eq:deco_intf} gives a measure of how badly the consistency condition is
violated.

\section{Conjectures}
\label{sec:conj}

We have shown that quantum speedup requires circuit elements with a large
interference producing capacity.
In this section we formally state our conjecture that low interference (rather than low
interference producing capacity) is sufficient to ensure efficient simulation of a quantum
circuit.
In general we are interested in circuits of arbitrary length, but for concreteness consider
the task of estimating sums of the form
\begin{align}
	\braopket{\psi}{U^\dag M U}{\psi} &= \sum_{ijkl} V(i,j,k,l),
	\label{eq:conj_pathsum}
	\\
	V(i,j,k,l) &= \psi^*_{i} U^\dag_{ij} M_{jk} U_{kl} \psi_{l}.
\end{align}
As discussed in section~\ref{sec:monte}, this sum can be estimated by considering a number of
randomly chosen paths $\pi=(i,j,k,l)$.  If these paths are chosen according to the optimal
probability distribution $\Rdistopt{\pi}$ of~\eqref{eq:holistic_dist} then the number of
samples required to estimate~\eqref{eq:conj_pathsum} to within error $\epsilon$ (with probability
$\delta$ of exceeding this error bound) is
$\bigomic(\log(\delta^{-1}) \epsilon^{-2} \intfc^2)$ where
$\intfc=\braopket{\matabs{\psi}}{\matabs{U^\dag} \matabs{M} \matabs{U}}{\matabs{\psi}}$
is the interference of the circuit as given by \cref{def:intf_circuit}.
The difficulty with this strategy is that we do not know how to efficiently sample paths
according to the distribution $\Rdistopt{\pi}$, or anything sufficiently close to it.
In other words, we do not have a strategy for finding the most relevant paths.
However, we conjecture that there is a way.

Loosely speaking, we conjecture that a quantum circuit can be simulated in time
$\poly(\log(\delta^{-1}) \epsilon^{-1} \intfc)$ as long as the initial state and operators meet
some computational tractability conditions, analogous to conditions~\eqref{cond:eps_samplr}
and~\eqref{cond:eps_samprl} of \cref{def:eps,def:eht}.
Exactly what tractability conditions should be required is difficult to know ahead of time
for the following reason.
In sections~\ref{sec:monte} and~\ref{sec:markov} a simulation algorithm was developed,
which required certain tasks to be performed involving the initial state and the operators
of the circuit being simulated.
The need to efficiently perform these tasks led directly to the definition of
conditions~\eqref{cond:eps_samplr} and~\eqref{cond:eps_samprl}.
Now we conjecture a better algorithm, whose specific structure is not known ahead of time.
Not knowing the specifics of this conjectured algorithm, it is not clear what should be
required in place of conditions~\eqref{cond:eps_samplr} and~\eqref{cond:eps_samprl}.
The intuition is that we assume any necessary task involving any individual operator in
the circuit can be efficiently performed, but we make no assumption regarding the interactions
between several operators.

This can be made more precise.
\Cref{subsec:query} (on query complexity) and \cref{subsec:comm} (on communication complexity)
each provided a framework in which the computational tractability
conditions~\eqref{cond:eps_samplr} and~\eqref{cond:eps_samprl} were not relevant.
We could use either of these to form a conjecture that avoids the need to state similar
conditions.
Of these two, communication complexity is representative of a certain algorithmic structure.
Consider algorithms that involve dealing with the elements of a circuit one at a time.
For instance, when estimating~\eqref{eq:conj_pathsum} one could imagine
carrying out some calculations involving $\ket{\psi}$, making notes of the result, carrying out
further calculations involving $U$, and so on.
The time complexity of such an algorithm is lower bounded by the amount of notes taken and the
number of times attention is shifted from one circuit element to another.
This can be quantified by imagining that each of $\ket{\psi}$, $U$, and $M$ are stored in
separate rooms, and considering how many notes need to be carried back and forth between the
rooms by somebody who seeks to estimate~\eqref{eq:conj_pathsum}.
Equivalently, stated in terms of communication complexity, imagine that Alice has $\ket{\psi}$,
Bob has $U$, and Charlie has $M$.  How much communication is needed in order to
estimate~\eqref{eq:conj_pathsum}?  We conjecture that the amount of communication needed is
polynomial in the interference of the circuit:

\begin{conjecture}
	\label{conj:intf_braket}
	Suppose that Alice has a classical description of a vector $\ket{\psi}$ of dimension $N$,
	Bob has a description of an $N \times N$ POVM element $M$, and $T$ other
	parties have descriptions of $N \times N$ unitary matrices $U^{(1)}, \dotsc, U^{(T)}$.
	Then, with probability less than $\delta$ of exceeding the error bound, the value
	of
	\begin{equation}
		\braopket{\psi}{U^{(1)\dag} \dotsm U^{(T)\dag} M U^{(T)} \dotsm U^{(1)}}{\psi}
		\label{eq:conj_braopket}
	\end{equation}
	can be estimated to within additive error $\epsilon$ using
	$\poly( \log(\delta^{-1}) \epsilon^{-1} \max\{1, \intfc\} \log(N))$
	bits of classical communication where $\intfc$ is the interference
	of~\eqref{eq:conj_braopket} as given by \cref{def:intf_circuit}.
\end{conjecture}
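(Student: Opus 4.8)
The plan is to set $\sigma = \ket{\psi}\bra{\psi}$ and $(A^{(1)},\dots,A^{(S)}) = (U^{(1)\dag},\dots,U^{(T)\dag},M,U^{(T)},\dots,U^{(1)})$ so that~\eqref{eq:conj_braopket} becomes the path sum $\Tr\{A^{(1)}\dotsm A^{(S)}\sigma\} = \sum_\pi V(\pi)$ of~\eqref{eq:sum_over_paths}, and then to replace the product-of-Markov-chains sampler of section~\ref{sec:markov}---whose cost involves $\prod_t \ifmax(A^{(t)})$ rather than $\intfc$---by a sampler that stays close to the holistic distribution $\Rdistopt{\pi}$ of~\eqref{eq:holistic_dist} while still only ever consulting one operator at a time (so that the per-sample communication stays $\poly(\log N)$ and the holder of each operator does its own work). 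The vehicle I would use is a forward--backward, belief-propagation-style scheme. If party $t$ knew the nonnegative ``left message'' $L_t$ and ``right message'' $R_t$---the total absolute path weight arriving at each index after $\matabs{\sigma}\,\matabs{A}^{(1)}\dotsm\matabs{A}^{(t)}$, respectively continuing from each index through $\matabs{A}^{(t+1)}\dotsm\matabs{A}^{(S)}$ back to $\matabs{\sigma}$---then it could sample its own index exactly according to the corresponding marginal of $\Rdistopt{\pi}$ (a diagonal entry of a cyclic product of the $\matabs{A}^{(s)}$, normalized by $\intfc$), and the whole path could be drawn exactly. The obvious obstruction is that $L_t,R_t\in\mathbb{R}^N_{\ge 0}$, so sending them verbatim costs $\bigomic(N\log N)$ bits.

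The way around this that I would pursue is to transmit, in place of each full message vector, only a small weighted sample from its normalization---i.e.\ to run many independent copies of a sequential-importance-sampling particle filter through the circuit, a left-to-right pass and a right-to-left pass that ``meet in the middle'' at some cut index and are recombined there by the inner-product estimator of the example in section~\ref{subsec:dotprod}. The goal would be to show the number of particles---hence the communication, up to a $\poly(\log N,\log\delta^{-1},\epsilon^{-1})$ overhead---can be taken $\poly(\intfc)$: because $\intfc = \sum_\pi\abs{V(\pi)}$ is exactly the normalization that makes $\Rdistopt{\pi}$ a probability distribution, $\intfc^2$---not $\prod_t\ifmax(A^{(t)})^2$---is what \cref{thm:chernoff3} demands of the \emph{optimal} sampler, and one would try to argue that a carefully designed resampling step keeps the effective sample size within a $\poly(\intfc)$ factor of that ideal rather than letting it degrade to the product of the individual capacities.

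I expect the variance (equivalently, $\bmax$) bound for the recombined estimator to be the crux, and the point at which the argument is still only a hope. When a left-pass estimate is spliced to a right-pass estimate at an intermediate cut, the naive bound on $\bmax$ factorizes across the cut and reproduces a product of $\ifmax$ values; pushing it down to $\poly(\intfc)$ would require the two passes to be correlated in a way that reflects the genuine destructive interference of the circuit, and no purely local rule---each party sees only its own $A^{(t)}$---obviously manufactures such correlation. The natural first target is the two-operator case $\braopket{\psi}{AB}{\phi}$, whose interference $\braopket{\matabs{\psi}}{\matabs{A}\,\matabs{B}}{\matabs{\phi}}$ can be far below the product bound $\twonorm{\matabs{A}}\,\twonorm{\matabs{B}}$: one would ask whether any coupling of the $i\to j$ and $k\to j$ chains beats that bound, and then whether the construction can be iterated along the circuit. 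Closing that induction---the inductive statement must retain enough about the joint law of the sampled indices, not merely their marginals---is where the real difficulty lies, and it remains conceivable that the conjecture fails as stated in the communication model and that some relaxation (more rounds, or structural restrictions on the operators) is needed.
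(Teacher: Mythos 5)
This statement is a conjecture, not a theorem: the paper does not prove it and explicitly states (in the sentences surrounding \cref{conj:intf_simple}) that the authors were unable to establish it even in the degenerate case $T=0$, i.e.\ for $\braopket{\psi}{M}{\psi}$ alone. So there is no paper proof to compare against, and your proposal---which you candidly present as a program rather than a completed argument---does not close the gap either. Your diagnosis of the obstruction is, however, accurate and consonant with the paper's own discussion: the holistic distribution $\Rdistopt{\pi}$ of~\eqref{eq:holistic_dist} achieves $\bmax = \intfc$, but sampling from it (or from anything variance-close to it) seems to require global information about how the factors interact, and any scheme built from purely local rules reproduces, through the factorization of $\bmax$ across a cut, the $\prod_t \ifmax(A^{(t)})$ bound rather than $\intfc$. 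Identifying the splice/recombination variance as ``the crux'' is exactly right, and the absence of any mechanism by which independent left and right passes acquire the correlation needed to witness destructive interference is precisely why \cref{thm:best_eps_qnorm} gives a tight $\bval = \qnorm{\aA}$ per factor and no better.

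Two refinements to your sketch worth recording. First, your proposed ``natural first target'' $\braopket{\psi}{AB}{\phi}$ is already harder than the paper's minimal open instance \cref{conj:intf_simple}, which has only a single operator $M$ between two copies of the same state; starting there removes even the freedom to choose where to cut, and the difficulty already bites. Second, the particle-filter framing must be reconciled with a hard constraint you did not make fully explicit: by \cref{thm:best_eps_qnorm}\eqref{part:best_bval_ge_qnorm}, any choice of \emph{Markov} transition kernel for a single operator---however cleverly coupled between the forward and backward passes---already saturates at $\qnorm{\aA}$, so the hoped-for improvement to $\poly(\intfc)$ cannot come from better per-operator kernels; it would have to come from adaptive resampling that effectively changes which paths are ever proposed, and quantifying the effective sample size of such a scheme in terms of $\intfc$ alone is an open problem. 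In short, your proposal is a reasonable heuristic direction and an honest account of why it stalls, but it is not a proof, and it should not be read as establishing the conjecture.
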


The reader may worry that this communication scenario has little bearing on the problem of
simulating quantum circuits, however it is expected that
any proof in the positive of this conjecture will be adaptable into an
algorithm that can be used in the computation context.
Indeed, the Markov chain technique of section~\ref{sec:markov} was first developed as
as solution to a problem resembling \cref{conj:intf_braket}.

We have been unable to prove this conjecture even for the simple case where there are
no unitary operations and the goal is to estimate the expectation value
$\braopket{\psi}{M}{\psi}$.
We present this simplified case formally, as it deserves some discussion.

\begin{conjecture}
	\label{conj:intf_simple}
	\Cref{conj:intf_braket} holds in the case $T=0$.
	In other words,
	suppose that Alice has a classical description of a vector $\ket{\psi}$ of dimension $N$
	and Bob has a classical description of an $N \times N$ POVM element $M$.
	Then, with probability less than $\delta$ of exceeding the error bound, the value
	$\braopket{\psi}{M}{\psi}$
	can be estimated to within additive error $\epsilon$ using
	$\poly\left( \log(\delta^{-1}) \epsilon^{-1} \max\{1, \intfc\} \log(N) \right)$
	bits of classical communication
	where $\intfc = \braopket{\matabs{\psi}}{\matabs{M}}{\matabs{\psi}}$ is the interference
	of $\braopket{\psi}{M}{\psi}$ as given by \cref{def:intf_circuit}.
\end{conjecture}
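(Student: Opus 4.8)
The plan is to reduce the conjecture to a single, cleanly stated sampling problem. Apply \cref{thm:chernoff3} to $V(i,j)=\psi_i^* M_{ij}\psi_j$ with the holistic distribution $\Rdistopt{(i,j)} = \abs{V(i,j)}/\intfc$ of~\eqref{eq:holistic_dist}, for which $\bmax=\intfc$ exactly: then $\braopket{\psi}{M}{\psi}$ can be estimated to additive error $\epsilon$, except with probability $\delta$, from $\bigomic(\log(\delta^{-1})\epsilon^{-2}\intfc^2)$ independent samples $(i,j)\sim\Rdistopt{(i,j)}$ together with the values $V(i,j)/\Rdistopt{(i,j)}$, each of which equals $\intfc$ times a unit-modulus phase built from $\arg\psi_i$, $\arg M_{ij}$, $\arg\psi_j$. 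Once a sampled pair $(i,j)$ is known to both parties, that phase is computable with $\bigomic(1)$ rounds and $\bigomic(\log(\epsilon^{-1}))$ bits (Alice supplies $\arg\psi_i,\arg\psi_j$; Bob supplies $\arg M_{ij}$), and transmitting a shared index costs $\bigomic(\log N)$ bits. Since the application needs only constant $\epsilon,\delta$, the communication total is $\bigomic(\intfc^2\log N\cdot\poly(\log(\delta^{-1}),\epsilon^{-1}))$, within the conjectured budget. So the entire conjecture comes down to two sub-tasks: (i) drawing one sample from $\Rdistopt$, and (ii) learning the normalizing constant $\intfc=\braopket{\matabs{\psi}}{\matabs{M}}{\matabs{\psi}}$, both using $\poly(\intfc,\log N)$ bits with Alice holding $\ket{\psi}$ and Bob holding $M$.

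Both sub-tasks are subsumed by rejection sampling from a good proposal. Suppose we can exhibit a distribution $Q(i,j)$ that Alice and Bob can jointly sample, with known normalization, using $\poly(\log N)$ bits, together with a known constant $c$ satisfying $c \ge \max_{ij}\{\abs{\psi_i}\abs{M_{ij}}\abs{\psi_j}/Q(i,j)\}$. Then the standard accept/reject step — draw $(i,j)\sim Q$, accept with probability $\abs{\psi_i}\abs{M_{ij}}\abs{\psi_j}/(c\,Q(i,j))$, which each party computes from its own data up to a shared scalar — produces a sample from $\Rdistopt$, while the overall acceptance rate equals $\intfc/c$, so counting accepted versus rejected trials estimates $\intfc$ to sufficient accuracy within the same budget. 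The expected number of $Q$-trials needed to obtain the required $\bigomic(\epsilon^{-2}\intfc^2)$ accepted samples is $\bigomic(\epsilon^{-2}\intfc\,c)$, affordable whenever $c=\poly(\intfc)$. Shared randomness may be assumed throughout, since public versus private coins changes communication complexity only by an additive $\bigomic(\log N)$. Thus the problem is reduced to: \textbf{find a jointly-sampleable proposal $Q$ whose worst-case likelihood ratio against $\Rdistopt$ is $\poly(\intfc)$.}

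This last step is where I expect the proof to stall, and it explains why even the $T=0$ case is only conjectured. The Markov-chain machinery of \cref{sec:markov} does supply jointly-sampleable proposals — e.g.\ Alice draws $i\sim\abs{\psi_i}^2$, sends $i$, and Bob draws $j$ from $\abs{M_{ij}}/\sum_{j'}\abs{M_{ij'}}$ — but the best worst-case ratio such a construction certifies is the \emph{interference producing capacity} $\twonorm{\matabs{M}}$ (or, in the $p=1,q=\infty$ version, $\infnorm{\matabs{M}}\onenorm{\psi}^2$), never the \emph{interference} $\intfc$ itself. Closing that gap is exactly the content of the conjecture: $\intfc=\braopket{\matabs{\psi}}{\matabs{M}}{\matabs{\psi}}$ measures how $\matabs{\psi}$ overlaps the high-weight part of $\matabs{M}$, whereas a single Markov transition involving $M$ is blind to the $\psi$ factors sitting on both ends of the path. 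The setting is also \emph{asymmetric} — only Bob knows $M$ — so the symmetric ``convex combination of a left-to-right and a right-to-left chain'' device that handled the general $\braopket{\psi}{AB}{\phi}$ case cannot be run in the obvious way, since Alice cannot propose an index by sampling a column of $M$.

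Overcoming this seems to demand either (a) having Alice transmit a \emph{sketch} of $\ket{\psi}$ so that Bob can build a proposal already incorporating the $\psi$-weights on both ends of each path — but a length-$N$ vector cannot be sent in $\polylog(N)$ bits, and it is unclear what compression of $\psi$ preserves $\Rdistopt$, or even a $\poly(\intfc)$-faithful surrogate for it; or (b) an \emph{adaptive, multi-round} scheme — a bipartite random walk, or a Metropolis-type chain alternating ``Alice moves'' and ``Bob moves'' while communicating only the shared coordinate and a running weight — whose mixing time, or whose cumulative likelihood ratio, is controlled by $\intfc$. Proving a mixing-time or worst-case-ratio bound of the right order for such a chain, given that $M$ is known to only one party, is the crux I would expect to be unable to finish; presumably it is also why \cref{conj:intf_simple} is singled out despite its superficial simplicity.
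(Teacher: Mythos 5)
This statement is a conjecture, not a theorem; the paper itself states ``We have been unable to prove this conjecture even for the simple case where there are no unitary operations,'' so there is no proof in the paper for your attempt to be compared against. Your proposal correctly recognizes this: you do not claim a proof, and you correctly pin down the crux that obstructs one. Your reduction (Chernoff--Hoeffding with the holistic distribution $\Rdistopt$, then rejection sampling from a jointly-sampleable proposal $Q$ with a $\poly(\intfc)$ worst-case likelihood ratio) is a clean way to frame what would suffice, and your observation that the Markov-chain machinery of the paper only certifies a ratio controlled by $\twonorm{\matabs{M}}$ rather than by $\intfc=\braopket{\matabs{\psi}}{\matabs{M}}{\matabs{\psi}}$ is exactly the gap the paper leaves open. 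The point that the asymmetry --- only Bob knows $M$, so Alice cannot run the right-to-left chain --- is also sound as an informal explanation for why the convex-combination trick of \cref{sec:markov} does not apply directly; note though that the paper's \cref{subsec:comm} shows the two-chain construction \emph{does} run in the multi-round communication setting (each transition is computed by whichever party owns the operator, with only the shared index transmitted), so the real obstruction is the $\twonorm{\matabs{M}}$-versus-$\intfc$ gap rather than the asymmetry per se. One small slip: in the $p=1,q=\infty$ case the EHT bound for the dyad $\ket{\psi}\bra{\psi}$ is $\onenorm{\psi}\infnorm{\psi}$, not $\onenorm{\psi}^2$. You might also have mentioned the paper's further evidence for and against: the one-round version is \emph{false} via the \textsc{Perm-Invariance} problem of~\cite{montanaro2011}, while a two-round classical protocol for that problem does exist, and a full refutation would likely be hard since proving classical hardness of $\braopket{\psi}{M}{\psi}$ estimation in general took 11 years~\cite{Regev:2011:QOC:1993636.1993642}.
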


\Cref{conj:intf_simple}, being weaker than \cref{conj:intf_braket}, should be easier to prove
true.
However, it would probably be very difficult to prove false since a proof that estimating
$\braopket{\psi}{M}{\psi}$ requires a large amount of classical communication in the general
case (not assuming low interference) remained open for 11
years~\cite{Regev:2011:QOC:1993636.1993642}.

\Cref{conj:intf_simple} would be false if only one round of communication was allowed, from
Alice to Bob.
In~\cite{montanaro2011} the \textsc{Perm-Invariance} problem was defined and shown to be solved
efficiently by a one-round quantum protocol, however no efficient one-round classical protocol
exists.
The quantum protocol has Bob measuring a POVM element $M$ on a state $\ket{\psi}$ sent by
Alice and this protocol is low interference,
$\intfc = \braopket{\matabs{\psi}}{\matabs{M}}{\matabs{\psi}} \le 1$.
However, there can be no efficient one-round classical protocol for estimating
$\braopket{\psi}{M}{\psi}$, since such a protocol would efficiently solve
\textsc{Perm-Invariance}.
This does not provide a counterexample to \cref{conj:intf_simple} since we allow multiple
rounds of communication, and there is indeed an efficient classical two round protocol, which
can be constructed using the technique of~\cref{subsec:comm}.

A potential problem with \cref{conj:intf_braket} is that the unitary portion of the
circuit could create very large interference which could be masked by the final measurement.
For example, consider the initial state $\ket{\psi} = \ket{0}^{\ot n}$, acted upon
by an arbitrary circuit involving all but the first qubit, followed by measurement of the
observable $M = \ket{1}\bra{1} \ot I^{\ot n-1}$.
For this circuit $\intfc = 0$ so \cref{conj:intf_braket} says the expectation value
can be computed in $\poly(\log(\delta^{-1}) \epsilon^{-1} n)$ time, as indeed
it can in this case.
However, it seems there may be similar situations in which $\intfc$ is small because of
the final measurement, but the circuit is nevertheless difficult to simulate.
For this reason we provide an alternate definition that quantifies the interference just
before the final measurement, computed by substituting $M=I$ in \cref{def:intf_circuit}.
This will be used to form a weaker conjecture.

\begin{definition}
	\label{def:intf_state}
	The \textit{interference} of a quantum circuit without a measurement,
	$U^{(T)} \dotsm U^{(1)} \rho U^{(1)\dag},\dotsc,U^{(T)\dag}$, is
	\begin{equation}
		\intfs(U^{(T)},\dotsc,U^{(1)},\rho) =
			\Tr\left\{
				\matabs{U}^{(T)} \dotsm \matabs{U}^{(1)}
				\matabs{\rho}
				\matabs{U}^{(1)\dag} \dotsm \matabs{U}^{(T)\dag}
			\right\}.
	\end{equation}
\end{definition}

In other words, $\intfs$ is the amount by which normalization is
spoiled when destructive interference is turned into constructive interference by means of the
absolute value applied to each path.
This is nondecreasing in time,
\begin{equation}
	\intfs(U^{(T)},\dotsc,U^{(1)},\rho) \ge \intfs(U^{(T-1)},\dotsc,U^{(1)},\rho)
\end{equation}
and $\intfs=1$ if all of the unitaries are permutation matrices as in a classical
computation.
We conjecture that a circuit can be efficiently simulated when $\intfs$ is small.
Since $\intfs$ doesn't see the final measurement $M$, we need an extra constraint.
We require $M$ to be a projector diagonal in the computational basis.

\begin{conjecture}
	Suppose that Alice has a classical description of a vector $\ket{\psi}$ of dimension $N$,
	Bob has a description of an $N \times N$ projector $M$ that is diagonal in the
	computational basis, and $T$ other
	parties have descriptions of $N \times N$ unitary matrices $U^{(1)}, \dotsc, U^{(T)}$.
	Then, with probability less than $\delta$ of exceeding the error bound, the value of
	\begin{equation}
		\braopket{\psi}{U^{(1)\dag}\dotsm U^{(T)\dag} M U^{(T)} \dotsm U^{(1)}}{\psi}
		\label{eq:conj_J_braopket}
	\end{equation}
	can be estimated to within additive error $\epsilon$ using
	$\poly( \log(\delta^{-1}) \epsilon^{-1} \intfs \log(N))$
	bits of classical communication where
	$\intfs = \intfs(U^{(T)},\dotsc,U^{(1)},\ket{\psi}\bra{\psi})$ is the interference
	of~\eqref{eq:conj_J_braopket}
	just before the final measurement, as given by \cref{def:intf_state}.
\end{conjecture}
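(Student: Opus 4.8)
The plan is to exploit the hypothesis that $M$ is diagonal in the computational basis, $M=\sum_{i\in S}\ket{i}\bra{i}$. Writing $U=U^{(T)}\dotsm U^{(1)}$, the quantity to be estimated is then $\sum_{i\in S}\abs{\braopket{i}{U}{\psi}}^2$: the measurement merely performs a membership test $i\in S$ on the final computational-basis index, which is why $\intfs$ can harmlessly ignore $M$ (\cref{def:intf_state} sets $M=I$) and why only a single forward direction, squared, is needed rather than a genuine right-to-left chain. Each amplitude $\braopket{i}{U}{\psi}$ is a sum over half-circuit paths $\pi=(i_0,\dotsc,i_T)$ with terminus $i_T=i$ and weight $V(\pi)=\psi_{i_0}U^{(1)}_{i_1 i_0}\dotsm U^{(T)}_{i_T i_{T-1}}$, so the target is the double sum $\sum_{i\in S}\sum_{\pi,\pi':\,i_T=i'_T=i} V(\pi)\overline{V(\pi')}$ over \emph{pairs} of half-circuit paths agreeing on their terminus. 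Setting $\myvec{w}=\matabs{U}^{(T)}\dotsm\matabs{U}^{(1)}\matabs{\psi}$ we have $\sum_{\pi:\,i_T=i}\abs{V(\pi)}=w_i$, and expanding \cref{def:intf_state} (in which $\matabs{\rho}_{mn}=\abs{\psi_m}\abs{\psi_n}$) gives exactly $\intfs=\twonorm{\myvec{w}}^2=\sum_i w_i^2$.

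Next I would cast this as a Monte Carlo estimate in the style of \cref{sec:monte}: sample pairs $(\pi,\pi')$ with $i_T=i'_T$ from a distribution $\Rdist{\pi,\pi'}$ and average $V(\pi)\overline{V(\pi')}/\Rdist{\pi,\pi'}$ over those pairs whose terminus lies in $S$. By \cref{thm:chernoff3} the sample count is governed by $\bmax=\max_{\pi,\pi'}\{\abs{V(\pi)}\abs{V(\pi')}/\Rdist{\pi,\pi'}\}$, which is minimized by the holistic choice $\Rdistopt{\pi,\pi'}=\abs{V(\pi)}\abs{V(\pi')}/\intfs$ (supported on pairs with $i_T=i'_T$, and a genuine distribution since $\sum_i w_i^2=\intfs$), for which $\bmax=\intfs$ \emph{exactly}; then only $\bigomic(\log(\delta^{-1})\epsilon^{-2}\intfs^2)$ samples are needed. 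The whole content of the conjecture is to generate these samples --- or samples from a distribution close enough to keep $\bmax=\poly(\intfs)$ --- using only $\poly(\intfs\log N)$ bits of communication. The optimal distribution factors as: (i) draw the common terminus $i$ with probability $w_i^2/\intfs$; (ii) draw $\pi,\pi'$ independently from the backward conditional $\abs{V(\cdot)}/w_i$ over paths ending at $i$. Both steps reference the full product $\matabs{U}^{(T)}\dotsm\matabs{U}^{(1)}$ (through the $w_i$ and the backward partial weights), so neither is implementable party-by-party --- this is the holistic character of $\Rdistopt$ that makes the conjecture hard.

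The approach I would push is a sequential-importance-resampling (particle-filter) surrogate for $\Rdistopt$: run the chain forward, one unitary per round and one party per round, carrying a population of partial half-paths, and after each round resample the population in proportion to the magnitude of the accumulated partial weight, so that the population stays concentrated on the paths that matter and its running normalization tracks the relevant partial sums of $\myvec{w}$; the usual particle-filter importance-weight bookkeeping then yields an unbiased double-sum estimator from pairs of surviving particles that share a terminus. With public (shared) randomness each round costs $\bigomic(\log N)$ bits per particle (the current register index) plus $\bigomic(1)$ amplitude scalars, and Bob's final step is the check $i_T\in S$. What must then be established --- and what I expect to be the genuinely hard part, hence the conjectural status --- is that resampling never lets the effective sample size collapse: one needs a quantitative bound showing the partial weights do not become dominated by a few heavy particles in a way that drives $\bmax$ (equivalently, the estimator variance) above $\poly(\intfs)$. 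This is precisely where one must control how the $U^{(t)}$ interact, the information that the per-operator technique of \cref{sec:markov} forgoes (accepting instead the far larger cost $\prod_t\twonorm{\matabs{U}^{(t)}}^2$); note that even the crude compromise of forward-sampling $\pi$ from $\abs{V(\pi)}/\onenorm{\myvec{w}}$ and drawing a backward partner $\pi'$ only achieves $\bmax=\onenorm{\myvec{w}}\,\infnorm{\myvec{w}}$, which the spike-plus-tail vector $\myvec{w}\propto(1,N^{-1/2},\dotsc,N^{-1/2})$ shows can be $\Theta(\sqrt N)$ while $\intfs=\Theta(1)$. A sensible warm-up is the $T=0$ case \cref{conj:intf_simple}, which collapses to a two-party ``bilinear sampling'' task --- Alice with $\psi$, Bob with $M$, sample $(i,j)$ roughly proportionally to $\abs{\psi_i}\abs{M_{ij}}\abs{\psi_j}$ --- whose solution would be the natural module to compose across the $T$ intermediate parties.
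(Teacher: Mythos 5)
The statement you were given is one of the paper's open conjectures; the paper provides no proof and explicitly treats even the weaker \cref{conj:intf_simple} as unresolved. There is therefore no proof of the paper's to compare against, and your text should be read as a proposal toward the conjecture rather than a proof of it. On that footing, the preliminary reduction is correct and worth retaining: with $\myvec{w} = \matabs{U}^{(T)} \dotsm \matabs{U}^{(1)} \matabs{\psi}$ one has $w_i = \sum_{\pi : i_T = i} \abs{V(\pi)}$ and $\intfs = \Tr\{\myvec{w}\myvec{w}^\top\} = \sum_i w_i^2$, and the distribution $\Rdistopt{\pi,\pi'} = \abs{V(\pi)}\abs{V(\pi')}/\intfs$ on pairs of forward half-paths with common terminus is a genuine probability distribution achieving $\bmax = \intfs$. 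This is the right diagonal-$M$ analogue of the holistic distribution~\eqref{eq:holistic_dist}, and it shows that $\bigomic(\log(\delta^{-1})\epsilon^{-2}\intfs^2)$ path-pair samples would suffice \emph{provided} one could draw them cheaply.

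The gap is exactly that proviso, and it is not a technicality. Your particle-filter (sequential-importance-resampling) surrogate for $\Rdistopt{\pi,\pi'}$ is a sensible direction, but nothing in the proposal shows that the surrogate's $\bmax$ --- equivalently, the estimator's variance, or the effective sample size of the particle population --- stays at $\poly(\intfs)$ after $T$ resampling rounds. Your own example, a vector $\myvec{w}$ with one heavy entry and an $N^{-1/2}$ tail giving $\intfs = \Theta(1)$ while $\onenorm{\myvec{w}}\infnorm{\myvec{w}} = \Theta(\sqrt{N})$, already shows that per-operator Markov-chain sampling is \emph{not} an adequate stand-in, and no argument is given that resampling repairs this; whether any low-communication protocol can locate the high-weight paths when the $U^{(t)}$ interact adversarially is the entire content of the conjecture, and that question is left open. (There is also an unaddressed unbiasedness question for resampled pair-of-particle estimators, but it is secondary to the variance issue.) A further small conflation: \cref{conj:intf_simple} is the $T=0$ case of \cref{conj:intf_braket}, where $M$ is an arbitrary POVM element, so your diagonal-$M$ reduction does not apply to it; for the conjecture at hand, $T=0$ is trivial (Alice samples $i \sim \abs{\psi_i}^2$ locally, Bob replies whether $i \in S$), and the first nontrivial instance is $T=1$.
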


\section{Summary and open problems}

We have provided an algorithm for efficiently simulating quantum circuits in which each
operator has low interference producing capacity.
Therefore, interference producing capacity is identified as a resource necessary for quantum
speedup.
The runtime of the simulation is quadratic in the interference producing capacities of each
operator, so it is typically exponentially slow in the length of the circuit.
However, for constant length circuits making use of operators with low interference producing
capacity (many such operators are listed in section~\ref{sec:circuits}), the simulation runs in time
polynomial in the number of qubits.

In general, our technique is able to estimate
expressions of the form $\braopket{\psi}{A \dotsm Z}{\phi}$,
of which quantum circuits
$\braopketsmall{\psi}{U^{(1)\dag} \dotsm U^{(T)\dag} M U^{(T)} \dotsm U^{(1)}}{\psi}$
are a special case, in time proportional to
$\pnorm{\psi}^2 \qnorm{\matabs{A}}^2 \dotsm \qnorm{\matabs{Z}}^2 \qnorm{\phi}^2$
for any $1/p+1/q=1$
where a bar over a vector or operator denotes entrywise absolute value in the computational
basis, and where $\pnorm{\cdot}$ denotes the $\ell^p$-norm for vectors and the induced norm for
operators.
The choice $p=q=2$ is most relevant for quantum mechanics, and $\twonorm{\aA}$ gives the
interference producing capacity of $A$.
The technique was also generalized to expressions of the form
$\Tr\{ A \dotsm Z \sigma \}$.

We formalized the conditions necessary for efficient simulation by
introducing two definitions: EHT for the initial state $\sigma$ and EPS for the operators
$A,\dotsc,Z$.
These definitions consist of requirements having to do with the number of
samples needed as well as requirements having to do with efficient computability.
The latter requirements can for the most part be ignored if one is concerned with query
complexity or communication complexity rather than time complexity.
A wide range of initial states and operators are EHT or EPS; many examples were listed in
section~\ref{sec:circuits}.
In addition to discussing circuits which can be efficiently simulated, we gave several examples
of circuits which we cannot efficiently simulate, and explained why.

The choice $p=q=2$ makes the most sense for simulating expressions of the form
$\braopketsmall{\psi}{U^\dag V^\dag M V U}{\psi}$.
However, using the Wigner representation this expression can also be written as
$\braopket{\myvec{M}}{\myvec{V}\myvec{U}}{\myvec{\rho}}$,
and here the choice $p=\infty$ and $q=1$ works well, allowing efficient simulation of circuits
that consist mainly of Clifford operations.
We showed how our simulation technique can be applied to communication problems, with the
conclusion that there can be no superpolynomial advantage of quantum communication over
classical communication unless the quantum protocol uses operations with high interference
producing capacity.
Curiously, this result does not apply to one-round communication, since our simulation requires
doubling the number of rounds.  And indeed, there is an example of a one-round quantum protocol
with low interference producing capacity which is exponentially more efficient than any
one-round classical protocol.

Finally, we would like to suggest three open questions:
\begin{enumerate}[1)]
	\item Can it be shown that interference, rather than interference producing capacity, is
		necessary for quantum speedup?
		In section~\ref{sec:conj} we formalized a series of conjectures on this topic, using the
		framework of communication complexity.
	\item While we have shown interference producing capacity to be a necessary resource for
		quantum speedup, it is also fruitful to investigate sufficient resources for quantum
		speedup.
		For example~\cite{arxiv:1010.3654}, building on the work
		of~\cite{Aaronson:2010:BPH:1806689.1806711}, showed that any
		operator $U$ having the property that $\max_{ij} \abs{U_{ij}}$ is sufficiently small
		can be used to exhibit exponential quantum speedup.
		Can the gap between necessary (e.g.\ our result) and the sufficient
		(e.g.~\cite{arxiv:1010.3654}) conditions for quantum speedup be narrowed?
	\item Can our technique be combined with existing Monte Carlo or other
		techniques to provide an improved simulation algorithm for systems of physical
		interest?  Our algorithm in its present form is not likely to be more efficient than
		existing techniques for such problems.
\end{enumerate}

\section{Acknowledgments}

The author thanks Robert Griffiths and Scott Cohen for many helpful comments and suggestions.
This research received financial support from the National
Science Foundation through Grant PHY-1068331.

\appendix

\section{Generalized singular vectors}
\label[secinapp]{sec:uAv}

The goal of this appendix is to determine the minimum value of $\bval$ such that a given operator
$A$ is $\epsp{\bval}{\fval}$ and bounds on $\bval$ such that an operator $\sigma$ is
$\ehtp{\bval}{\fval}$.
We will show that
conditions~\eqref{cond:eps_alphasum}-\eqref{cond:eps_cost} of \cref{def:eps}
require $\bval \ge \qnorm{\aA}$ and will construct probability distributions that satisfy this
with equality.
Whether these also satisfy
conditions~\eqref{cond:eps_samplr}-\eqref{cond:eps_samprl} of \cref{def:eps}
needs to be determined on a case by case basis.
Note that when $p=q=2$ we have $\twonorm{\aA} = \ifmax(A)$, the interference producing
capacity of $A$.
The end result of this appendix is the following theorem.\footnote{
	In the case $p=q=2$, claims~\eqref{part:best_bval_ge_qnorm}
	and~\eqref{part:best_bval_eq_qnorm} of \cref{thm:best_eps_qnorm} are similar to results
	of~\cite{Mathias1990269}, although the techniques are different.
}

\begin{theorem}
	\label{thm:best_eps_qnorm}
	Let $A$ and $\sigma$ be matrices, $p,q \in [1,\infty]$, and $1/p+1/q=1$.  Then
	\begin{enumerate}[(a)]
		\item \label{part:best_bval_ge_qnorm}
			It is not possible to satisfy
			conditions~\eqref{cond:eps_alphasum}-\eqref{cond:eps_cost} of
			\cref{def:eps} unless $\bval \ge \qnorm{\aA}$.
			The same goes for
			\eqref{cond:eht_alphasum} and~\eqref{cond:eht_cost} of
			\cref{def:eht} since they are stricter (i.e.\ $\bval \ge \qnorm{\matabs{\sigma}}$).
		\item \label{part:best_bval_eq_qnorm}
			It is possible to satisfy
			conditions~\eqref{cond:eps_alphasum}-\eqref{cond:eps_cost} of
			\cref{def:eps} with $\bval = \qnorm{\aA}$.
			The $k$ index is not needed (i.e.\ $k \in K = \{0\}$ and
			$\alpha_{mnk} = A_{mn}$).
		\item \label{part:best_query}
			If one is concerned with
			query complexity rather than time complexity, and if $A$ is not defined in terms of
			an oracle, then conditions~\eqref{cond:eps_samplr}-\eqref{cond:eps_samprl} of
			\cref{def:eps} can be ignored, as explained in \cref{subsec:query}.
			Therefore, $A$ is $\epsp{\qnorm{\aA}}{0}$.
		\item \label{part:best_dyads}
			Let $w$ be the smallest value such that $\sigma/w$ is a convex combination of
			normalized dyads.
			That is to say, let
			\begin{equation}
				\label{eq:gen_tr_norm}
				w = \min\left\{ \sum_i \abs{s_i}
					\middle|
					s_i \in \mathbb{C},
					\sigma = \sum_i s_i \vvec^{(i)} \uvec^{(i)\top},
					\pnorm{\uvec^{(i)}} = \qnorm{\vvec^{(i)}} = 1
				\right\}.
			\end{equation}
			It is possible to satisfy
			conditions~\eqref{cond:eht_alphasum}-\eqref{cond:eht_cost} of
			\cref{def:eht} with $\bval = w$
			(although this is not necessarily the smallest possible value of $\bval$).
			The $k$ index is not needed (i.e.\ $k \in K = \{0\}$ and
			$\alpha_{mnk} = \sigma_{mn}$).
			Note that when $p=q=2$, $w$ is the trace norm of $\sigma$.
		\item \label{part:best_dyads_query}
			If one is concerned with query complexity rather than time complexity, and if
			$\sigma$ is not defined in terms of an oracle, then
			conditions~\eqref{cond:eps_samplr}-\eqref{cond:eps_samprl} of \cref{def:eht} can be
			ignored.
			Therefore, $\sigma$ is $\ehtp{w}{0}$ (although this is not necessarily the smallest
			possible value of $\bval$).
	\end{enumerate}
\end{theorem}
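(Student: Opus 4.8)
The plan is to prove the five parts in the order listed, noting that \eqref{part:best_query} and \eqref{part:best_dyads_query} follow in one line from \eqref{part:best_bval_eq_qnorm} and \eqref{part:best_dyads} together with the observation of \cref{subsec:query} that, when $A$ (resp.\ $\sigma$) is not defined through an oracle, conditions \eqref{cond:eps_samplr}--\eqref{cond:eps_samprl} (resp.\ the EHT analogues) are met with $\fval=0$. So the real work is in \eqref{part:best_bval_ge_qnorm}, \eqref{part:best_bval_eq_qnorm}, and \eqref{part:best_dyads}.

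For the lower bound \eqref{part:best_bval_ge_qnorm} I would start from arbitrary $\alpha_{mnk},\Pdist{n,k|m},\Qdist{m,k|n}$ satisfying \eqref{cond:eps_alphasum}--\eqref{cond:eps_cost}; the $0/0=0$ convention forces $|\alpha_{mnk}|\le\bval\,\Pdist{n,k|m}^{1/p}\Qdist{m,k|n}^{1/q}$ always. Fix a nonnegative vector $\uvec$ with $\qnorm{\uvec}=1$. Summing over $k$ and applying H\"older over the index pair $(n,k)$ with $\sum_{n,k}\Pdist{n,k|m}=1$ gives
\begin{equation}
	(\aA\uvec)_m=\sum_n|A_{mn}|u_n\le\bval\Big(\sum_{n,k}\Qdist{m,k|n}\,u_n^q\Big)^{1/q};
\end{equation}
raising to the $q$-th power, summing over $m$, and using $\sum_{m,k}\Qdist{m,k|n}=1$ collapses the right side to $\bval^q\qnorm{\uvec}^q$, so $\qnorm{\aA\uvec}\le\bval$. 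Since $\aA$ is entrywise nonnegative, $\qnorm{\aA\uvec}\le\qnorm{\aA\,|\uvec|}$ so $\qnorm{\aA}=\sup\{\qnorm{\aA\uvec}:\uvec\ge0,\ \qnorm{\uvec}=1\}$, whence $\bval\ge\qnorm{\aA}$; the endpoints $q\in\{1,\infty\}$ are the obvious degenerations in which one exponent vanishes. The EHT statement is immediate, being the special case $\Pdist{n,k|m}=\Pdist{n,k}$, $\Qdist{m,k|n}=\Qdist{m,k}$.

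For the matching construction \eqref{part:best_bval_eq_qnorm} I would take $K=\{0\}$, $\alpha_{mn0}=A_{mn}$, and build the optimal conditional distributions from generalized $\ell^q$-singular vectors of $\aA$. First suppose $1<q<\infty$ and $\aA$ is entrywise positive: let $\uvec\ge0$ with $\qnorm{\uvec}=1$ maximize $\qnorm{\aA\uvec}$, set $\vvec=\aA\uvec$ and $\bval=\qnorm{\vvec}=\qnorm{\aA}$; a stationarity (Perron--Frobenius-type) argument then yields $\uvec,\vvec>0$ and $\aAT\vvec^{q-1}=\bval^q\uvec^{q-1}$ (entrywise powers). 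Defining $\Pdist{n|m}=|A_{mn}|u_n/v_m$ and $\Qdist{m|n}=|A_{mn}|v_m^{q-1}/(\bval^q u_n^{q-1})$, the two relations show these are genuine conditional distributions, and a short exponent count using $1/p-(q-1)/q=0$ gives $\Pdist{n|m}^{1/p}\Qdist{m|n}^{1/q}=|A_{mn}|/\bval$, so \eqref{cond:eps_cost} holds with $\bval=\qnorm{\aA}$. For a general $\aA$ (possibly reducible, possibly with zero rows or columns) I would apply this to $\aA+\epsilon J$ with $J$ all-ones, obtaining distributions $\Pdistsub{\epsilon}{n|m},\Qdistsub{\epsilon}{m|n}$ with $|A_{mn}|\le\qnorm{\aA+\epsilon J}\,\Pdistsub{\epsilon}{n|m}^{1/p}\Qdistsub{\epsilon}{m|n}^{1/q}$, then pass to a limit along a subsequence using compactness of the probability simplex and continuity of $\qnorm{\cdot}$; the cases $q\in\{1,\infty\}$ I would do directly by column- (resp.\ row-) normalizing $\aA$.

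The EHT construction \eqref{part:best_dyads} is similar but cleaner: take $K=\{0\}$, $\alpha_{mn0}=\sigma_{mn}$, fix a decomposition $\sigma=\sum_i s_i\vvec^{(i)}\uvec^{(i)\top}$ with $\pnorm{\uvec^{(i)}}=\qnorm{\vvec^{(i)}}=1$ attaining the minimum $w$ of~\eqref{eq:gen_tr_norm}, and set $\Pdist{n}=w^{-1}\sum_i|s_i|\,|u^{(i)}_n|^p$ and $\Qdist{m}=w^{-1}\sum_i|s_i|\,|v^{(i)}_m|^q$, which are probability distributions because each $\uvec^{(i)},\vvec^{(i)}$ is normalized. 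H\"older over $i$ then gives $|\sigma_{mn}|\le\sum_i|s_i|\,|u^{(i)}_n|\,|v^{(i)}_m|\le w\,\Pdist{n}^{1/p}\Qdist{m}^{1/q}$, so \eqref{cond:eht_cost} holds with $\bval=w$, which is $\trnorm{\sigma}$ when $p=q=2$. The step I expect to be the real obstacle is the degenerate-matrix bookkeeping in \eqref{part:best_bval_eq_qnorm}: the clean formulas for $\Pdist{n|m},\Qdist{m|n}$ need strictly positive generalized singular vectors, which exist only when $\aA$ is irreducible, so making the reducible and endpoint-$p$ cases rigorous (via the perturbation-and-limit argument sketched above, or a reduction to irreducible blocks) is where the care is required.
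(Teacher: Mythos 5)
Your proof is correct, and parts~\eqref{part:best_query}, \eqref{part:best_dyads}, and \eqref{part:best_dyads_query} match the paper's argument essentially verbatim (in particular your choice of $\Pdist{n}$, $\Qdist{m}$ and the H\"older step over $i$ in~\eqref{part:best_dyads} is exactly what the paper does). Where you diverge is in~\eqref{part:best_bval_ge_qnorm} and~\eqref{part:best_bval_eq_qnorm}, and both divergences are genuine and worth noting.

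For the lower bound~\eqref{part:best_bval_ge_qnorm}, the paper fixes a nonnegative extremal pair $(\uvec,\vvec)$ with $\vecprod{\uvec}{\aA\vvec}=\qnorm{\aA}$ (its \cref{thm:uAv}), multiplies the pointwise bound $\abs{\alpha_{mnk}}\le\bval\,\Pdist{n,k|m}^{1/p}\Qdist{m,k|n}^{1/q}$ by $u_m v_n$, sums, and invokes AM--GM. You instead fix an arbitrary nonnegative $\qnorm{\uvec}=1$, apply H\"older over the pair index $(n,k)$ to get $\qnorm{\aA\uvec}\le\bval$, then optimize over $\uvec$. Your route is a bit more self-contained (it does not presuppose that the operator norm is achieved by some pair, which for $q\in(1,\infty)$ on finite dimensions is true but is exactly what the paper's \cref{thm:uAv} is for) and it degenerates cleanly at $q\in\{1,\infty\}$. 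What the paper's route buys is that it reuses the same machinery it has to build anyway for~\eqref{part:best_bval_eq_qnorm}.

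For the upper bound~\eqref{part:best_bval_eq_qnorm}, your distributions are in fact the paper's under a change of variables. The paper uses $\uvec\in\ell^p$, $\vvec\in\ell^q$ (generalized left/right singular vectors satisfying~\eqref{eq:pos_Au_v}--\eqref{eq:pos_Av_u}) and sets $\Pdist{n|m}=\abs{A_{mn}}v_n/[\aA\vvec]_m$, $\Qdist{m|n}=\abs{A_{mn}}u_m/[\aAT\uvec]_n$. Your $\uvec^{\text{yours}}\in\ell^q$ is the paper's $\vvec$, your $\vvec^{\text{yours}}=\aA\uvec^{\text{yours}}$ is the paper's $\aA\vvec$, and substituting the paper's relations $[\aA\vvec]_m=u_m^{p/q}\qnorm{\aA}$, $[\aAT\uvec]_n=v_n^{q/p}\qnorm{\aA}$ turns your $\Qdist{m|n}$ into the paper's. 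The real structural difference is how you reach the degenerate cases: the paper isolates the irreducible (not-block-diagonal) case (\cref{thm:uAv_pos_noblock}, resting on its Perron--Frobenius-type \cref{thm:pos_vec}) and then builds the block-diagonal case from pieces (\cref{thm:uAv_pos}), accepting inequalities rather than equalities; you perturb $\aA\mapsto\aA+\epsilon J$, use strict positivity to avoid the reducibility bookkeeping entirely, and pass to a limit along a subsequence in the compact probability simplices, noting that the $0/0=0$ convention handles zero-limit entries automatically. Both are sound; the perturbation argument is shorter and avoids classifying block structure, while the paper's explicit block construction has the advantage of exhibiting the distributions concretely (which matters if one also cares about conditions~\eqref{cond:eps_samplr}--\eqref{cond:eps_samprl}, though in this theorem one does not).
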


We present immediately the proof of parts~\eqref{part:best_bval_ge_qnorm},
\eqref{part:best_dyads}, and~\eqref{part:best_dyads_query}.
Parts~\eqref{part:best_bval_eq_qnorm} and~\eqref{part:best_query}
will require more preliminary discussion.

\begin{proof}[Proof of \cref{thm:best_eps_qnorm}\eqref{part:best_bval_ge_qnorm}]
	Let $A$ be an $M \times N$ matrix.
	Suppose
	conditions~\eqref{cond:eps_alphasum}-\eqref{cond:eps_cost} of
	\cref{def:eps} are satisfied by some
	$\bval$, $K$, $\alpha_{mnk}$, $\Pdist{n,k|m}$, and $\Qdist{m,k|n}$.
	Then, for all $m \in \{1,\dotsc,M\}$, $n \in \{1,\dotsc,N\}$, and $k \in K$, we have
	$A_{mn} = \sum_{k' \in K} \alpha_{mnk'}$ and
	\begin{align}
		\frac{\abs{\alpha_{mnk}}}{
			\Pdist{n,k|m}^{1/p}
			\Qdist{m,k|n}^{1/q}
		} \le \bval.
	\end{align}
	Rearranging this expression yields
	\begin{equation}
		\label{eq:eps_cost_nofrac}
		\abs{\alpha_{mnk}} \le \bval \cdot
			\Pdist{n,k|m}^{1/p}
			\Qdist{m,k|n}^{1/q}.
	\end{equation}
	Let $\uvec$ and $\vvec$ be nonnegative vectors satisfying $\pnorm{\uvec}=\qnorm{\vvec}=1$ and
	$\vecprod{\uvec}{\aA \vvec} = \qnorm{\aA}$ (that such vectors exist is well known,
	but is also a consequence of \cref{thm:uAv}).
	Multiply both sides of~\eqref{eq:eps_cost_nofrac} by $u_m v_n$ and sum over $m,n,k$ to get
	\begin{align}
		\sum_{mnk} u_m \abs{\alpha_{mnk}} v_n
		&\le \bval \sum_{mnk}
			u_m
			\Pdist{n,k|m}^{1/p}
			\Qdist{m,k|n}^{1/q}
			v_n
		\\ &= \bval \sum_{mnk}
			\left[ \Pdist{n,k|m} u_m^p \right]^{1/p}
			\left[ \Qdist{m,k|n} v_n^q \right]^{1/q}
		\\ \label{eq:b_ge_qnorm_amgm}
		&\le \bval \sum_{mnk} \left[
			\frac{1}{p} \Pdist{n,k|m} u_m^p +
			\frac{1}{q} \Qdist{m,k|n} v_n^q
		\right]
		\\ &= \bval
			\sum_m \frac{1}{p} u_m^p +
			\sum_n \frac{1}{q} v_n^q
		\\ &= \bval( 1/p + 1/q)
		\\ &= \bval
	\end{align}
	where~\eqref{eq:b_ge_qnorm_amgm} follows from the inequality of arithmetic and geometric
	means.
	We now place a lower bound on the left hand side.  By the triangle inequality,
	$\sum_k \abs{\alpha_{mnk}} \ge \abs{\sum_k \alpha_{mnk}} = \abs{A_{mn}}$
	for all $m,n$.  Since $\uvec$ and $\vvec$ are
	nonnegative,
	\begin{align}
		\bval &\ge \sum_{mnk} u_m \abs{\alpha_{mnk}} v_n
		\\ &\ge \sum_{mn} u_m \abs{A_{mn}} v_n
		\\ &= \qnorm{\aA}.
	\end{align}
\end{proof}

\begin{proof}[Proof of
	\cref{thm:best_eps_qnorm}\eqref{part:best_dyads}-\eqref{part:best_dyads_query}]
	Let $\sigma$ be an $M \times N$ matrix.
	Let $s_i$, $\uvec^{(i)}$, and $\vvec^{(i)}$ take values achieving the minimum
	in~\eqref{eq:gen_tr_norm}.
	By absorbing phase into $\uvec^{(i)}$ we can assume that the $s_i$ are positive.
	We then have $w = \sum_i s_i$, $\pnorm{\uvec^{(i)}} = \qnorm{\vvec^{(i)}} = 1$,
	and $\sigma = \sum_i s_i \vvec^{(i)} \uvec^{(i)\top}$.
	Define
	\begin{align}
		P(n) &= \sum_i \frac{s_i}{w} \abs{u^{(i)}_n}^p,
		\\
		Q(m) &= \sum_i \frac{s_i}{w} \abs{v^{(i)}_m}^q.
	\end{align}
	Since $\uvec^{(i)}$ and $\vvec^{(i)}$ are normalized for all $i$, and since
	$\sum_i s_i/w = 1$, these $P(n)$ and $Q(m)$ are convex combinations of probability
	distributions and hence are probability distributions themselves.

	For any $m \in \{1,\dotsc,M\}, n \in \{1,\dotsc,N\}$, H{\"o}lder's inequality gives
	\begin{align}
		&\hspace{0.9cm}
			\sum_i
				\frac{s_i^{1/p}}{w^{1/p}} \abs{u^{(i)}_n} \cdot
				\frac{s_i^{1/q}}{w^{1/q}} \abs{v^{(i)}_m}
			\le
				\left[ \sum_i \left( \frac{s_i^{1/p}}{w^{1/p}}
						\abs{u^{(i)}_n} \right)^p \right]^{1/p}
				\left[ \sum_i \left( \frac{s_i^{1/q}}{w^{1/q}}
						\abs{v^{(i)}_m} \right)^q \right]^{1/q}
		\\ &\implies
			\sum_i
				\frac{s_i}{w}
				\abs{u^{(i)}_n v^{(i)}_m}
			\le
				\left[ \sum_i \frac{s_i}{w} \abs{u^{(i)}_n}^p \right]^{1/p}
				\left[ \sum_i \frac{s_i}{w} \abs{v^{(i)}_m}^q \right]^{1/q}
		\\ &\implies
			\abs{\sum_i
				\frac{s_i}{w}
				u^{(i)}_n v^{(i)}_m}
			\le P(n)^{1/p} Q(m)^{1/q}
		\\ &\implies
			\frac{\abs{\sigma_{mn}}}{w} \le P(n)^{1/p} Q(m)^{1/q}
		\\ &\implies
			\frac{\abs{\sigma_{mn}}}{ P(n)^{1/p} Q(m)^{1/q} } \le w
	\end{align}
	Therefore conditions~\eqref{cond:eht_alphasum}-\eqref{cond:eht_cost} of \cref{def:eht}
	are satisfied with $\alpha_{mn0}=\sigma_{mn}$ and $\bval=w$.

	If one is concerned with query complexity rather than time complexity, and if $\sigma$ is not
	defined in terms of an oracle, then
	conditions~\eqref{cond:eht_samplr}-\eqref{cond:eht_samprl} of \cref{def:eht}
	are satisfied trivially with $\fval=0$ since no oracle queries are needed in order to carry out
	the required operations.  So $\sigma$ is $\ehtp{w}{0}$.
\end{proof}

We now begin construction of the probability distributions satisfying
conditions~\eqref{cond:eps_alphasum}-\eqref{cond:eps_cost} of
\cref{def:eps} with $\bval = \qnorm{\aA}$.
The bulk of the discussion concerns the $p \in (1,\infty)$ case; the reader interested only in
$p=1$ or $p=\infty$ may skip directly to the second half of the proof of
\cref{thm:best_eps_qnorm}\eqref{part:best_bval_eq_qnorm}-\eqref{part:best_query}
at the end of this section.

It suffices to let $k$ take only a single value, say $k=0$, and to set $\alpha_{mn0} = A_{mn}$.
Making this simplification, and plugging in the desired bound
$\bval=\qnorm{\aA}$, conditions~\eqref{cond:eps_alphasum}-\eqref{cond:eps_cost} of
\cref{def:eps} become
\begin{equation}
	\label{eq:cost_simple}
	\max_{mn} \left\{
	\frac{\abs{A_{mn}}}{
		\Pdist{n|m}^{1/p}
		\Qdist{m|n}^{1/q}
	} \right\} \le \qnorm{\aA}.
\end{equation}
It will be convenient to derive the probability distributions from a pair of vectors.
With $A$ being an $M \times N$ matrix, let $\uvec$ be a positive vector of dimension $M$ and
let $\vvec$ be a positive vector of dimension $N$.  Taking the probability distributions
\begin{align}
	\Pdist{n|m} &= \abs{A_{mn}} v_n / [\aA  \vvec]_m, \\
	\Qdist{m|n} &= \abs{A_{mn}} u_m / [\aAT \uvec]_n
\end{align}
brings~\eqref{eq:cost_simple} to the form
\begin{equation}
	\label{eq:cost_uv}
	\max_{mn} \left\{
		\left( \frac{ [\aA \vvec]_m }{ v_n } \right)^{1/p}
		\left( \frac{ [\aA^\top \uvec]_n }{ u_m } \right)^{1/q}
	\right\} \le \qnorm{\aA}.
\end{equation}

Consider for a moment the case $p=q=2$.
If $\aA$ is not block diagonal (even under permutations of rows and columns) then the left and
right singular vectors of $\aA$ will be positive.  Taking these for $\uvec$ and $\vvec$ it is
easy to see that~\eqref{eq:cost_uv} holds.
If $p \ne 2$ we can use a sort of generalization of singular vectors: we will show the
existence of positive vectors satisfying
\begin{align}
	(\matabs{A}^\top \uvec)_n &\le v_n^{q/p} \qnorm{\matabs{A}},
	\label{eq:generalized_sv_u}
	\\
	(\matabs{A}      \vvec)_m &\le u_m^{p/q} \qnorm{\matabs{A}}.
	\label{eq:generalized_sv_v}
\end{align}
These vectors are easily seen to satisfy~\eqref{eq:cost_uv}.
If $\aA$ is not block diagonal then $\uvec$ and $\vvec$ can be computed using
the power method~\cite{Boyd197495,Bhaskara:2011:AMP:2133036.2133076} since $\aA$ is nonnegative.
In this case the inequalities~\eqref{eq:generalized_sv_u}-\eqref{eq:generalized_sv_v} become
equalities.
On the other hand, if $\aA$ is block diagonal then $\uvec$ and $\vvec$ can be built from the
generalized left and right singular vectors of each block.
The rest of this section is devoted to proving the existence of such vectors.

First we will need some basic facts about $\ell^p$-norms.
If $\vvec$ is a real vector normalized under the $\ell^2$-norm then $\uvec=\vvec$ is the unique
$\ell^2$-normalized vector with the property that $\vecprod{\uvec}{\vvec}=1$.
This generalizes to arbitrary $\ell^p$-norms, with some adaptation.

\begin{definition}
	Let $p,q \in [1,\infty]$ and $1/p + 1/q = 1$.
	Let $\vvec \in \ell^q$.
	Any $\uvec \in \ell^p$ satisfying the conditions
	$\vecprod{\uvec}{\vvec} = \qnorm{\vvec}$ and
	$\pnorm{\uvec} = 1$ is called a \textit{support functional} of $\vvec$.
\end{definition}

\begin{lemma}
	\label{thm:support_fcn}
	Let $p,q \in (1,\infty)$ and $1/p + 1/q = 1$.
	For any nonzero $\vvec \in \ell^q$, the vector $\uvec \in \ell^p$ defined by
	\begin{equation}
		\label{eq:support_qp}
		u_i = \qnorm{\vvec}^{-q/p} \abs{v_i}^{q/p} \sgn(v_i)
	\end{equation}
	is the unique support functional of $\vvec$.
	Similarly, for any nonzero $\uvec \in \ell^p$, the vector $\vvec \in \ell^q$ defined by
	\begin{equation}
		\label{eq:support_pq}
		v_i = \pnorm{\uvec}^{-p/q} \abs{u_i}^{p/q} \sgn(u_i)
	\end{equation}
	is the unique support functional of $\uvec$.
\end{lemma}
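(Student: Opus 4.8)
The plan is to split the statement into an existence part (the displayed $\uvec$ genuinely \emph{is} a support functional of $\vvec$) and a uniqueness part (nothing else is), and then to get the companion statement about \eqref{eq:support_pq} for free by symmetry.

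First I would just verify \eqref{eq:support_qp}. Since $\vvec \ne 0$ we have $\qnorm{\vvec} > 0$, so the formula is well defined, and using $1/p + 1/q = 1$ — equivalently $q/p = q-1$ and $(q/p)p = q$ — a one-line computation gives $\pnorm{\uvec}^p = \qnorm{\vvec}^{-q}\sum_i\absnolr{v_i}^q = 1$ and, since $\sgn(v_i)\,v_i = \absnolr{v_i}$,
\[
\vecprod{\uvec}{\vvec} = \sum_i u_i v_i = \qnorm{\vvec}^{-q/p}\sum_i\absnolr{v_i}^{\,q/p+1} = \qnorm{\vvec}^{-(q-1)}\,\qnorm{\vvec}^{q} = \qnorm{\vvec},
\]
so $\uvec$ meets both defining conditions.

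The substance is uniqueness, which I would extract from the rigidity of Hölder's inequality. Given any support functional $\uvec'$, the chain
\[
\qnorm{\vvec} = \vecprod{\uvec'}{\vvec} = \abs{\textstyle\sum_i u'_i v_i} \le \sum_i\absnolr{u'_i}\,\absnolr{v_i} \le \pnorm{\uvec'}\,\qnorm{\vvec} = \qnorm{\vvec}
\]
forces equality throughout. Equality in the triangle step makes every $u'_i v_i$ a nonnegative real (their sum is), which pins $\sgn(u'_i)$ wherever $v_i \ne 0$; equality in Hölder's inequality, valid because $1 < p,q < \infty$, makes $(\absnolr{u'_i}^p)_i$ proportional to $(\absnolr{v_i}^q)_i$, and normalizing via $\pnorm{\uvec'}^p = 1 = \qnorm{\vvec}^{-q}\sum_i\absnolr{v_i}^q$ fixes the constant, giving $\absnolr{u'_i} = \qnorm{\vvec}^{-q/p}\absnolr{v_i}^{q/p}$ (and $u'_i = 0$ where $v_i = 0$). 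Modulus plus sign is exactly \eqref{eq:support_qp}, so $\uvec' = \uvec$. Finally, the statement about \eqref{eq:support_pq} is the same result with the roles of $p$ and $q$ (and of $\uvec$ and $\vvec$) interchanged, which is legitimate since the hypothesis $1/p+1/q=1$ is symmetric.

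The hard part is really just making the equality analysis of Hölder airtight — keeping the two equality conditions (sign/phase alignment from the triangle inequality, and proportionality of the $p$th and $q$th powers) separate and checking that, together with the two normalizations, they determine $\uvec'$ entrywise. This is also precisely where the hypothesis $1 < p,q < \infty$ is needed: at $p=1$ or $p=\infty$ the equality case of Hölder is not rigid and a support functional need not be unique. One minor care-point, should complex $\vvec$ be allowed, is reading $\sgn(\cdot)$ and the alignment condition with the appropriate conjugation; this is vacuous in the application of \cref{sec:uAv}, where the vectors in question are nonnegative.
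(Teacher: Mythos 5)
Your proof is correct. The paper handles this lemma entirely by citation: existence (that \eqref{eq:support_qp} and \eqref{eq:support_pq} really are support functionals) is attributed to a direct-computation reference, and uniqueness is attributed to strict convexity of the $\ell^p$-norm for $1<p<\infty$ (with a textbook citation). You instead give a self-contained argument: the existence computation is written out (the same computation the paper points to), and uniqueness is extracted from the rigidity of H\"older's inequality — equality forces sign alignment term-by-term and proportionality of $(\absnolr{u'_i}^p)$ to $(\absnolr{v_i}^q)$, and the two normalizations then pin $\uvec'$ entrywise. This is essentially the same geometric fact the paper invokes (the equality case of H\"older \emph{is} the elementary face of strict convexity of the $\ell^p$ ball), but your route has the advantage of being self-contained, of making visible exactly where $1<p,q<\infty$ is used (the rigidity fails at the endpoints), and of producing the explicit formula for $\uvec'$ as a by-product rather than merely asserting uniqueness. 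The cost is a longer proof; the paper chose brevity by deferring both halves to references. Your closing remark about complex entries and $\sgn$ is a fair caution, though as you note it is moot in \cref{sec:uAv}, where the lemma is applied only to nonnegative vectors arising from $\aA$.
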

\begin{proof}
	Uniqueness of the support functional when $1<p<\infty$ follows from strict convexity of the
	norm (chapter 11 of~\cite{Carothers200412}).
	That the specific vectors~\eqref{eq:support_qp} and~\eqref{eq:support_pq} are support
	functionals is easily verified through direct computation~\cite{ArmstrongHolderDual}.
\end{proof}

We now describe generalized singular vectors.
Ordinary ($p=2$) left and right singular vectors $\uvec$ and $\vvec$ satisfy
$\twonorm{A \vvec} = \twonorm{A^\top \uvec} = \twonorm{A}$, furthermore $\uvec$ is
the support functional of $A\vvec$ (since $p=2$ this just means that $\uvec
\propto A\vvec$), and $\vvec$ is the support functional of $A^\top \uvec$.
These properties generalize to arbitrary $\ell^p$-norms, as we now show.

\begin{theorem}
	\label{thm:uAv}
	Let $p,q \in [1,\infty]$ and $1/p + 1/q = 1$.
	Let $A$ be a matrix.
	Then there are vectors $\uvec \in \ell^p$ and $\vvec \in \ell^q$ such that
	\begin{enumerate}[(a)]
		\item $\pnorm{\uvec} = \qnorm{\vvec} = 1$
		\item $\vecprod{\uvec}{A \vvec} = \pnorm{A^\top \uvec} = \qnorm{A\vvec} = \qnorm{A} =
			\pnorm{A^\top}$
		\item $\uvec$ is a support functional of $A\vvec$
		\item $\vvec$ is a support functional of $A^\top \uvec$.
		\item If $A$ is nonnegative then $\uvec$ and $\vvec$ are nonnegative.
	\end{enumerate}
\end{theorem}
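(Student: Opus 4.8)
The plan is to produce $\vvec$ as a maximizer of $\vvec \mapsto \qnorm{A\vvec}$ over the unit sphere $\{\qnorm{\vvec}=1\}$, to take $\uvec$ to be a support functional of the vector $A\vvec$, and then to use the duality $\pnorm{A^\top}=\qnorm{A}$ of induced operator norms to show that $\vvec$ is automatically a support functional of $A^\top\uvec$. This mirrors the $p=2$ picture in which $\vvec,\uvec$ form a right/left singular pair with $\uvec \propto A\vvec$ and $\vvec \propto A^\top\uvec$.

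First I would observe that, $A$ being a matrix, the sphere $\{\qnorm{\vvec}=1\}$ is compact and $\vvec \mapsto \qnorm{A\vvec}$ is continuous, so a maximizer $\vvec$ exists with $\qnorm{A\vvec}=\qnorm{A}$ by the definition of the induced norm. If $A$ is nonnegative then $\abs{[A\vvec]_m}\le[A\vvec']_m$ entrywise, where $\vvec'$ is the entrywise absolute value of $\vvec$, while $\qnorm{\vvec'}=\qnorm{\vvec}$ and $\qnorm{\cdot}$ is monotone under entrywise modulus; hence the maximizer may be taken nonnegative, in which case $A\vvec$ is nonnegative too. Next, let $\uvec$ be a support functional of $A\vvec$: for $p \in (1,\infty)$ existence (and the nonnegativity of~\eqref{eq:support_qp} when $A\vvec\ge 0$) is \cref{thm:support_fcn}, while for $p=1$ or $p=\infty$ one is built by hand — concentrating all mass on a coordinate of $A\vvec$ of maximal modulus when $q=\infty$, or taking unit-modulus phases of the coordinates when $q=1$ — and can again be chosen nonnegative when $A\vvec\ge 0$. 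By construction $\pnorm{\uvec}=1$, $\uvec$ is a support functional of $A\vvec$, and $\vecprod{\uvec}{A\vvec} = \qnorm{A\vvec} = \qnorm{A}$, which establishes (a), (c), (e), and the identities $\vecprod{\uvec}{A\vvec}=\qnorm{A\vvec}=\qnorm{A}$ needed for (b).

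It remains to get (d) together with $\pnorm{A^\top\uvec}=\qnorm{A}=\pnorm{A^\top}$. For this I would first record the standard fact $\pnorm{A^\top}=\qnorm{A}$ for induced norms, obtained by swapping two suprema in $\qnorm{A}=\sup_{\qnorm{\vvec}\le1}\sup_{\pnorm{\uvec}\le1}\vecprod{\uvec}{A\vvec}=\sup_{\pnorm{\uvec}\le1}\sup_{\qnorm{\vvec}\le1}\vecprod{A^\top\uvec}{\vvec}$. Then, since $\vecprod{A^\top\uvec}{\vvec}=\vecprod{\uvec}{A\vvec}$,
\begin{equation}
	\qnorm{A} = \vecprod{\uvec}{A\vvec} = \vecprod{A^\top\uvec}{\vvec}
	\le \pnorm{A^\top\uvec}\,\qnorm{\vvec} = \pnorm{A^\top\uvec}
	\le \pnorm{A^\top}\,\pnorm{\uvec} = \pnorm{A^\top} = \qnorm{A},
\end{equation}
so every inequality above is an equality. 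In particular $\pnorm{A^\top\uvec}=\qnorm{A}$, and $\vecprod{\vvec}{A^\top\uvec}=\pnorm{A^\top\uvec}$ with $\qnorm{\vvec}=1$, which is exactly the statement that $\vvec$ is a support functional of $A^\top\uvec$, giving (d); the same chain of equalities also completes (b).

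I expect the main obstacle to be the boundary cases $p\in\{1,\infty\}$, where \cref{thm:support_fcn} is unavailable (support functionals there need not be unique) and the hand-built support functionals above, along with their nonnegativity when $A\ge 0$, require a brief separate verification. A lesser issue is that the compactness step presumes $A$ is finite-dimensional; for genuinely infinite-dimensional $\ell^p$ one would instead lean on the block-diagonal reduction and power-method construction indicated in the surrounding text, but for a matrix the elementary argument suffices.
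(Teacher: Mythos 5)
Your proof is correct and follows essentially the same route as the paper's: choose $\vvec$ to maximize $\qnorm{A\vvec}$ over the unit $\ell^q$-sphere, take $\uvec$ to be a support functional of $A\vvec$, and run the chain of inequalities $\qnorm{A}=\vecprod{\uvec}{A\vvec}=\vecprod{A^\top\uvec}{\vvec}\le\pnorm{A^\top\uvec}\le\pnorm{A^\top}$ to force all steps to be equalities and extract (b) and (d). The only structural differences are cosmetic: you establish $\pnorm{A^\top}=\qnorm{A}$ up front by swapping suprema, whereas the paper obtains it at the end by applying the chain symmetrically to $A^\top$; and you secure nonnegativity of $\uvec,\vvec$ during the construction by picking the nonnegative maximizer and a nonnegative support functional, whereas the paper builds arbitrary $\uvec,\vvec$ first and then argues post hoc that $\matabs{\uvec},\matabs{\vvec}$ would have served equally well. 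You are also more explicit than the paper about the endpoint cases $p\in\{1,\infty\}$, where \cref{thm:support_fcn} does not apply and a support functional of $A\vvec$ must be built by hand (the paper simply invokes the existence of a support functional without comment); your explicit constructions and the check that they can be taken nonnegative when $A\vvec\ge 0$ are correct and plug the small gap one might otherwise flag in the paper's own write-up.
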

\begin{proof}
	Let $\vvec$ be a vector satisfying $\qnorm{\vvec}=1$ and $\qnorm{A\vvec} = \qnorm{A}$.
	Such a vector is guaranteed to exist (see definition~5.6.1 of~\cite{HornJohnson199002}).
	Let $\uvec$ be a support functional of $A\vvec$.
	By the definition of a support functional, $\pnorm{\uvec}=1$ so claims (a) and (c) have
	been proved.
	With these two vectors defined, we have
	\begin{align}
		\qnorm{A} &= \qnorm{A\vvec}
		\\ &= \vecprod{\uvec}{A\vvec}
			&& \mbox{($\uvec$ is the support functional of $A\vvec$)}
		\\ &= \vecprod{\vvec}{(A^\top \uvec)}
		\\ &\le \qnorm{\vvec} \pnorm{A^\top \uvec}
			&& \mbox{(H{\"o}lder's inequality)}
		\\ &= \pnorm{A^\top \uvec}
		\\ &\le \pnorm{A^\top} \pnorm{\uvec}
		\\ &= \pnorm{A^\top}.
	\end{align}
	By symmetry we also have $\pnorm{A^\top} \le \qnorm{A}$, therefore the inequalities become
	equalities.  Claim (b) is proved.
	Since $\qnorm{\vvec}=1$ and $\vecprod{\vvec}{(A^\top \uvec)} = \pnorm{A^\top \uvec}$,
	claim (d) is proved as well.

	To prove claim (e), assume that $A$ is nonnegative.
	Then $\pnorm{\matabs{\uvec}}=\qnorm{\matabs{\vvec}}=1$ and
	$\qnorm{A\matabs{\vvec}} \ge \vecprod{\matabs{\uvec}}{A\matabs{\vvec}} \ge
	\vecprod{\uvec}{A\vvec} = \qnorm{A}$.
	It follows that $\qnorm{A\matabs{\vvec}}=\qnorm{A}$, thus $\matabs{\uvec}$ is a
	support functional of $A\matabs{\vvec}$.
	Therefore $\matabs{\uvec}$ and $\matabs{\vvec}$ could have been taken instead of $\uvec$
	and $\vvec$ in the first steps of
	this proof, justifying the claim that $\uvec$ and $\vvec$ can be chosen to be nonnegative.
\end{proof}

The Perron-Frobenius theorem states that an irreducible nonnegative matrix has a first
eigenvector that has positive components.  A similar statement holds for the first singular
vector: if $\aA$ is a nonnegative matrix that is not block diagonal then the left and right
singular vectors associated with the largest singular value of $\aA$ have positive entries.
This is true also for our generalized singular vectors, as we now show.

\begin{definition}
	A matrix $A$ is \emph{block diagonal} if there are permutation matrices $\sigma$ and $\tau$
	such that $A$ can be decomposed as $\matabs{A} = \sigma^\top ( A^{(1)} \oplus \dotsb
	\oplus A^{(L)} \oplus \zeromat^{M \times N} ) \tau$ where the $A^{(l)}$ are nonzero and have
	nonvanishing dimension, and at least one of the inequalities $L>1$, $M>0$, or $N>0$
	holds.\footnote{
		If $M>0,N=0$ then $\oplus \zeromat^{M \times N}$ adds $M$ rows of zeros.
		Similarly, if $M=0,N>0$ then $\oplus \zeromat^{M \times N}$ adds $N$ columns of zeros.
	}
	A matrix is \emph{not block diagonal} if no such decomposition is possible.  In
	particular, a matrix that is not block diagonal has no totally zero rows or columns.
\end{definition}

\begin{lemma}
	\label{thm:pos_vec}
	Let $q \in (1,\infty)$.
	Let $\aA$ be a nonnegative matrix that is not block diagonal.
	Let $\vvec$ be a nonzero, nonnegative vector that maximizes
	$\qnorm{\aA \vvec}/\qnorm{\vvec}$.
	Then $\vvec$ is in fact a positive vector (has no zero entries).
\end{lemma}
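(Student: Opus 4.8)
The plan is to prove the statement by contradiction: I assume that $\vvec$ has a vanishing entry and show that $\aA$ would then have to be block diagonal. First I normalize so that $\qnorm{\vvec}=1$, so that $\qnorm{\aA\vvec}=\qnorm{\aA}$, and note $\qnorm{\aA}>0$ since a matrix that is not block diagonal is nonzero and has no zero rows or columns. Next I let $\uvec$ be the support functional of $\aA\vvec$; this is well defined and unique by \cref{thm:support_fcn} (since $q\in(1,\infty)$ forces $p\in(1,\infty)$), and because $\aA\vvec\ge\zerovec$ the explicit formula of \cref{thm:support_fcn} makes $\uvec\ge\zerovec$ with $u_m=0\iff(\aA\vvec)_m=0$; moreover $\pnorm{\uvec}=1$ and $\vecprod{\uvec}{\aA\vvec}=\qnorm{\aA\vvec}=\qnorm{\aA}$.

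The key step is then to pin $\vvec$ itself down as a support functional of $\aAT\uvec$. Running Hölder's inequality gives
\[
\qnorm{\aA}=\vecprod{\uvec}{\aA\vvec}=\vecprod{\vvec}{(\aAT\uvec)}\le\qnorm{\vvec}\pnorm{\aAT\uvec}\le\pnorm{\aAT}=\qnorm{\aA},
\]
using $\pnorm{\aAT}=\qnorm{\aA}$ from \cref{thm:uAv}. Hence every inequality is an equality, so $\vvec$ is a support functional of $\aAT\uvec$ (which is nonzero, its $p$-norm being $\qnorm{\aA}>0$); by uniqueness and the explicit formula of \cref{thm:support_fcn}, together with $\aAT\uvec\ge\zerovec$, I obtain $v_n=0\iff(\aAT\uvec)_n=0$.

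Now I translate these equivalences into a zero-pattern for $\aA$. Since $\aA,\uvec,\vvec\ge\zerovec$: if $v_n=0$ then $\sum_m\aA_{mn}u_m=0$ forces $\aA_{mn}=0$ whenever $u_m>0$, and symmetrically if $u_m=0$ then $\aA_{mn}=0$ whenever $v_n>0$. Writing $S_u,Z_u$ and $S_v,Z_v$ for the supports and zero-sets of $\uvec$ and $\vvec$, this says $\aA$ vanishes on $S_u\times Z_v$ and on $Z_u\times S_v$, so permuting rows into the order $(S_u,Z_u)$ and columns into $(S_v,Z_v)$ brings $\aA$ into the block form $\aA'\oplus C$ with $\aA'=\aA|_{S_u\times S_v}$ and $C=\aA|_{Z_u\times Z_v}$. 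Here $S_v\ne\emptyset$ (as $\vvec\ne\zerovec$), $S_u\ne\emptyset$ (as $\aA\vvec\ne\zerovec$), and $\aA'\ne\zeromat$ because for $m\in S_u$ one has $\sum_{n\in S_v}\aA_{mn}v_n=(\aA\vvec)_m>0$. Under the contradiction hypothesis $Z_v\ne\emptyset$, I split into three cases: if $Z_u=\emptyset$ then (after the column permutation) $\aA=[\aA'\mid\zeromat]$ has $|Z_v|>0$ zero columns; if $Z_u\ne\emptyset$ and $C=\zeromat$ then $\aA$ is $\aA'$ together with a trailing $|Z_u|\times|Z_v|$ zero block; and if $Z_u\ne\emptyset$ and $C\ne\zeromat$ then $\aA=\aA'\oplus C$ is a genuine two-block decomposition with both blocks nonzero and of nonvanishing dimension. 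In each case $\aA$ meets the definition of block diagonal, contradicting the hypothesis; hence $Z_v=\emptyset$ and $\vvec$ is positive.

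The main obstacle I anticipate is not the analytic content but the bookkeeping against the precise definition of ``block diagonal'': one must notice that a degenerate complementary block (empty $Z_u$, or $C=\zeromat$) still furnishes a valid decomposition of the required form via the trailing zero block $\zeromat^{M\times N}$, and that no further irreducibility of $\aA'$ is needed. The only other delicate point is ensuring that it is the \emph{given} maximizer $\vvec$, rather than merely some abstract pair produced by \cref{thm:uAv}, that is identified as a support functional of $\aAT\uvec$ --- which is exactly what the Hölder chain above is doing.
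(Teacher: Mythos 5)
Your proof is correct, and it takes a genuinely different route from the paper's. The paper argues variationally: after normalizing $\qnorm{\vvec}=1$, it observes that non--block-diagonality of $\aA$ supplies indices $i\in Z$, $j\notin Z$, $k$ with $\aA_{ki}>0$ and $\aA_{kj}>0$, and then computes the directional derivative of $\qnorm{\aA\vvec}^q/\qnorm{\vvec}^q$ at $\vvec$ in the direction $\ihat$, showing it is strictly positive --- so $\vvec$ cannot even be a critical point, contradicting maximality. You instead exploit the duality machinery already developed in \cref{thm:support_fcn,thm:uAv}: you pair the given maximizer $\vvec$ with the unique support functional $\uvec$ of $\aA\vvec$, run Hölder in both directions (using $\qnorm{A}=\pnorm{A^\top}$ from \cref{thm:uAv}) to force $\vvec$ to be the unique support functional of $\aAT\uvec$, read off the zero/support dichotomies $u_m=0\iff(\aA\vvec)_m=0$ and $v_n=0\iff(\aAT\uvec)_n=0$ from the explicit formula of \cref{thm:support_fcn}, and translate these into vanishing of $\aA$ on $S_u\times Z_v$ and $Z_u\times S_v$, exhibiting a forbidden block decomposition. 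The paper's derivative argument is shorter and more elementary (it only needs calculus and one structural observation), but it cites non--block-diagonality through an implicit combinatorial step and never makes the block structure visible. Your argument is a little longer and leans on \cref{thm:uAv}, but it is more structural --- it produces the block decomposition explicitly --- and your case analysis against the precise definition of block diagonal (noting that $Z_u=\emptyset$ or $C=\zeromat$ still yield a valid decomposition via the trailing $\zeromat^{M\times N}$ block, and that no irreducibility of $\aA'$ is needed) closes exactly the bookkeeping that a careless reader might miss. Both approaches are sound; the choice is between a local, analytic contradiction and a global, algebraic one.
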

\begin{proof}
	Let $Z = \{ i : v_i = 0 \}$.
	This will be a proof by contradiction;
	suppose that $\vvec$ has at least one zero entry, so that $Z$ is nonempty.
	Since $\vvec \ne 0$, the complement $Z^C$ is nonempty, therefore
	$Z$ and $Z^C$ partition the entries of $\vvec$ into two nonempty sets.
	Also, $Z$ and $Z^C$ can be considered as a partition of the columns of $\aA$.
	Since $\aA$ is not block diagonal, there must be indices $i \in Z$, $j \notin Z$, and $k$
	such that $\aA_{ki} > 0$ and $\aA_{kj} > 0$.
	We will show that $\vvec$ cannot maximize $\qnorm{\aA \vvec}/\qnorm{\vvec}$ by showing that
	$\vvec$ is not a critical point of $\qnorm{\aA \vvec}/\qnorm{\vvec}$, or equivalently of
	$\qnorm{\aA \vvec}^q/\qnorm{\vvec}^q$.
	Without loss of generality take $\qnorm{\vvec}=1$.
	Let $\ihat$ be the unit vector corresponding to $i$.
	We have
	\begin{align}
		\left.
		\frac{\partial}{\partial \alpha}
			\frac{\qnorm{\aA(\vvec+\alpha \ihat)}^q}{\qnorm{\vvec + \alpha \ihat}^q}
			\right|_{\alpha=0}
		&= \left. \frac{
				\left(
				\frac{\partial}{\partial \alpha}
				\qnorm{\aA(\vvec+\alpha \ihat)}^q
				\right)
				\qnorm{\vvec}^q -
				\qnorm{\aA\vvec}^q
				\left(
				\frac{\partial}{\partial \alpha}
				\qnorm{\vvec + \alpha \ihat}^q
				\right)
			}{\qnorm{\vvec}^{2q}}
			\right|_{\alpha=0} \\
		\label{eq:qpos_eqone}
		&= \left. \frac{\partial}{\partial \alpha} \qnorm{\aA(\vvec+\alpha \ihat)}^q
			\right|_{\alpha=0} \\
		&= \left. \frac{\partial}{\partial \alpha}
			\sum_l ([\aA \vvec]_l + \alpha \aA_{li})^q
			\right|_{\alpha=0} \\
		&= \sum_l q \aA_{li} [\aA \vvec]_l^{q-1} \\
		\label{eq:qpos_ineq_one}
		&\ge q \aA_{ki} [\aA \vvec]_k^{q-1} \\
		\label{eq:qpos_ineq_two}
		&\ge q \aA_{ki} (\aA_{kj} v_j)^{q-1} \\
		&> 0.
	\end{align}
	Equality~\eqref{eq:qpos_eqone} follows from $\qnorm{\vvec}=1$ as well as
	$(v_i=0 \implies \partial \qnorm{\vvec + \alpha \ihat}^q/\partial \alpha=0)$.
	Inequality~\eqref{eq:qpos_ineq_one} follows from each term of the previous summation being
	nonnegative.
	Inequality~\eqref{eq:qpos_ineq_two} follows from each term of the sum
	$[\aA \vvec]_k = \sum_n \aA_{kn} v_n$ being nonnegative.
\end{proof}

\begin{theorem}
	\label{thm:uAv_pos_noblock}
	Let $p,q \in (1,\infty)$ and $1/p + 1/q = 1$.
	Let $\aA$ be a nonnegative matrix that is not block diagonal.
	Then there are positive vectors $\uvec$ and $\vvec$ satisfying
	\begin{align}
		\label{eq:pos_Au_v_nb}
		(\matabs{A}^\top \uvec)_n &= v_n^{q/p} \qnorm{\matabs{A}}, \\
		\label{eq:pos_Av_u_nb}
		( \matabs{A}\vvec )_m &= u_m^{p/q} \qnorm{\matabs{A}}.
	\end{align}
	Note: if $p=q=2$ then $\uvec$ and $\vvec$ will be the left and right singular vectors
	associated with the largest singular value of $\aA$.
\end{theorem}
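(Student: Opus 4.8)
The plan is to assemble the statement from three results already established in this appendix: \cref{thm:uAv} (existence of generalized singular vectors, with the accompanying support-functional relations), \cref{thm:pos_vec} (a Perron--Frobenius-type positivity statement for non--block-diagonal nonnegative matrices), and the explicit formula for support functionals in \cref{thm:support_fcn}. Once those are in hand, the identities \eqref{eq:pos_Au_v_nb}--\eqref{eq:pos_Av_u_nb} are just the support-functional relations of \cref{thm:uAv} rewritten via that explicit formula, and the positivity of $\uvec,\vvec$ follows from \cref{thm:pos_vec} together with the fact (part of the definition of block diagonal used in the excerpt) that a matrix which is not block diagonal has no totally zero row or column.

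First I would apply \cref{thm:uAv} with the nonnegative matrix $\matabs{A}$ in the role of $A$. This produces nonnegative vectors $\uvec \in \ell^p$ and $\vvec \in \ell^q$ with $\pnorm{\uvec} = \qnorm{\vvec} = 1$ and $\qnorm{\matabs{A}\vvec} = \pnorm{\matabs{A}^\top \uvec} = \qnorm{\matabs{A}}$, such that $\uvec$ is a support functional of $\matabs{A}\vvec$ and $\vvec$ is a support functional of $\matabs{A}^\top \uvec$. Since $\qnorm{\vvec}=1$ while $\qnorm{\matabs{A}\vvec} = \qnorm{\matabs{A}} = \max_{\xvec \ne 0} \qnorm{\matabs{A}\xvec}/\qnorm{\xvec}$, the nonzero nonnegative vector $\vvec$ maximizes $\qnorm{\matabs{A}\xvec}/\qnorm{\xvec}$, so \cref{thm:pos_vec} applies and $\vvec$ is strictly positive.

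Next I would extract positivity of $\uvec$ and the two identities from \cref{thm:support_fcn}. Because $\matabs{A}$ is not block diagonal it has no zero rows, so with $\vvec$ strictly positive the vector $\matabs{A}\vvec$ is entrywise positive; the explicit support-functional formula then gives $u_m = \qnorm{\matabs{A}\vvec}^{-q/p}\,[\matabs{A}\vvec]_m^{q/p}$ (the sign and absolute value disappearing since everything is positive), whence $\uvec$ is positive, and substituting $\qnorm{\matabs{A}\vvec} = \qnorm{\matabs{A}}$ and raising to the power $p/q$ yields $[\matabs{A}\vvec]_m = u_m^{p/q}\qnorm{\matabs{A}}$, which is \eqref{eq:pos_Av_u_nb}. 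Applying \cref{thm:support_fcn} the other way to the (now entrywise positive) vector $\matabs{A}^\top \uvec \in \ell^p$ gives $v_n = \pnorm{\matabs{A}^\top \uvec}^{-p/q}\,[\matabs{A}^\top \uvec]_n^{p/q}$; substituting $\pnorm{\matabs{A}^\top \uvec} = \qnorm{\matabs{A}}$ and raising to the power $q/p$ yields $[\matabs{A}^\top \uvec]_n = v_n^{q/p}\qnorm{\matabs{A}}$, which is \eqref{eq:pos_Au_v_nb}. The closing remark is then immediate: when $p=q=2$ one has $q/p = p/q = 1$, so the identities read $\matabs{A}^\top \uvec = \qnorm{\matabs{A}}\,\vvec$ and $\matabs{A}\vvec = \qnorm{\matabs{A}}\,\uvec$, i.e.\ $\uvec,\vvec$ are the left and right singular vectors of $\matabs{A}$ for its largest singular value.

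There is no substantive obstacle here; the argument is essentially bookkeeping on top of the three cited results. The only points needing care are (i) keeping straight which of $p,q$ appears in each support-functional formula and which exponent ($q/p$ versus $p/q$) shows up where, and (ii) using the non--block-diagonal hypothesis to rule out vanishing rows and columns of $\matabs{A}$, since this is precisely what guarantees that $\matabs{A}\vvec$ and $\matabs{A}^\top \uvec$ have no zero entries (hence $\uvec$ strictly positive) and that $\qnorm{\matabs{A}}>0$, so all the divisions above are legitimate.
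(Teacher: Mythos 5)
Your proof is correct and takes essentially the same route as the paper: apply \cref{thm:uAv} to get nonnegative $\uvec,\vvec$, then rewrite the support-functional relations via \cref{thm:support_fcn} to obtain \eqref{eq:pos_Au_v_nb}--\eqref{eq:pos_Av_u_nb}, and invoke \cref{thm:pos_vec} for positivity. The only (minor) difference is that the paper invokes \cref{thm:pos_vec} for both $\uvec$ and $\vvec$, whereas you apply it only to $\vvec$ and then deduce positivity of $\uvec$ directly from the no-zero-rows condition together with the explicit support-functional formula, which is a slightly cleaner bookkeeping of the same idea.
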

\begin{proof}
	\Cref{thm:uAv} guarantees the existence of nonnegative vectors $\uvec$ and $\vvec$ that
	satisfy $\pnorm{\uvec}=\qnorm{\vvec}=1$ and
	$\vecprod{\uvec}{\matabs{A} \vvec} = \qnorm{\matabs{A}} = \pnorm{\matabs{A}^\top}$
	with $\uvec$ being the support functional of $A\vvec$ and $\vvec$ being the support
	functional of $A^\top \uvec$.
	\Cref{thm:support_fcn} give the exact form of these support
	functionals:
	\begin{align}
		u_m &= \qnorm{\matabs{A} \vvec}^{-q/p} (\matabs{A} \vvec)_m^{q/p}
			\sgn(\matabs{A} \vvec) \\
		v_n &= \pnorm{\matabs{A}^\top \uvec}^{-p/q} (\matabs{A}^\top \uvec)_n^{p/q}
			\sgn(\matabs{A}^\top \uvec).
	\end{align}
	Since $\matabs{A}$, $\uvec$, and $\vvec$ are nonnegative, the $\sgn$ functions disappear.
	\Cref{thm:uAv} gives $\qnorm{\matabs{A} \vvec} = \pnorm{\matabs{A}^\top \uvec} =
	\qnorm{\matabs{A}}$.
	With these simplifications, we get~\eqref{eq:pos_Au_v_nb}-\eqref{eq:pos_Av_u_nb}.
	That $\uvec$ and $\vvec$ have nonzero entries follows from \Cref{thm:pos_vec}.
\end{proof}

We now generalize \cref{thm:uAv_pos_noblock} to
matrices that are not block diagonal.  This is done by applying
\cref{thm:uAv_pos_noblock} to each individual block of the matrix.
Each block of $\aA$ may have a different operator norm, but each of these is upper bounded by
$\qnorm{\aA}$.
For this reason, we end up with an inequality rather than an equality when generalizing
\eqref{eq:pos_Au_v_nb}-\eqref{eq:pos_Av_u_nb}.

\begin{theorem}
	\label{thm:uAv_pos}
	Let $p,q \in (1,\infty)$ and $1/p + 1/q = 1$.
	Let $\aA$ be a nonnegative matrix that can possibly be block diagonal and that may have
	some totally zero rows or columns.
	Then there are positive vectors $\uvec$ and $\vvec$ satisfying
	\begin{align}
		\label{eq:pos_Au_v}
		(\matabs{A}^\top \uvec)_n &\le v_n^{q/p} \qnorm{\matabs{A}}, \\
		\label{eq:pos_Av_u}
		( \matabs{A}\vvec )_m &\le u_m^{p/q} \qnorm{\matabs{A}}.
	\end{align}
\end{theorem}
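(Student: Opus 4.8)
The plan is to reduce to \Cref{thm:uAv_pos_noblock} by splitting $\matabs{A}$ into its irreducible blocks. First I would dispose of the degenerate case $\qnorm{\matabs{A}}=0$ (that is, $\matabs{A}=\zeromat$): here one may simply take $\uvec$ and $\vvec$ to be the all-ones vectors, so that both \eqref{eq:pos_Au_v} and \eqref{eq:pos_Av_u} read $0\le 0$. For the remainder assume $\qnorm{\matabs{A}}>0$. Next I would invoke the definition of block diagonal to write $\matabs{A}=\sigma^\top\bigl(A^{(1)}\oplus\dotsb\oplus A^{(L)}\oplus \zeromat^{M\times N}\bigr)\tau$ with permutations $\sigma,\tau$, where each $A^{(l)}$ is nonzero and \emph{not} block diagonal and the trailing $\zeromat^{M\times N}$ collects precisely the rows and columns of $\matabs{A}$ that vanish identically. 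Such a maximal decomposition exists: form the bipartite graph whose vertex classes are the rows and the columns of $\matabs{A}$, joining row $m$ to column $n$ when $\matabs{A}_{mn}>0$; each connected component containing an edge yields one block $A^{(l)}$ (automatically not block diagonal, hence with no zero row or column), and the isolated vertices constitute the zero rows and columns. Equivalently one can iterate the block decomposition of the definition, which terminates since the dimensions are finite.

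Then I would apply \Cref{thm:uAv_pos_noblock} to each $A^{(l)}$, obtaining positive vectors $\uvec^{(l)},\vvec^{(l)}$ with $\pnorm{\uvec^{(l)}}=\qnorm{\vvec^{(l)}}=1$ and, componentwise,
\[
	(A^{(l)\top}\uvec^{(l)})_n=(v^{(l)}_n)^{q/p}\qnorm{A^{(l)}},
	\qquad
	(A^{(l)}\vvec^{(l)})_m=(u^{(l)}_m)^{p/q}\qnorm{A^{(l)}}.
\]
For each row of $\matabs{A}$ belonging to the $\zeromat^{M\times N}$ block I would set the corresponding entry of $\uvec$ to $1$, and likewise each zero column's entry of $\vvec$ to $1$; reassembling (that is, undoing $\sigma$ and $\tau$) produces globally positive vectors $\uvec$ and $\vvec$. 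The one auxiliary fact needed is that the induced $\qnorm{\cdot}$ norm of a block diagonal matrix is the maximum of the norms of its blocks, so in particular $\qnorm{A^{(l)}}\le\qnorm{\matabs{A}}$ for every $l$; this follows in one line from $\qnorm{\matabs{A}\xvec}^q=\sum_l\qnorm{A^{(l)}\xvec^{(l)}}^q$ (the zero block contributing nothing).

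Finally I would check \eqref{eq:pos_Au_v}-\eqref{eq:pos_Av_u}. For a column index $n$ lying in block $l$, the block structure gives $(\matabs{A}^\top\uvec)_n=(A^{(l)\top}\uvec^{(l)})_n=(v_n)^{q/p}\qnorm{A^{(l)}}\le(v_n)^{q/p}\qnorm{\matabs{A}}$, which is \eqref{eq:pos_Au_v}; for a totally zero column $n$ one has $(\matabs{A}^\top\uvec)_n=0\le\qnorm{\matabs{A}}=(v_n)^{q/p}\qnorm{\matabs{A}}$ since $v_n=1$ and $\qnorm{\matabs{A}}>0$. Inequality \eqref{eq:pos_Av_u} follows by the symmetric argument with rows. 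I do not expect a serious obstacle here: the only content beyond \Cref{thm:uAv_pos_noblock} is the permutation bookkeeping plus the observation that along totally zero rows and columns the required inequality is slack, so the corresponding entries of $\uvec,\vvec$ are free to be any positive number. The mildly delicate point is exactly why the case $\qnorm{\matabs{A}}=0$ must be separated out first — otherwise the right-hand sides in those zero directions would themselves vanish and positivity of $\uvec,\vvec$ could not be maintained.
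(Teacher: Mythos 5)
Your proposal is correct and takes essentially the same route as the paper's proof: split $\matabs{A}$ via permutations into irreducible blocks plus a zero block, apply \Cref{thm:uAv_pos_noblock} to each block, and pad the indices belonging to totally zero rows and columns with ones. The separate treatment of $\qnorm{\matabs{A}}=0$ is harmless but not actually forced — in that case both sides of \eqref{eq:pos_Au_v}–\eqref{eq:pos_Av_u} read $0\le 0$ for any positive $\uvec,\vvec$, so the worry that positivity ``could not be maintained'' there is unfounded; the paper simply lets $L=0$ and takes all-ones vectors without comment.
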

\begin{proof}
	Let $\sigma$ and $\tau$ be permutations matrices that bring out the block structure of
	$\matabs{A}$, and let $A^{(1)}, \dotsc, A^{(L)}$ be the blocks.
	Specifically, suppose
	$\sigma^\top ( A^{(1)} \oplus \dotsb \oplus A^{(L)} \oplus \zeromat^{M \times N} ) \tau =
	\matabs{A}$
	where the $A^{(1)} \dotsm A^{(L)}$ matrices are not block diagonal and
	$\zeromat^{M \times N}$ is an $M$-by-$N$ matrix of zeros (if there is no zero block then
	just take $M=N=0$).
	It is easy to see that $\qnorm{A^{(l)}} \le \qnorm{\aA}$ for all $l \in \{1, \dotsc, L\}$.

	By \cref{thm:uAv_pos_noblock},
	there are positive vectors $\uvec^{(1)}, \dotsc, \uvec^{(L)}$ and
	$\vvec^{(1)}, \dotsc, \vvec^{(L)}$ such that
	\begin{align}
		\label{eq:Au_block}
		(A^{(l) \top} \uvec^{(l)})_n &= v_n^{(l) q/p} \qnorm{A^{(l)}} \\
		&\le v_n^{(l) q/p} \qnorm{\matabs{A}}, \\
		(A^{(l)} \vvec^{(l)} )_m &= u_m^{(l) p/q} \qnorm{A^{(l)}} \\
		\label{eq:Av_block}
		&\le u_m^{(l) p/q} \qnorm{\matabs{A}}
	\end{align}
	for all $l \in \{1, \dotsc, L\}$.
	Define $\uvec = \sigma^\top (\uvec^{(1)} \oplus \dotsb \oplus \uvec^{(L)} \oplus \onesvec^M)$
	and $\vvec = \tau^\top (\vvec^{(1)} \oplus \dotsb \oplus \vvec^{(L)} \oplus \onesvec^N)$
	where $\onesvec^M$ and $\onesvec^N$ are the all-ones vectors of lengths $M$ and $N$,
	respectively.
	Then~\eqref{eq:Au_block}-\eqref{eq:Av_block} imply~\eqref{eq:pos_Au_v}-\eqref{eq:pos_Av_u}.
	Since the $\uvec^{(1)}, \dotsc, \uvec^{(L)}$ and $\vvec^{(1)}, \dotsc, \vvec^{(L)}$
	are positive, $\uvec$ and $\vvec$ are positive.
\end{proof}

We are now ready to complete the proof of \cref{thm:best_eps_qnorm}.

\begin{proof}[Proof of
	\cref{thm:best_eps_qnorm}\eqref{part:best_bval_eq_qnorm}-\eqref{part:best_query}
	]
	Let $A$ be a matrix.
	Set $K = \{0\}$ and $\alpha_{mn0} = A_{mn}$.  Clearly
	condition~\eqref{cond:eps_alphasum} of \cref{def:eps} is satisfied.

	Consider the case $p \in (1,\infty)$.
	Let $\uvec$ and $\vvec$ be positive vectors
	satisfying~\eqref{eq:pos_Au_v}-\eqref{eq:pos_Av_u}.  The
	existence of such vectors is guaranteed by \cref{thm:uAv_pos}.
	Define the probability distributions
	\begin{align}
		\Pdist{n|m} &= \abs{A_{mn}} v_n / [\aA  \vvec]_m, \\
		\Qdist{m|n} &= \abs{A_{mn}} u_m / [\aAT \uvec]_n.
	\end{align}
	These satisfy condition~\eqref{cond:eps_cost} of \cref{def:eps} with
	$\bval = \qnorm{\aA}$ since
	\begin{align}
		\max_{mnk} \left\{ \frac{\abs{\alpha_{mnk}}}{
			\Pdist{n|m}^{1/p}
			\Qdist{m|n}^{1/q}
		} \right\}
		&=
		\max_{mn} \left\{ \frac{\abs{A_{mn}}}{
			\Pdist{n|m}^{1/p}
			\Qdist{m|n}^{1/q}
		} \right\}
		\\ &=
		\max_{mn} \left\{
			\left( \frac{ [\aA \vvec]_m }{ v_n } \right)^{1/p}
			\left( \frac{ [\aA^\top \uvec]_n }{ u_m } \right)^{1/q}
		\right\}
		\\ &\le
		\max_{mn} \left\{
			\left( \frac{ u_m^{p/q} \qnorm{\aA} }{ v_n } \right)^{1/p}
			\left( \frac{ v_n^{q/p} \qnorm{\aA} }{ u_m } \right)^{1/q}
		\right\}
		\\ &= \qnorm{\aA}.
	\end{align}

	Now consider the case $p=1$, $q=\infty$ (the case $p=\infty$, $q=1$ follows by a
	symmetrical argument).
	Define $\Pdist{n|m} = \abs{A_{mn}} / \sum_{n'} \abs{A_{mn'}}$ and define
	$\Qdist{m|n}$ arbitrarily.  Condition~\eqref{cond:eps_cost} of \cref{def:eps} is satisfied
	with $\bval = \infnorm{\aA}$ since
	\begin{align}
		\max_{mnk} \left\{ \frac{\abs{\alpha_{mnk}}}{
			\Pdist{n|m}^{1/p}
			\Qdist{m|n}^{1/q}
		} \right\}
		&=
		\max_{mn} \left\{ \frac{\abs{A_{mn}}}{
			\Pdist{n|m}^1
			\Qdist{m|n}^0
		} \right\}
		\\ &=
		\max_{mn} \left\{ \frac{\abs{A_{mn}}}{
			\abs{A_{mn}} / \sum_{n'} \abs{A_{mn'}}
		} \right\}
		\\ &\le \infnorm{\aA}.
	\end{align}

	If one is concerned with query complexity rather than time complexity, and if $A$ is not
	defined in terms of an oracle, then
	conditions~\eqref{cond:eps_samplr}-\eqref{cond:eps_samprl} of \cref{def:eps}
	are satisfied trivially with $\fval=0$ since no oracle queries are needed in order to carry out
	the required operations.  So $A$ is $\epsp{\qnorm{\aA}}{0}$.
\end{proof}

\section{Proofs for section~\ref{sec:eps_eht}}
\label[secinapp]{sec:eps_eht_proofs}

In this section we prove \cref{thm:eps_sumprodexp,thm:eht_eps_trace}.
The proofs are conceptually rather simple, however they are notationally tedious.
Since we will at times be manipulating infinite series, we begin by showing that these series
converge absolutely.  This will be useful, since absolutely convergent series allow permutation
of terms and reordering of double summations.

\begin{lemma}
	Let $\bval$ and $\alpha_{mnk}$ satisfy
	condition~\eqref{cond:eps_cost} of \cref{def:eps}.
	Then series $\sum_{k \in K} \alpha_{mnk}$ is absolutely convergent for all $m,n$,
	and $\sum_{k \in K} \abs{\alpha_{mnk}} \le \bval$.
	\label{thm:abs_conv}
\end{lemma}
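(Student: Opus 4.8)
The plan is to obtain the bound $\sum_{k\in K}\abs{\alpha_{mnk}}\le\bval$ by a single application of the weighted inequality of arithmetic and geometric means, after which absolute convergence is automatic because a series of nonnegative reals whose partial sums are bounded by $\bval<\infty$ must converge.

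First I would rewrite condition~\eqref{cond:eps_cost}. For every $m,n,k$ it gives
\begin{equation}
	\abs{\alpha_{mnk}} \le \bval\, \Pdist{n,k|m}^{1/p}\,\Qdist{m,k|n}^{1/q},
\end{equation}
where in the degenerate case that the denominator vanishes the convention $0/0=0$ together with finiteness of $\bval$ forces $\alpha_{mnk}=0$, so the inequality in fact holds in all cases. Next, applying the weighted inequality of arithmetic and geometric means, $x^{1/p}y^{1/q}\le x/p+y/q$ (valid whenever $1\le p\le\infty$ and $1/p+1/q=1$), with $x=\Pdist{n,k|m}$ and $y=\Qdist{m,k|n}$, yields
\begin{equation}
	\abs{\alpha_{mnk}} \le \bval\left( \frac{\Pdist{n,k|m}}{p} + \frac{\Qdist{m,k|n}}{q} \right).
\end{equation}
Now fix $m$ and $n$ and sum over $k\in K$. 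Since $\Pdist{\cdot,\cdot|m}$ is a probability distribution on the pairs $(n',k')$, we have $\sum_{k\in K}\Pdist{n,k|m}\le\sum_{n',k'}\Pdist{n',k'|m}=1$, and likewise $\sum_{k\in K}\Qdist{m,k|n}\le 1$, so
\begin{equation}
	\sum_{k\in K}\abs{\alpha_{mnk}} \le \bval\left(\frac{1}{p}+\frac{1}{q}\right) = \bval.
\end{equation}
Thus the series $\sum_{k\in K}\alpha_{mnk}$ has absolute-value partial sums bounded by $\bval$, hence converges absolutely with $\sum_{k\in K}\abs{\alpha_{mnk}}\le\bval$.

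I do not expect a genuinely hard step here; the only thing requiring care is the bookkeeping at the endpoints $p\in\{1,\infty\}$, where the exponents $1/p$ or $1/q$ are $0$ and one must keep track of the $0^0=1$ reading implicit in the definition, together with the $0/0=0$ convention — but in those cases the weighted AM-GM inequality degenerates to an equality and the argument goes through verbatim.
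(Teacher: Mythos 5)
Your proof is correct and follows essentially the same route as the paper: rearrange condition~\eqref{cond:eps_cost}, apply the weighted AM-GM inequality $x^{1/p}y^{1/q}\le x/p+y/q$, and sum over $k$ using the fact that the marginals of the conditional distributions are bounded by $1$. The extra care you take with the $0/0$ convention and the endpoint cases $p\in\{1,\infty\}$ is sound and, if anything, slightly more thorough than the paper's own write-up.
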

\begin{proof}
	Rearranging~\eqref{eq:eps_cost} of condition~\eqref{cond:eps_cost}
	gives, for all $m,n,k$,
	\begin{align}
		\abs{\alpha_{mnk}} &\le \bval \cdot \Pdist{n,k|m}^{1/p} \Qdist{m,k|n}^{1/q}
		\\ &\le \bval \cdot \left[ \Pdist{n,k|m}/p + \Qdist{m,k|n}/q \right].
	\end{align}
	Therefore,
	\begin{align}
		\sum_{k \in K} \abs{\alpha_{mnk}}
		&\le \bval \sum_{k \in K} \left[ \Pdist{n,k|m}/p + \Qdist{m,k|n}/q \right]
		\\ &=   \bval \cdot \left[ \Pdist{n|m}/p + \Qdist{m|n}/q \right]
		\\ &\le \bval
		\\ &< \infty.
	\end{align}
\end{proof}

We now prove that linear combinations of EPS operators are EPS.
\Cref{thm:eps_sumprodexp}\eqref{part:eps_sum_simple}, regarding sums of EPS operators,
follows as a corollary.
This will also be used to prove \cref{thm:eps_sumprodexp}\eqref{part:eps_exp},
regarding exponentials of EPS operators.

\begin{theorem}[Linear combination of EPS]
	\label{thm:eps_sum_fancy}
	Let $L$ be a finite or countable set.
	For $l \in L$ let $s_l$ be a complex number
	and let $A^{(l)}$ be an $M \times N$ matrix that is $\epsp{\bval_l}{\fval_l}$
	for some $\fval_l$ and $\bval_l$.
	Let $\Wdist{l}$ be a probability distribution\footnote{
		The lowest $\bval$ is obtained when $\Wdist{l}$ is proportional to $\abs{s_l} \bval_l$.
	} on $l$ such that $\Wdist{l}$ can be sampled from, and $s_l/\Wdist{l}$ computed,
	in average time $\bigomic(\fval_0)$.
	Let
	$\bval := \max_l\{ \abs{s_l} \bval_l / \Wdist{l} \} < \infty$
	and
	$\fval := \fval_0 + \sum_l \Wdist{l} \fval_l$.
	Then $\sum_l s_l A^{(l)}$ is $\epsp{\bval}{\fval}$.
\end{theorem}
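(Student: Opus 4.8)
The plan is to build explicit EPS witnesses for $C:=\sum_l s_l A^{(l)}$ by merging the witnesses of the individual summands. For each $l\in L$ let $K_l$, $\alpha^{(l)}_{mnk}$, $P_l(n,k\mid m)$, $Q_l(m,k\mid n)$ be data showing $A^{(l)}$ is $\epsp{\bval_l}{\fval_l}$. We may assume $\Wdist{l}>0$ for all $l$, since $\bval<\infty$ together with condition~\eqref{cond:eps_cost} forces $s_lA^{(l)}=\zeromat$ whenever $\Wdist{l}=0$, and such terms can be dropped without changing $C$, $\bval$, or $\fval$. Take the new index set to be the disjoint union $K:=\bigsqcup_{l\in L}(\{l\}\times K_l)$, write a typical element as $\kappa=(l,k)$, and set
\[
\alpha_{mn\kappa}:=s_l\,\alpha^{(l)}_{mnk},\qquad
P(n,\kappa\mid m):=\Wdist{l}\,P_l(n,k\mid m),\qquad
Q(m,\kappa\mid n):=\Wdist{l}\,Q_l(m,k\mid n).
\]

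Next I would check the four conditions of \cref{def:eps}. For~\eqref{cond:eps_alphasum}: by \cref{thm:abs_conv} each inner series converges absolutely with $\sum_{k\in K_l}\abs{\alpha^{(l)}_{mnk}}\le\bval_l$, and $\sum_l\abs{s_l}\bval_l\le\bval\sum_l\Wdist{l}=\bval<\infty$, so the double series $\sum_\kappa\alpha_{mn\kappa}$ is absolutely convergent and may be reordered, yielding $\sum_\kappa\alpha_{mn\kappa}=\sum_l s_l A^{(l)}_{mn}=C_{mn}$. The displayed $P$ and $Q$ are genuine probability distributions, e.g. $\sum_{n,\kappa}P(n,\kappa\mid m)=\sum_l\Wdist{l}\sum_{n,k}P_l(n,k\mid m)=\sum_l\Wdist{l}=1$. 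For~\eqref{cond:eps_cost}, using $\Wdist{l}^{1/p}\Wdist{l}^{1/q}=\Wdist{l}$,
\[
\frac{\abs{\alpha_{mn\kappa}}}{P(n,\kappa\mid m)^{1/p}Q(m,\kappa\mid n)^{1/q}}
=\frac{\abs{s_l}}{\Wdist{l}}\cdot
\frac{\abs{\alpha^{(l)}_{mnk}}}{P_l(n,k\mid m)^{1/p}Q_l(m,k\mid n)^{1/q}}
\le\frac{\abs{s_l}\bval_l}{\Wdist{l}}\le\bval .
\]

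For conditions~\eqref{cond:eps_samplr}-\eqref{cond:eps_samprl}: given $m$, sample $\kappa=(l,k)$ and $n$ from $P(n,\kappa\mid m)$ in two stages — first draw $l\sim\Wdist{l}$ and form $s_l/\Wdist{l}$ in average time $\bigomic(\fval_0)$, then draw $(n,k)$ from $P_l(\cdot\mid m)$, which by the EPS property of $A^{(l)}$ takes average time $\bigomic(\fval_l)$ and simultaneously produces $\alpha^{(l)}_{mnk}/P_l(n,k\mid m)$ and $\alpha^{(l)}_{mnk}/Q_l(m,k\mid n)$. The required outputs then factor as $\alpha_{mn\kappa}/P(n,\kappa\mid m)=(s_l/\Wdist{l})\bigl(\alpha^{(l)}_{mnk}/P_l(n,k\mid m)\bigr)$ and $\alpha_{mn\kappa}/Q(m,\kappa\mid n)=(s_l/\Wdist{l})\bigl(\alpha^{(l)}_{mnk}/Q_l(m,k\mid n)\bigr)$, both of whose factors are in hand. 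Condition~\eqref{cond:eps_samprl} is entirely symmetric, drawing $l\sim\Wdist{l}$ and then $(m,k)$ from $Q_l(\cdot\mid n)$. The expected running time is $\bigomic(\fval_0)+\mathbb{E}_{l\sim\Wdist{}}\bigl[\bigomic(\fval_l)\bigr]=\bigomic\bigl(\fval_0+\sum_l\Wdist{l}\fval_l\bigr)=\bigomic(\fval)$; this completes the verification, and \cref{thm:eps_sumprodexp}\eqref{part:eps_sum_simple} then follows by taking $L=\{1,2\}$, $s_l=1$, and $\Wdist{l}=1/2$.

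The one delicate point — the main obstacle in an otherwise purely bookkeeping argument — is the handling of the two nested layers of averaging: the outer expectation over $l\sim\Wdist{}$ and the inner ``average time $\bigomic(\fval_l)$'' guarantee of each $A^{(l)}$ are independent and must be composed to produce the single bound $\bigomic(\fval_0+\sum_l\Wdist{l}\fval_l)$, while being careful that the cost $\fval_0$ of choosing $l$ and forming $s_l/\Wdist{l}$ is charged exactly once per sample rather than once per value of $k\in K_l$. (If $\fval=\infty$ the statement is vacuous, so no separate argument is needed there.)
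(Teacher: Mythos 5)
Your proof is correct and follows essentially the same route as the paper's: extend the index set by $l$, set $\alpha_{mn(l,k)}=s_l\alpha^{(l)}_{mnk}$, scale the conditional distributions by $\Wdist{l}$, establish absolute convergence via \cref{thm:abs_conv} together with $\sum_l\abs{s_l}\bval_l\le\bval$, and sample in two stages (first $l\sim\Wdist{}$, then the inner distribution for $A^{(l)}$). The only cosmetic differences are your use of a disjoint union $\bigsqcup_l(\{l\}\times K_l)$ in place of the paper's padded product $L\times\cup_l K_l$ with the ``else $0$'' clause, and your explicit pruning of indices with $\Wdist{l}=0$, which the paper handles implicitly via the $0/0=0$ convention.
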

\begin{proof}
	For each $l \in L$, $A^{(l)}$ is $\epsp{\bval_l}{\fval_l}$ so there are
	$K_l$, $\alpha^{(l)}_{mnk}$,
	$\Pdistsub{l}{n,k|m}$, and $\Qdistsub{l}{m,k|n}$ satisfying \cref{def:eps}.
	Let $K = L \times \cup_{l \in L} K_l$.  For $(l,k) \in K$ define
	\begin{equation}
		\label{eq:opsum_alpha}
		\alpha_{mn(l,k)} = \left\{ \begin{array}{ll}
				s_l \alpha^{(l)}_{mnk} & \mbox{ if $k \in K_l$} \\
				0 & \mbox{ otherwise.}
			\end{array} \right.
	\end{equation}
	We first show that $\sum_{(l,k) \in K} \alpha_{mn(l,k)}$ is absolutely convergent, so that
	it can be expressed as a double sum.
	By \cref{thm:abs_conv}, $\sum_{k \in K_l} \abs{\alpha^{(l)}_{mnk}} \le \bval_l$
	for all $l \in L$, therefore
	\begin{align}
		\sum_{(l,k) \in K} \abs{\alpha_{mn(l,k)}}
		&= \sum_{l \in L} \abs{s_l} \sum_{k \in K_l} \abs{\alpha^{(l)}_{mnk}}
		\\ &\le \sum_{l \in L} \abs{s_l \bval_l}
		\\ &\le \bval.
	\end{align}
	Since $\bval < \infty$ by assumption, the series $\sum_{(l,k) \in K} \alpha_{mn(l,k)}$
	is absolutely convergent.  We can then decompose it as a double series,
	\begin{align}
		\sum_{(l,k) \in K} \alpha_{mn(l,k)}
		&= \sum_{l \in L} s_l \sum_{k \in K_l} \alpha^{(l)}_{mnk}
		\\ &= \sum_{l \in L} s_l A^{(l)},
	\end{align}
	showing that condition~\eqref{cond:eps_alphasum} of \cref{def:eps} is satisfied.

	Define the probability distributions
	\begin{align}
		\label{eq:opsum_P}
		\Pdist{n,(l,k)|m} &= \left\{ \begin{array}{ll}
				\Wdist{l} \Pdistsub{l}{n,k|m} & \mbox{ if $k \in K_l$} \\
				0 & \mbox{ otherwise}
			\end{array} \right.
		\\
		\label{eq:opsum_Q}
		\Qdist{m,(l,k)|n} &= \left\{ \begin{array}{ll}
				\Wdist{l} \Qdistsub{l}{m,k|n} & \mbox{ if $k \in K_l$} \\
				0 & \mbox{ otherwise.}
			\end{array} \right.
	\end{align}
	We now show that condition~\eqref{cond:eps_cost} holds.
	Let $m \in \{1, \dotsc, M\}$, $n \in \{1, \dotsc, N\}$, and $(l,k) \in K$.
	We need only consider $k \in K_l$ since otherwise $\alpha_{mn(l,k)}$ vanishes.
	\begin{align}
		\frac{\abs{\alpha_{mn(l,k)}}}{
			\left[ \Pdist{n,(l,k)|m} \right]^{1/p}
			\left[ \Qdist{m,(l,k)|n} \right]^{1/q}
		}
		&=
		\frac{\abs{ s_l \alpha^{(l)}_{mnk} }}{
			\left[ \Wdist{l} \Pdistsub{l}{n,k|m} \right]^{1/p}
			\left[ \Wdist{l} \Qdistsub{l}{m,k|n} \right]^{1/q}
		}
		\\ &=
		\frac{\abs{s_l}}{\Wdist{l}} \cdot
		\frac{\abs{ \alpha^{(l)}_{mnk} }}{
			\left[ \Pdistsub{l}{n,k|m} \right]^{1/p}
			\left[ \Qdistsub{l}{m,k|n} \right]^{1/q}
		}
		\\ &\le
		\abs{s_l} \bval_l / \Wdist{l}
		\\ &\le \bval.
	\end{align}
	Condition~\eqref{cond:eps_samplr} requires that
	the distribution $\Pdist{n,(l,k)|m}$ can be sampled from,
	and $\alpha_{mn(l,k)} / \Pdist{n,(l,k)|m}$ and
	$\alpha_{mn(l,k)} / \Qdist{m,(l,k)|n}$ can be computed,
	in average time
	$\bigomic(\fval) = \bigomic(\fval_0 + \sum_l \Wdist{l} \fval_l)$.
	This can be accomplished as follows:
	\begin{enumerate}[(i)]
		\item
			Draw $l$ according to the distribution $\Wdist{l}$ and compute $s_l/\Wdist{l}$.
			This can be done in average time $\bigomic(\fval_0)$.
		\item
			Draw $n,k$ according to the distribution $\Pdistsub{l}{n,k|m}$ and compute
			$\alpha^{(l)}_{mnk} / \Pdistsub{l}{n,k|m}$ and
			$\alpha^{(l)}_{mnk} / \Qdistsub{l}{m,k|n}$.
			This can be done in average time $\bigomic(\fval_l)$.
		\item
			The quantities $\alpha_{mn(l,k)} / \Pdist{n,(l,k)|m}$ and
			$\alpha_{mn(l,k)} / \Qdist{m,(l,k)|n}$ can be directly computed
			from~\eqref{eq:opsum_alpha}, \eqref{eq:opsum_P}, and~\eqref{eq:opsum_Q}
			in time $\bigomic(1)$ given the quantities that have been computed in the previous
			two steps.
	\end{enumerate}
	The average time needed for a given $l$ is $\bigomic(\fval_0 + \fval_l)$, therefore the
	average time needed given that $l$ is drawn according to $\Wdist{l}$ is
	$\bigomic(\fval) = \bigomic(\fval_0 + \sum_l \Wdist{l} \fval_l)$.
	Condition~\eqref{cond:eps_samplr} is satisfied.
	Condition~\eqref{cond:eps_samprl} follows from a symmetric argument.
\end{proof}

\begin{proof}[Proof of \cref{thm:eps_sumprodexp}\eqref{part:eps_sum_simple}]
	This follows directly from \cref{thm:eps_sum_fancy}.
	Specifically, apply \cref{thm:eps_sum_fancy} with $L=\{A,B\}$, $s_A=s_B=1$,
	$\Wdist{A} = \bval_A/(\bval_A+\bval_B)$, and $\Wdist{B} = \bval_B/(\bval_A+\bval_B)$.
	Then
	$\bval = \max_l\{ \abs{s_l} \bval_l / \Wdist{l} \} = \bval_A + \bval_B$
	and
	$\fval = \bigomic(1) + \sum_l \Wdist{l} \fval_l = \bigomic(\max\{\bval_A, \bval_B\})$.
\end{proof}

\begin{proof}[Proof of \cref{thm:eps_sumprodexp}\eqref{part:eps_prod}]
	Since $A$ is $\epsp{\bval_A}{\fval_A}$, there are $K_A$, $\alpha^{(A)}_{lmk}$,
	$\Pdistsub{A}{m,k|l}$, and $\Qdistsub{A}{l,k|m}$ satisfying \cref{def:eps}
	with $l \in \{1, \dotsc, L\}$, $m \in \{1, \dotsc, M\}$, and $k \in K_A$.
	Likewise, since $B$ is $\epsp{\bval_B}{\fval_B}$, there are $K_B$, $\alpha^{(B)}_{mnk}$,
	$\Pdistsub{B}{n,k|m}$, and $\Qdistsub{B}{m,k|n}$ satisfying \cref{def:eps}
	with $m \in \{1, \dotsc, M\}$, $n \in \{1, \dotsc, N\}$, and $k \in K_B$.

	Let $K = K_A \times K_B \times \{1, \dotsc, M\}$ and
	\begin{equation}
		\alpha_{ln(k',k'',m)} = \alpha^{(A)}_{lmk'} \alpha^{(B)}_{mnk''}.
	\end{equation}
	We first show that $\sum_{(k',k'',m) \in K} \alpha_{ln(k',k'',m)}$ is absolutely
	convergent, so that it can be expressed as a double series.
	By \cref{thm:abs_conv}, $\sum_{k' \in K_A}  \abs{\alpha^{(A)}_{lmk'}} \le \bval_A$
	and $\sum_{k'' \in K_B} \abs{\alpha^{(B)}_{mnk''}} \le \bval_B$, therefore
	\begin{align}
		\sum_{(k',k'',m) \in K} \abs{\alpha_{ln(k',k'',m)}}
		&=
			\sum_{m \in \{1, \dotsc, M\}}
			\sum_{k' \in K_A}  \abs{\alpha^{(A)}_{lmk'}}
			\sum_{k'' \in K_B} \abs{\alpha^{(B)}_{mnk''}}
		\\ &\le M \bval_A \bval_B
		\\ &\le \infty.
	\end{align}
	Being absolutely convergent, $\sum_{(k',k'',m) \in K} \alpha_{ln(k',k'',m)}$ can be
	expressed as a double series, giving
	\begin{align}
		\sum_{(k',k'',m) \in K} \alpha_{ln(k',k'',m)}
		&=
			\sum_{m \in \{1, \dotsc, M\}}
			\sum_{k' \in K_A} \alpha^{(A)}_{lmk'}
			\sum_{k'' \in K_B} \alpha^{(B)}_{mnk''} \\
		&= \sum_m A_{lm} B_{mn} \\
		&= (AB)_{ln}
	\end{align}
	so condition~\eqref{cond:eps_alphasum} of \cref{def:eps} is satisfied.

	Define the probability distributions
	\begin{align}
		\label{eq:prod_dist_P}
		\Pdist{n,(k',k'',m)|l} &= \Pdistsub{A}{m,k'|l} \Pdistsub{B}{n,k''|m}, \\
		\label{eq:prod_dist_Q}
		\Qdist{l,(k',k'',m)|n} &= \Qdistsub{A}{l,k'|m} \Qdistsub{B}{m,k''|n}.
	\end{align}
	These satisfy condition~\eqref{cond:eps_cost} of \cref{def:eps} since
	for all $l,m,n,k',k''$,
	\begin{align}
		\bval_A \bval_B
		&\ge
		\frac{\abs{\alpha^{(A)}_{lmk'}}}{
			\Pdistsub{A}{m,k'|l}^{1/p}
			\Qdistsub{A}{l,k'|m}^{1/q}
		}
		\frac{\abs{\alpha^{(B)}_{mnk''}}}{
			\Pdistsub{B}{n,k''|m}^{1/p}
			\Qdistsub{B}{m,k''|n}^{1/q}
		} \\
		&=
		\frac{\abs{\alpha_{ln(k',k'',m)}}}{
			\Pdist{n,(k',k'',m)|l}^{1/p}
			\Qdist{l,(k',k'',m)|n}^{1/q}
		}.
	\end{align}

	Condition~\eqref{cond:eps_samplr} requires that it be possible in average time
	$\bigomic(\fval_A+\fval_B)$ to sample from the
	probability distribution $\Pdist{n,(k',k'',m)|l}$
	and to compute
	$\frac{ \alpha_{ln(k',k'',m)} }{ \Pdist{n,(k',k'',m)|l} }$ and
	$\frac{ \alpha_{ln(k',k'',m)} }{ \Qdist{l,(k',k'',m)|n} }$.
	This can be accomplished as follows:
	\begin{enumerate}[(i)]
		\item
			Draw $m,k'$ from $\Pdistsub{A}{m,k'|l}$ and compute
			$\frac{ \alpha^{(A)}_{lmk'} }{ \Pdistsub{A}{m,k'|l} }$ and
			$\frac{ \alpha^{(A)}_{lmk'} }{ \Qdistsub{A}{l,k'|m} }$.
			This can be done in average time $\bigomic(\fval_A)$.
		\item
			Draw $n,k''$ from $\Pdistsub{B}{n,k''|m}$ and compute
			$\frac{ \alpha^{(B)}_{mnk''} }{ \Pdistsub{B}{n,k''|m} }$ and
			$\frac{ \alpha^{(B)}_{mnk''} }{ \Qdistsub{B}{m,k''|n} }$.
			This can be done in average time $\bigomic(\fval_B)$.
		\item Compute
			\begin{align}
				\frac{ \alpha_{ln(k',k'',m)} }{ \Pdist{n,(k',k'',m)|l} }
				&=
				\frac{ \alpha^{(A)}_{lmk' } }{ \Pdistsub{A} {m,k' |l} }
				\cdot
				\frac{ \alpha^{(B)}_{mnk''} }{ \Pdistsub{B}{n,k''|m} }
				\\
				\frac{ \alpha_{ln(k',k'',m)} }{ \Qdist{l,(k',k'',m)|n} }
				&=
				\frac{ \alpha^{(A)}_{lmk' } }{ \Qdistsub{A} {l,k' |m} }
				\cdot
				\frac{ \alpha^{(B)}_{mnk''} }{ \Qdistsub{B}{m,k''|n} }.
			\end{align}
			This can be done in time $\bigomic(1)$ since the factors on the right
			hand sides of these expressions have already been computed in the previous two
			steps.
	\end{enumerate}
	So condition~\eqref{cond:eps_samplr} is satisfied.
	Condition~\eqref{cond:eps_samprl} follows from a symmetric argument.
\end{proof}

\begin{proof}[Proof of \cref{thm:eps_sumprodexp}\eqref{part:eps_exp}]
	Let $A$ be a square matrix that is $\epsp{\bval}{\fval}$.  We will show that $e^A$ is
	$\epsp{e^\bval}{\bval \fval}$.

	This follows from applying \cref{thm:eps_sum_fancy} and
	\cref{thm:eps_sumprodexp}\eqref{part:eps_prod} to
	$e^A = \sum_{j=0}^\infty A^j/j! $.
	Specifically, let $L = \{0,1,\dotsc\}$, $A^{(l)}=A^l$, $s_l=1/l! $, and
	$\Wdist{l}=\bval^l/(l! e^{\bval})$.
	By repeated application of \cref{thm:eps_sumprodexp}\eqref{part:eps_prod},
	$A^{(l)}$ is $\epsp{\bval^l}{lf}$.
	Assume for now that $\Wdist{l}$ can be sampled in average time $\bigomic(\bval)$.
	Then by \cref{thm:eps_sum_fancy}, $e^A = \sum_{j=0}^\infty A^j/j! $
	is $\epsp{\bval'}{\fval'}$ with
	$\bval' = \max_l\{ \abs{s_l} \bval_l / \Wdist{l} \} = e^\bval$
	and
	\begin{align}
		\fval' &= \bval + \sum_{l=0}^\infty \Wdist{l} \fval_l
		\\ &= \bval + \sum_{l=0}^\infty \frac{lf \bval^l}{l! e^{\bval}}
		\\ &= \bval + \frac{\bval \fval}{e^b} \sum_{l=1}^\infty \frac{\bval^{l-1}}{(l-1)!}
		\\ &= \bval + \bval \fval
		\\ &= \bigomic(\bval \fval)
	\end{align}

	It remains only to show that $\Wdist{l}$ can be sampled in time $\bigomic(\bval)$.
	The procedure is as follows.  Flip a weighted coin that lands heads with probability
	$W(0)$, and if it lands heads take $l=0$.  This can be done in time $\bigomic(1)$.
	If the coin landed tails then flip another coin that lands heads with probability
	$W(1)/(1-W(0))$, and if it lands heads take $l=1$.
	Continue, each iteration flipping a coin that lands heads with probability
	$\Wdist{l}/(1-\sum_{j=0}^{l-1} W(j))$.
	Each iteration requires computing $\Wdist{l}/(1-\sum_{j=0}^{l-1} W(j))$, which in turn
	requires computing $\Wdist{l}$ and updating the partial sum with the previous $W(l-1)$.
	This can be done in $\bigomic(1)$ time.
	The expected number of iterations is $\sum_l l \Wdist{l}=\bval$.
	Therefore, this sampling algorithm takes average time $\bval$.
\end{proof}

\begin{proof}[Proof of \cref{thm:eht_eps_trace}]
	Since $\sigma$ is $\ehtp{\bval_\sigma}{\fval_\sigma}$, there are $\alpha^{(\sigma)}_{nmk}$,
	$\Pdistsub{\sigma}{m,k}$, and $\Qdistsub{\sigma}{n,k}$ with $k \in K_\sigma$ satisfying
	\cref{def:eht} (note that $m$ and $n$ have been swapped since $\sigma$ is an $N \times M$
	operator).
	Similarly, since $A$ is $\epsp{\bval_A}{\fval_A}$ there are $\alpha^{(A)}_{mnk'}$,
	$\Pdistsub{A}{n,k'|m}$, and $\Qdistsub{A}{m,k'|n}$ with $k' \in K_A$ satisfying
	\cref{def:eps}.

	We have
	\begin{align}
		\Tr(A \sigma) &= \sum_{mn} A_{mn} \sigma_{nm}
		\\ &= \sum_{mnkk'} \alpha^{(A)}_{mnk'} \alpha^{(\sigma)}_{nmk}.
		\label{eq:trace_sigma_A_sum}
	\end{align}

	Define the probability distribution
	\begin{equation}
		\Rdist{m,n,k,k'} = \frac{1}{p} \Pdistsub{\sigma}{m,k} \Pdistsub{A}{n,k'|m} +
			\frac{1}{q} \Qdistsub{\sigma}{n,k} \Qdistsub{A}{m,k'|n}.
	\end{equation}
	By the inequality of arithmetic and geometric means,
	\begin{equation}
		\Rdist{m,n,k,k'} \ge [\Pdistsub{\sigma}{m,k} \Pdistsub{A}{n,k'|m}]^{1/p}
			[\Qdistsub{\sigma}{n,k} \Qdistsub{A}{m,k'|n}]^{1/q}.
	\end{equation}
	Setting $V(m,n,k,k') = \alpha^{(A)}_{mnk'} \alpha^{(\sigma)}_{nmk}$ we get the bound
	\begin{align}
		\bmax :&=
			\max_{mnkk'} \left\{
			\frac{\abs{V(m,n,k,k')}}{\Rdist{m,n,k,k'}}
			\right\}
		\\ &\le
			\max_{mnkk'} \left\{
			\frac{\abs{\alpha^{(A)}_{mnk'} \alpha^{(\sigma)}_{nmk}}}
			{[\Pdistsub{\sigma}{m,k} \Pdistsub{A}{n,k'|m}]^{1/p}
			[\Qdistsub{\sigma}{n,k} \Qdistsub{A}{m,k'|n}]^{1/q}}
			\right\}
		\\ &\le
			\max_{mnk'} \left\{
			\frac{\abs{\alpha^{(A)}_{mnk'}}}
			{\Pdistsub{A}{n,k'|m}^{1/p} \Qdistsub{A}{m,k'|n}]^{1/q}}
			\right\}
		\cdot
			\max_{mnk} \left\{
			\frac{\abs{\alpha^{(\sigma)}_{nmk}}}
			{\Pdistsub{\sigma}{m,k}^{1/p} \Qdistsub{\sigma}{n,k}^{1/q}}
			\right\}
		\\ &\le \bval_A \bval_\sigma.
	\end{align}
	By \cref{thm:chernoff3}, the sum~\eqref{eq:trace_sigma_A_sum} can be estimated at the cost of
	drawing $\bigomic(\log(\delta^{-1}) \epsilon^{-2} \bval_\sigma^2 \bval_A^2)$
	samples from $\Rdist{m,n,k,k'}$ and
	evaluating the corresponding $V(m,n,k,k')/\Rdist{m,n,k,k'}$.
	Each of these samples can be computed in average time $\bigomic(\fval_\sigma+\fval_A)$
	as follows.
	\begin{enumerate}[(i)]
		\item Flip a weighted coin that lands heads with probability $1/p$.
		\item If it lands heads, sample $m,k$ according to $\Pdistsub{\sigma}{m,k}$
			and then sample $n,k'$ according to $\Pdistsub{A}{n,k'|m}$.
		\item If it lands tails, sample $n,k$ according to $\Qdistsub{\sigma}{n,k}$
			and then sample $m,k'$ according to $\Qdistsub{A}{m,k'|n}$.
		\item The previous steps produce a sample according to $\Rdist{m,n,k,k'}$ and can be
			accomplished in time $\bigomic(\fval_\sigma+\fval_A)$ by conditions
			\eqref{cond:eps_samplr} and~\eqref{cond:eps_samprl} of
			\cref{def:eps} and
			\eqref{cond:eht_samplr} and~\eqref{cond:eht_samprl} of
			\cref{def:eht}, with the side effect of producing values
			$\alpha^{(\sigma)}_{nmk} / \Pdistsub{\sigma}{m,k}$,
			$\alpha^{(A)}_{mnk'}     / \Pdistsub{A}{n,k'|m}  $,
			$\alpha^{(\sigma)}_{nmk} / \Qdistsub{\sigma}{n,k}$, and
			$\alpha^{(A)}_{mnk'}     / \Qdistsub{A}{m,k'|n}  $.
		\item These values can be used to compute $V(m,n,k,k')/\Rdist{m,n,k,k'}$ since
		\begin{align}
			\frac{V(m,n,k,k')}{\Rdist{m,n,k,k'}}
			&=
			\frac{\alpha^{(A)}_{mnk'} \alpha^{(\sigma)}_{nmk}}{\Rdist{m,n,k,k'}}
			\\ &= \left[
				\frac{1}{p}
				\frac{\Pdistsub{A}{n,k'|m}}{\alpha^{(A)}_{mnk'}}
				\cdot
				\frac{\Pdistsub{\sigma}{m,k}}{\alpha^{(\sigma)}_{nmk}}
				+
				\frac{1}{q}
				\frac{\Qdistsub{A}{m,k'|n}}{\alpha^{(A)}_{mnk'}}
				\cdot
				\frac{\Qdistsub{\sigma}{n,k}}{\alpha^{(\sigma)}_{nmk}}
			\right]^{-1}
		\end{align}
	\end{enumerate}
	Therefore, the sum~\eqref{eq:trace_sigma_A_sum} can be estimated in average time
	$\bigomic[\log(\delta^{-1}) \epsilon^{-2} \bval_\sigma^2 \bval_A^2 (\fval_\sigma+\fval_A)]$.
\end{proof}

\section{Proofs for section~\ref{sec:circuits}}
\label[secinapp]{sec:circuits_proofs}

In section~\ref{sec:circuits} several matrices and classes of matrices were claimed to be
$\epstwo{\bval}{\fval}$ or $\epsp{\bval}{\fval}$ for small values of $\bval$ and $\fval$.
In this appendix we provide proofs for these claims.

We first prove that the efficiently computable sparse (ECS) matrices from
\cite{vandennest2011} (definition reproduced below) are $\epsp{\polylog(N)}{\polylog(N)}$.
This covers a rather large class of matrices including permutation matrices, Pauli matrices,
controlled phase matrices, and arbitrary unitaries on a constant number of qudits.
The original definition from \cite{vandennest2011} was in terms of qubits, but we adapt it to
systems of arbitrary dimension.
\begin{definition}[ECS]
	A matrix $A$ is efficiently computable sparse (ECS) if
	\begin{enumerate}[(a)]
		\item Each row and column of $A$ has at most $\polylog(N)$ nonzero entries.
		\item For any given row index $m$, it is possible in $\polylog(N)$ time to list the
			indices of the nonzero entries in that row, $\{n : A_{mn} \ne 0\}$, and to compute
			their values $A_{mn}$.
		\item For any given column index $n$, it is possible in $\polylog(N)$ time to list the
			indices of the nonzero entries in that column, $\{m : A_{mn} \ne 0\}$, and to
			compute their values $A_{mn}$.
	\end{enumerate}
\end{definition}

\begin{theorem}[ECS is EPS]
	\label{thm:ecs_is_eps}
	Let $A$ be an ECS matrix satisfying $\max_{mn}
	\{\abs{A_{mn}}\} = \polylog(N)$.
	Unitaries and Hermitian matrices whose eigenvalues are in the $[-1,1]$ range satisfy this
	bound.
	Then $A$ is $\epsp{\polylog(N)}{\polylog(N)}$ for any $p \in [1, \infty]$.
\end{theorem}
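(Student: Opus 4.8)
The plan is to derive the result as a direct application of \cref{thm:eps1}, taking the trivial choice $K=\{0\}$ and $\alpha_{mn0}=A_{mn}$ (so the extra index $k$ plays no role). Under this choice \cref{def:eps} collapses to exactly the three efficiency conditions~\eqref{cond:eps1_samp_P}--\eqref{cond:eps1_compute} of \cref{thm:eps1} together with the cost bound $\bval=\infnorm{A}^{1/p}\onenorm{A}^{1/q}$. So the whole proof reduces to two bookkeeping tasks: bounding the maximum absolute row and column sums of $A$, and checking that the sampling and entry-computation primitives demanded by \cref{thm:eps1} can be executed in $\polylog(N)$ average time for an ECS matrix.

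First I would bound the norms. Since $A$ is ECS, every row contains at most $\polylog(N)$ nonzero entries, and by hypothesis each entry has magnitude at most $\polylog(N)$; hence $\sum_{n'}\abs{A_{mn'}}$ is a product of two $\polylog(N)$ quantities, giving $\infnorm{A}=\polylog(N)$, and the symmetric argument over columns gives $\onenorm{A}=\polylog(N)$. This also disposes of the parenthetical remark in the statement: for a unitary $U$ the rows are $\ell^2$-unit vectors so $\abs{U_{mn}}\le 1$, and for a Hermitian $H$ with spectrum in $[-1,1]$ we have $\abs{H_{mn}}=\abs{\braopketsmall{m}{H}{n}}\le\twonorm{H}\le 1$, so in either case $\max_{mn}\abs{A_{mn}}\le 1=\polylog(N)$ and the hypothesis is automatically met.

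Next I would verify the efficiency conditions. Given a row index $m$, the ECS listing routine produces $\{n:A_{mn}\ne 0\}$ (of size $\polylog(N)$) together with the values $A_{mn}$ in $\polylog(N)$ time; from this list one computes $\sum_{n'}\abs{A_{mn'}}$, forms the normalized weights $\Pdist{n|m}=\abs{A_{mn}}/\sum_{n'}\abs{A_{mn'}}$, and draws a sample, all in $\polylog(N)$ time, which gives condition~\eqref{cond:eps1_samp_P}; the column version of the ECS routine gives~\eqref{cond:eps1_samp_Q}, and the same routines compute $A_{mn}$ and the two marginal sums for~\eqref{cond:eps1_compute}. The one genuinely fiddly point — and the main obstacle I anticipate — is the degenerate case of a totally zero row (or column), for which the formula $\Pdist{n|m}=\abs{A_{mn}}/\sum_{n'}\abs{A_{mn'}}$ from \cref{thm:eps1} is $0/0$; I would handle it by redefining $\Pdist{\cdot\mid m}$ arbitrarily there (say, concentrated at $n=1$), which is harmless because the numerator $\abs{A_{mn}}$ vanishes for every $n$ in that row, so the cost ratio is $0$ under the $0/0=0$ convention. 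With these checks in place, \cref{thm:eps1} yields that $A$ is $\epsp{\bval}{\fval}$ with $\fval=\polylog(N)$ and $\bval=\infnorm{A}^{1/p}\onenorm{A}^{1/q}=\polylog(N)^{1/p+1/q}=\polylog(N)$, valid for all $p\in[1,\infty]$ including the endpoints $p=1$ (where $\bval=\infnorm{A}$) and $p=\infty$ (where $\bval=\onenorm{A}$), which is the claim.
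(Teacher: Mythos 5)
Your proposal is correct and follows the paper's own proof essentially verbatim: both reduce the claim to \cref{thm:eps1}, bound $\infnorm{A}$ and $\onenorm{A}$ by $\polylog(N)$ using the sparsity and entry-size hypotheses, and then check that the ECS row/column listing routines let one sample from and evaluate the distributions~\eqref{eq:eps1_dist} in $\polylog(N)$ time. The extra remarks you add (the $0/0$ convention for empty rows and the explicit justification of the parenthetical claim about unitaries and Hermitian operators) are harmless elaborations of points the paper leaves implicit.
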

\begin{proof}
	\Cref{thm:eps1} is applicable here with $\fval=\polylog(N)$.
	Let $\Pdist{n|m}$ and $\Qdist{m|n}$ be the probability distributions defined
	in~\eqref{eq:eps1_dist}.
	Given any $m$ and $n$, the value $A_{mn}$ can be computed in $\polylog(N)$ time.
	Since each row and column contains $\polylog(N)$ nonzero entries, which can be enumerated
	and computed in $\polylog(N)$ time, the sums
	$\sum_{n'} \abs{A_{mn'}}$ and $\sum_{m'} \abs{A_{m'n}}$
	can be computed in $\polylog(N)$ time.
	Thus condition~\eqref{cond:eps1_compute} of \cref{thm:eps1} is satisfied.

	For any given $m$, the distribution $\Pdist{n|m}$ has support of size $\polylog(N)$,
	the indices of which can be enumerated in $\polylog(N)$ time,
	and each individual probability can be computed in time $\polylog(N)$.
	Therefore, this distribution can be sampled from in time $\polylog(N)$.
	Similarly for $\Qdist{m|n}$,
	so conditions~\eqref{cond:eps1_samp_P} and~\eqref{cond:eps1_samp_Q}
	of \cref{thm:eps1} are satisfied and $A$ is
	$\epsp{\infnorm{A}^{1/p} \onenorm{A}^{1/q}}{\polylog(N)}$.
	Each row and column of $A$ has at most $\polylog(N)$ nonzero
	entries, each bounded by $\max_{mn} \{\abs{A_{mn}}\} = \polylog(N)$.
	It follows that $\infnorm{A} = \polylog(N)$ and $\onenorm{A} = \polylog(N)$,
	giving $\infnorm{A}^{1/p} \onenorm{A}^{1/q} = \polylog(N)$.
\end{proof}

A block diagonal matrix is $\epsp{\bval}{\fval}$ if each of its blocks is $\epsp{\bval}{\fval}$.
This is rather powerful in that it can be used to show the EPS property for operations on
subsystems, for controlled-unitaries, and for some rather exotic projectors.  This will be the
subject of the following theorem and corollaries.

\begin{theorem}[Block diagonal]
	\label{thm:block_diag_eps}
	For $r \in \{1,\dotsc,R\}$, let $A^{(r)}$ be an $\epsp{\bval_r}{\fval}$ matrix
	of dimension $M_r \times N_r$.
	Let $A$ be the block diagonal matrix $A=\oplus_r A^{(r)}$ of dimension
	$\sum_r M_r \times \sum_r N_r$.
	Suppose that it is possible in time $\bigomic(\fval)$ to convert between row/column indices of
	$A$ and the corresponding block indices (i.e.\ $m' \to (r,m)$ and $n' \to (s,n)$ and their
	inverse maps, with $A_{m'n'} = \delta_{rs} A^{(r)}_{mn}$).
	Then $A$ is $\epsp{\max_r \{ \bval_r \}}{\fval}$.
\end{theorem}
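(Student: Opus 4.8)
The plan is to verify the four conditions of \cref{def:eps} directly by stitching together the EPS decompositions of the individual blocks. Write $m' \mapsto (r,m)$ and $n' \mapsto (s,n)$ for the index-to-block maps, so that $A_{m'n'} = \delta_{rs}\, A^{(r)}_{mn}$; by hypothesis each of these maps and its inverse can be evaluated in time $\bigomic(\fval)$. Since $A^{(r)}$ is $\epsp{\bval_r}{\fval}$, I would fix for each $r$ a set $K_r$, coefficients $\alpha^{(r)}_{mnk}$, and conditional distributions $\Pdistsub{r}{n,k|m}$, $\Qdistsub{r}{m,k|n}$ witnessing this. Then I would take $K = \bigsqcup_r K_r$ (a disjoint union, which is fine since $R<\infty$) and define $\alpha_{m'n'k} := \alpha^{(r)}_{mnk}$ whenever $m'\mapsto(r,m)$, $n'\mapsto(r,n)$, and $k\in K_r$, and $\alpha_{m'n'k} := 0$ otherwise (in particular whenever $m'$ and $n'$ lie in different blocks). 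Similarly I would set $\Pdist{n',k|m'} := \Pdistsub{r}{n,k|m}$ when $m'\mapsto(r,m)$, $n'\mapsto(r,n)$, $k\in K_r$, and $0$ otherwise, and likewise $\Qdist{m',k|n'} := \Qdistsub{r}{m,k|n}$.

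Next I would check condition~\eqref{cond:eps_alphasum}: summing $\alpha_{m'n'k}$ over $k$ gives $0=A_{m'n'}$ if $m',n'$ are in different blocks, and $\sum_{k\in K_r}\alpha^{(r)}_{mnk}=A^{(r)}_{mn}=A_{m'n'}$ if both are in block $r$. One also observes that $\Pdist{n',k|m'}$ and $\Qdist{m',k|n'}$ are honest probability distributions, since the blocks partition the index set and $\sum_{n,k\in K_r}\Pdistsub{r}{n,k|m}=1$. For condition~\eqref{cond:eps_cost}: if $m',n'$ are in different blocks the numerator $\abs{\alpha_{m'n'k}}$ vanishes and the ratio is $0/0=0$ by the stated convention; if they are in the same block $r$ the ratio equals $\abs{\alpha^{(r)}_{mnk}}/\bigl(\Pdistsub{r}{n,k|m}^{1/p}\Qdistsub{r}{m,k|n}^{1/q}\bigr)\le \bval_r\le \max_r\{\bval_r\}$.

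Finally, for conditions~\eqref{cond:eps_samplr} and~\eqref{cond:eps_samprl}: given $m'$, first compute $(r,m)$ in time $\bigomic(\fval)$; then invoke the block-$r$ sampling guarantee to draw $n,k$ from $\Pdistsub{r}{n,k|m}$ and compute $\alpha^{(r)}_{mnk}/\Pdistsub{r}{n,k|m}$ and $\alpha^{(r)}_{mnk}/\Qdistsub{r}{m,k|n}$ in time $\bigomic(\fval)$; then convert $(r,n)$ back to $n'$ in time $\bigomic(\fval)$. Because $\alpha_{m'n'k}/\Pdist{n',k|m'}=\alpha^{(r)}_{mnk}/\Pdistsub{r}{n,k|m}$ and $\alpha_{m'n'k}/\Qdist{m',k|n'}=\alpha^{(r)}_{mnk}/\Qdistsub{r}{m,k|n}$, the quantities demanded by \cref{def:eps} are exactly the ones the block already produced, so the total time is $\bigomic(\fval)$; condition~\eqref{cond:eps_samprl} follows by the symmetric argument starting from a column index. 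This shows $A$ is $\epsp{\max_r\{\bval_r\}}{\fval}$.

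I do not expect a genuine obstacle here; the argument is essentially bookkeeping. The only points needing care are (i) choosing a common index set $K$ for all blocks (handled by the disjoint union, using $R<\infty$), (ii) the off-diagonal zero blocks, where one must confirm that the $\alpha$'s vanish identically so that~\eqref{cond:eps_alphasum} holds and the $0/0$ convention covers~\eqref{cond:eps_cost}, and (iii) noting that the index conversions, charged at $\bigomic(\fval)$ by assumption, leave the computed ratios unchanged and hence do not inflate the running time.
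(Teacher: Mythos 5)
Your proposal is correct and follows essentially the same route as the paper's proof: fix EPS witnesses for each block, glue them together by zero-padding off the block diagonal, and check the four conditions directly. The only cosmetic difference is that you take $K$ to be the disjoint union $\bigsqcup_r K_r$ while the paper uses the plain union $\cup_r K_r$ (the paper avoids ambiguity because the block label $r$ is already carried by the row/column indices, so overlapping $K_r$'s cause no trouble); both are valid and the rest of the argument is identical.
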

\begin{proof}
	Since $A^{(r)}$ is $\epsp{\bval_r}{\fval}$ for each $r$, there are
	$K_r$, $\alpha^{(r)}_{mnk}$,
	$\Pdistsub{r}{n,k|m}$, and $\Qdistsub{r}{m,k|n}$ satisfying \cref{def:eps},
	with $m \in \{1,\dotsc,M_r\}$, $n \in \{1,\dotsc,N_r\}$, and $k \in K_r$.
	Since we can convert between row/column indices of
	$A$ and the corresponding block indices in time $\bigomic(\fval)$, go ahead and label the
	indices of $A$ using block indices: $A_{(r,m),(s,n)} = \delta_{rs} A^{(r)}_{mn}$.
	Define $K = \cup_r K_r$ and
	\begin{align}
		\alpha_{(r,m),(s,n),k} = \begin{cases}
			\alpha^{(r)}_{mnk} &\mbox{if $r=s$ and $k \in K_r$}
			\\ 0 &\mbox{otherwise.}
		\end{cases}
	\end{align}
	This satisfies condition~\eqref{cond:eps_alphasum} of \cref{def:eps} since
	\begin{align}
		\sum_{k \in K} \alpha_{(r,m),(s,n),k}
		&= \delta_{rs} \sum_{k \in K_r} \alpha^{(r)}_{mnk}
		\\ &= \delta_{rs} A^{(r)}_{mn}
		\\ &= A_{(r,m),(s,n)}.
	\end{align}
	Define the probability distributions
	\begin{align}
		\Pdist{(s,n),k|(r,m)} &= \delta_{rs} \Pdistsub{r}{n,k|m},
		\\
		\Qdist{(r,m),k|(s,n)} &= \delta_{rs} \Qdistsub{s}{m,k|n}.
	\end{align}
	That $\alpha_{(r,m),(s,n),k}$, $\Pdist{(s,n),k|(r,m)}$, and $\Qdist{(r,m),k|(s,n)}$
	satisfy conditions~\eqref{cond:eps_samplr} and~\eqref{cond:eps_samprl}
	of \cref{def:eps} directly follows from the fact that
	$\alpha^{(r)}_{mnk}$, $\Pdistsub{r}{n,k|m}$, and $\Qdistsub{s}{m,k|n}$
	satisfy conditions~\eqref{cond:eps_samplr} and~\eqref{cond:eps_samprl}
	for all $r$.
	Condition~\eqref{cond:eps_cost} is satisfied as well, since
	\begin{align}
		\max_{(r,m),(s,n),k} \left\{
			\frac{\abs{ \alpha_{(r,m),(s,n),k} }}{
				\Pdist{(s,n),k|(r,m)}^{1/p}
				\Qdist{(r,m),k|(s,n)}^{1/q}
			}
		\right\}
		&=
		\max_r
		\max_{mnk} \left\{
			\frac{\abs{\alpha^{(r)}_{mnk}}}{
				\Pdistsub{r}{n,k|m}^{1/p}
				\Qdistsub{r}{m,k|n}^{1/q}
			}
		\right\}
		\\ &\le \max_r \{ \bval_r \}.
	\end{align}
\end{proof}

\begin{corollary}
	\label{thm:equal_blocks_eps}
	For $r \in \{1,\dotsc,R\}$, let $A^{(r)}$ be matrices on a space of dimension $N$.
	Suppose that each $A^{(r)}$ is $\epsp{\bval}{\fval}$ with $\fval=\Omega(\log^2(N))$.
	Then $A=\sum_{r=1}^R \ket{r}\bra{r} \ot A^{(r)}$, where the $\ket{r}$ are computational
	basis states, is $\epsp{\bval}{\fval}$.
\end{corollary}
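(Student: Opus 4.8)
The plan is to observe that $A = \sum_{r=1}^R \ket{r}\bra{r} \ot A^{(r)}$ is, with respect to the tensor-product structure of its index space, nothing but the block-diagonal matrix $\bigoplus_{r=1}^R A^{(r)}$, with all $R$ blocks of equal dimension $N \times N$: in the basis ordered as $(1,1),\dotsc,(1,N),(2,1),\dotsc$ the projector $\ket{r}\bra{r}$ picks out the $r$-th diagonal block, so $A_{(r,m),(s,n)} = \delta_{rs} A^{(r)}_{mn}$. Thus I would simply invoke \cref{thm:block_diag_eps} with $\bval_r = \bval$ for every $r$, which outputs that $A$ is $\epsp{\max_r \bval_r}{\fval} = \epsp{\bval}{\fval}$, provided its one extra hypothesis holds: that one can convert in time $\bigomic(\fval)$ between a flat row/column index of $A$ and the corresponding block index $(r,m)$ (and $(s,n)$ for columns).

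First I would check this index-conversion requirement, which is the only nontrivial point. A computational-basis index of $A$ is a pair $(r,m)$, where $r$ labels a basis state $\ket{r}$ of the control register and $m$ a basis state of the $N$-dimensional register; as a single integer it is $m' = (r-1)N + m$. Recovering $(r,m)$ from $m'$ (and the inverse map) is one integer multiplication together with a division-with-remainder by $N$, carried out on integers of bit-length $\log_2(RN) = \log_2 R + \log_2 N$. In the regime of interest --- $R$ at most polynomial in $N$, as in every application of the corollary in \cref{sec:circuits}, where the register $\ket{r}$ has $\bigomic(\log N)$ qubits --- these integers are $\bigomic(\log N)$ bits long, so the arithmetic costs $\bigomic(\log^2 N)$ time, which is $\bigomic(\fval)$ by the hypothesis $\fval = \Omega(\log^2 N)$. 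With the conversion within budget, \cref{thm:block_diag_eps} applies verbatim and the corollary follows.

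I do not expect a genuine obstacle here: the argument is entirely bookkeeping once one notices that the controlled sum is literally block diagonal. The only place that wants a line of care is explaining why the hypothesis $\fval = \Omega(\log^2 N)$ is present at all --- it exists precisely to absorb the cost of splitting and recombining the two index registers --- and, relatedly, noting that a fully arbitrary $R$ would require $\log R = \bigomic(\polylog N)$ for the conversion to stay within $\bigomic(\fval)$, a condition that is automatic in every use made of the corollary.
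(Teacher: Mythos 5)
Your proof follows the paper's exactly: both reduce to \cref{thm:block_diag_eps} with all blocks of equal size and attribute the $\fval=\Omega(\log^2 N)$ hypothesis to the cost of the division-with-remainder by $N$ needed for index conversion. Your additional observation --- that the conversion operates on $\log(RN)$-bit integers, so one implicitly needs $\log R = \bigomic(\log N)$ to keep the cost within $\bigomic(\log^2 N)$ --- is a fair point that the paper's terse proof glosses over, though as you note it is satisfied in every application of the corollary.
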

\begin{proof}
	This is essentially a restatement of \cref{thm:block_diag_eps} for the case where all the
	$A^{(r)}$ are the same size.
	We require $\fval=\Omega(\log^2(N))$ because converting row or column indices of $A$ to
	indices of the blocks (as required for application of \cref{thm:block_diag_eps}) requires
	the operation of computing the quotient and remainder of division by $N$.
	The $\fval=\Omega(\log^2(N))$ requirement can be dropped if one is dealing with query
	complexity rather than computational complexity.
\end{proof}

\begin{corollary}
	\label{thm:low_rank_fourier_eps}
	Let $U$ denote a unitary matrix on $n$ qubits whose rows are CT states (e.g.\ the Fourier
	transform).
	Let $g : \{0, \dotsc, 2^{n-1}\} \to \{0, \dotsc, 2^{n-1}\}$ be a $\poly(n)$
	time computable function.
	Then the projector $\sum_{x=0}^{2^{n-1}} \ket{x}\bra{x} \ot U^\dag \ket{g(x)}\bra{g(x)} U$
	is $\epstwo{1}{\poly(n)}$.
	This projector corresponds to measuring half of the system in the computational basis to get
	measurement result $x$, measuring the other half of the system in the basis determined by
	$U$ to get $y$, and returning true if $y=g(x)$.
	The measurement depicted in \cref{fig:big_circuit} is of this form.
\end{corollary}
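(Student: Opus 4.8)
The plan is to recognise the claimed operator as block diagonal with respect to the first $n$-qubit register, with each block a rank-one projector onto a CT state, and then simply chain together three results already in hand: dyads onto CT states are EHT (\cref{thm:dyads}), EHT operators are in particular EPS (\cref{thm:basic_math}), and a block-diagonal stack of uniformly-EPS operators on a common space is EPS (\cref{thm:equal_blocks_eps}).

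Concretely, I would write the operator as $\sum_x \ket{x}\bra{x} \otimes A^{(x)}$ with $A^{(x)} = U^\dag \ket{g(x)}\bra{g(x)} U$, where $x$ ranges over the computational basis labels of the first $n$-qubit register and each $A^{(x)}$ acts on the second $n$-qubit register, of dimension $N=2^n$. Setting $\ket{\phi_x} := U^\dag \ket{g(x)}$, we have $\bra{g(x)} U = (U^\dag \ket{g(x)})^\dag = \bra{\phi_x}$, so each block is the rank-one dyad $A^{(x)} = \ket{\phi_x}\bra{\phi_x}$ (and, incidentally, since the $\ket{x}\bra{x}$ are mutually orthogonal and each $A^{(x)}$ is an orthogonal projector, $\sum_x \ket{x}\bra{x}\otimes A^{(x)}$ is genuinely a projector). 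The key observation is that the $k$-th component of $\ket{\phi_x}$ is $(\phi_x)_k = (U^\dag)_{k,g(x)} = \overline{U_{g(x),k}}$, i.e.\ $\ket{\phi_x}$ is the entrywise complex conjugate of the $g(x)$-th row of $U$. By hypothesis each row of $U$ is a CT state (\cref{def:ct}); complex conjugation leaves $\abs{\cdot}^2$ unchanged and only conjugates a computed amplitude, so $\ket{\phi_x}$ is again CT. Moreover the CT operations for $\ket{\phi_x}$ can be performed in $\poly(n)$ time \emph{uniformly in $x$}: given $x$, first compute $j = g(x)$ in $\poly(n)$ time, then run the CT routine for the $j$-th row of $U$, then conjugate.

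With this in place, \cref{thm:dyads} applied with $p=q=2$ and $\twonorm{\phi_x}=1$ gives that $A^{(x)} = \ket{\phi_x}\bra{\phi_x}$ is $\ehttwo{1}{\poly(n)}$, and \cref{thm:basic_math} then gives that $A^{(x)}$ is $\epstwo{1}{\poly(n)}$ with the \emph{same} $\bval=1$ and a $\poly(n)$ value of $\fval$, which we may pad to satisfy $\fval = \bigomic(\log^2 N) = \bigomic(n^2)$ as required below. Since the $A^{(x)}$ all act on the same $N=2^n$-dimensional space and share this common $\epstwo{1}{\fval}$ description, \cref{thm:equal_blocks_eps} yields that $\sum_x \ket{x}\bra{x}\otimes A^{(x)}$ is $\epstwo{1}{\poly(n)}$, which is the claim.

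The argument is mostly bookkeeping of prior lemmas; the only points needing care are (i) the identification of the block $U^\dag\ket{g(x)}\bra{g(x)}U$ as a dyad onto the conjugated $g(x)$-th row of $U$, together with the (easy) remark that complex conjugation preserves the CT property, and (ii) verifying that the CT description of the blocks is genuinely uniform in $x$ — that the $\poly(n)$-time computability of $g$ lets one algorithm serve every block — and that the final $\fval$ meets the $\bigomic(\log^2 N)$ hypothesis of \cref{thm:equal_blocks_eps}. I expect no real obstacle beyond this.
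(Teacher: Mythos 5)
Your proof is correct and follows essentially the same route as the paper's: express the operator as a block-diagonal sum of rank-one dyads onto $U^\dag\ket{g(x)}$, invoke \cref{thm:dyads} to get EHT, then \cref{thm:basic_math} and \cref{thm:equal_blocks_eps} to conclude. The only difference is that you fill in the details the paper leaves implicit (that $U^\dag\ket{g(x)}$ is the conjugated $g(x)$-th row of $U$ and hence CT, that the block descriptions are uniform in $x$, and that $\fval$ can be padded to meet the $\Omega(\log^2 N)$ hypothesis).
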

\begin{proof}
	Apply \cref{thm:equal_blocks_eps} with $A^{(x)} = U^\dag \ket{g(x)}\bra{g(x)} U$.
	$U^\dag \ket{g(x)}$ is a CT state, so by \cref{thm:dyads} $A^{(x)}$ is
	$\ehttwo{1}{\poly(n)}$ and therefore also $\epstwo{1}{\poly(n)}$.
\end{proof}

\begin{corollary}
	\label{thm:subsystem_eps}
	Let $I_{M_1}$ and $I_{M_2}$ denote the identity operator on spaces of dimension $M_1$ and
	$M_2$.
	Let $A$ be an $\epsp{\bval}{\fval}$ matrix of dimension $N_1 \times N_2$ with
	$\fval = \Omega(\log^2(M_1 M_2 N_1 N_2))$.
	Then $I_{M_1} \ot A \ot I_{M_2}$ is $\epsp{\bval}{\fval}$.
	This somewhat trivial result is important in that it allows the matrix to act on subsystems
	of the full state.
\end{corollary}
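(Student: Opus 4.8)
The plan is to recognize $I_{M_1} \otimes A \otimes I_{M_2}$ as a block diagonal matrix whose blocks are all copies of $A$, and then invoke \cref{thm:block_diag_eps}. Concretely, I would index the rows of $I_{M_1} \otimes A \otimes I_{M_2}$ by triples $(a_1,m,a_2)$ with $a_1 \in \{1,\dotsc,M_1\}$, $m \in \{1,\dotsc,N_1\}$, $a_2 \in \{1,\dotsc,M_2\}$, and the columns by triples $(b_1,n,b_2)$, so that
\begin{equation}
	(I_{M_1} \otimes A \otimes I_{M_2})_{(a_1,m,a_2),(b_1,n,b_2)} = \delta_{a_1 b_1}\,\delta_{a_2 b_2}\, A_{mn}.
\end{equation}
Grouping $r := (a_1,a_2)$ and $s := (b_1,b_2)$, this is exactly the block diagonal form $\delta_{rs} A^{(r)}_{mn}$ with $R = M_1 M_2$ blocks, each block $A^{(r)}$ equal to $A$ and hence $\epsp{\bval}{\fval}$ by hypothesis.

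First I would verify the index-conversion hypothesis of \cref{thm:block_diag_eps}: given a flat row index $m'$ of $I_{M_1}\otimes A \otimes I_{M_2}$ (an integer of $\bigomic(\log(M_1 M_2 N_1 N_2))$ bits), recovering the triple $(a_1,m,a_2)$ and reassembling it as $(r,m)=((a_1,a_2),m)$ requires only a constant number of integer quotient/remainder operations, and the inverse map and the column version are analogous. Each quotient/remainder on $\bigomic(\log(M_1 M_2 N_1 N_2))$-bit integers costs $\bigomic(\log^2(M_1 M_2 N_1 N_2))$ time, so the whole conversion runs in $\bigomic(\log^2(M_1 M_2 N_1 N_2))$, which by the hypothesis $\fval = \Omega(\log^2(M_1 M_2 N_1 N_2))$ is $\bigomic(\fval)$. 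Then \cref{thm:block_diag_eps} yields that $I_{M_1}\otimes A \otimes I_{M_2}$ is $\epsp{\max_r\{\bval_r\}}{\bigomic(\fval)} = \epsp{\bval}{\bigomic(\fval)}$, and absorbing the constant into $\fval$ gives $\epsp{\bval}{\fval}$.

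The argument has essentially no hard step; the only thing that requires care is the bookkeeping just described — confirming that the tensor-factor index arithmetic (translating between the mixed-radix $(M_1,N_1,M_2)$ indexing and the flat block indexing) stays within the $\bigomic(\fval)$ budget. This is precisely why the hypothesis $\fval = \Omega(\log^2(M_1 M_2 N_1 N_2))$ is imposed; without it the statement would have to allow a slightly larger $\fval$. As an alternative route avoiding \cref{thm:block_diag_eps} in its general form, one could instead write $I_{M_1}\otimes A \otimes I_{M_2} = \Pi_1^\top\,(I_{M_1 M_2}\otimes A)\,\Pi_2$ for permutation matrices $\Pi_1,\Pi_2$ that merely reorder tensor factors, note that $I_{M_1 M_2}\otimes A = \bigoplus_{r=1}^{M_1 M_2} A$ is $\epsp{\bval}{\fval}$ by the identical-blocks case of \cref{thm:block_diag_eps}, observe that the $\Pi_i$ are $\epsp{1}{\bigomic(\log^2(M_1 M_2 N_1 N_2))}$ permutation matrices, and combine via the product rule \cref{thm:eps_sumprodexp}\eqref{part:eps_prod}, with the $\log^2$ overhead again absorbed by $\fval$.
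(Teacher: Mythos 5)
Your proposal is correct and takes essentially the same approach as the paper's own proof, which simply applies \cref{thm:block_diag_eps} with all the $A^{(r)}$ blocks equal to $A$ and invokes the hypothesis $\fval = \Omega(\log^2(M_1 M_2 N_1 N_2))$ to cover the cost of converting row/column indices of $I_{M_1} \ot A \ot I_{M_2}$ into block indices of $A$. Your extra bookkeeping about the mixed-radix $(a_1,m,a_2)$ indexing, and the alternative route conjugating $I_{M_1 M_2}\ot A$ by permutations and using \cref{thm:eps_sumprodexp}\eqref{part:eps_prod}, merely makes explicit what the paper's terse proof leaves implicit.
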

\begin{proof}
	Apply \cref{thm:block_diag_eps} with all of the $A^{(r)}$ blocks being
	equal.  We require $\fval = \Omega(\log^2(M_1 M_2 N_1 N_2))$ in order to allow converting row or
	column indices of $I_{M_1} \ot A \ot I_{M_2}$ to indices of $A$ in time $\bigomic(\fval)$.
\end{proof}

We now turn to the Grover reflection operation.  We will show this operator to be
$\epstwo{3}{\log(N)}$.
Since a unitary operator incurs a time expense of $\bval^4$ as per~\eqref{eq:circuitsim_cost},
each round of Grover's algorithm multiplies the simulation time by $3^4=81$.
This time is constant
in the number of qubits, but is exponential in the number of rounds.  Our technique is
therefore perfectly capable of simulating a small number of Grover reflections placed anywhere
in a circuit, but would perform very poorly, $\exp(\Theta(\sqrt{N}))$ time, if
applied to the $\Theta(\sqrt{N})$ rounds required by Grover's algorithm.

\begin{theorem}
	\label{thm:grover_eps}
	Let $\ket{+} = N^{-1/2} \sum_{i=0}^{N-1} \ket{i}$.
	The Grover reflection $I-2\ket{+}\bra{+}$ is $\epstwo{3}{\log(N)}$.
\end{theorem}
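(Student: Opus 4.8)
The plan is to assemble the Grover reflection from EPS building blocks rather than constructing the probability distributions of \cref{def:eps} by hand. Write $I - 2\projdyad{+} = I + \left(-2\projdyad{+}\right)$; the strategy is that $\ket{+}$ is computationally tractable, the identity is a permutation (equivalently, a diagonal unitary), and EPS constants scale under scalar multiplication and add under addition of matrices.

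First I would verify that $\ket{+} = N^{-1/2}\sum_{i} \ket{i}$ is a CT state in the sense of \cref{def:ct}: the distribution $\Pdist{i} = \abs{\braket{i}{+}}^2 = 1/N$ is uniform and can be sampled with $\bigomic(\log N)$ fair coin flips, and each coefficient $\braket{i}{+} = N^{-1/2}$ is computable in $\bigomic(\log N)$ time (the time needed to read the $\log N$-bit index). Applying \cref{thm:dyads} with $p=q=2$, so that $\twonorm{+} = 1$, shows $\projdyad{+}$ is $\ehttwo{1}{\log N}$, hence $\epstwo{1}{\log N}$ since every EHT operator is EPS (\cref{thm:basic_math}). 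The scalar rule of \cref{thm:basic_math} with $s=-2$ then gives that $-2\projdyad{+}$ is $\epstwo{2}{\log N}$.

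Next I would note that the identity on an $N$-dimensional space has $\infnorm{I} = \onenorm{I} = 1$ and that its rows/columns are trivial to enumerate and evaluate in $\bigomic(\log N)$ time, so that $I$ is $\epstwo{1}{\log N}$ by \cref{thm:eps1}. Finally I would apply the sum rule \cref{thm:eps_sumprodexp}\eqref{part:eps_sum_simple} to $I$ and $-2\projdyad{+}$, which gives that $I - 2\projdyad{+}$ is $\epstwo{1+2}{\log N} = \epstwo{3}{\log N}$, the desired conclusion.

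There is no genuinely hard step: the argument is a short composition of results already proved, and the only points requiring care are the two tractability checks for $\ket{+}$ (sampling a uniform distribution and reading/writing $\log N$-bit indices), which are routine. If one wanted the sharper constant $3 - 4/N$ suggested by \cref{tab:ifmax}, one could instead take $K = \{0\}$, $\alpha_{mn0} = A_{mn}$, and the symmetric distributions $\Pdist{n|m} = \Qdist{m|n} = \abs{A_{mn}}/(3-4/N)$, which are valid for every $N \ge 2$ because each row of $\matabs{A}$ sums to $\abs{1-2/N} + (N-1)(2/N) = 3 - 4/N$; these can be sampled in $\bigomic(\log N)$ average time (with probability $(1-2/N)/(3-4/N)$ output $n=m$, otherwise output $n$ uniformly over the remaining indices), and substituting into \cref{def:eps} yields $\bval = 3 - 4/N \le 3$ with $\alpha_{mn0}/\Pdist{n|m} = (3-4/N)\,\sgn(A_{mn})$, but the sum-rule argument already suffices for the stated bound.
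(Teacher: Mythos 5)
Your argument matches the paper's proof step for step: show that $I$ and $-2\projdyad{+}$ are EPS (the latter via \cref{thm:dyads} and the scalar rule of \cref{thm:basic_math}), then combine with the sum rule \cref{thm:eps_sumprodexp}\eqref{part:eps_sum_simple}; the only cosmetic difference is that you invoke \cref{thm:eps1} to handle $I$ whereas the paper checks \cref{def:eps} for the identity directly. Your supplementary direct construction achieving $\bval = 3 - 4/N$ is also correct and in fact saturates the lower bound $\bval \ge \twonorm{\matabs{A}}$ of \cref{thm:best_eps_qnorm}\eqref{part:best_bval_ge_qnorm}, though as you note the sum-rule argument already gives the stated bound $\bval \le 3$.
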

\begin{proof}
	Let $\delta_{mn}$ be the Kronecker delta.
	The identity operator can be seen to be $\epsp{1}{\log(N)}$, for any $p$ but in particular
	$p=2$, by simple inspection of \cref{def:eps} with $K=\{0\}$ and
	$\alpha_{mnk} = \Pdist{n,k|m} = \Qdist{m,k|n} = \delta_{mn}$.
	Note that we must take $\fval=\log(N)$ rather than $\fval=1$ since it takes
	$\Omega(\log(N))$ time to even write the indices $m$ and $n$, which are $\log(N)$ bits long.

	By \Cref{thm:dyads}, the projector $\ket{+}\bra{+}$ is $\ehttwo{1}{\log(N)}$,
	and therefore also $\epstwo{1}{\log(N)}$.
	By \cref{thm:basic_math} the operator $(-2)\ket{+}\bra{+}$ is $\epstwo{2}{\log(N)}$
	and by \cref{thm:eps_sumprodexp}\eqref{part:eps_sum_simple} the operator
	$I-2\ket{+}\bra{+}$ is $\epstwo{3}{\log(N)}$.
	One cannot do much better than $\bval=3$ since
	$\twonorm{I-2(\ket{+}\bra{+})} \to 3$ as $N \to \infty$.
\end{proof}

Next we show that the Haar wavelet transform on $n$ qubits, denoted $\haarn$, is
$\epstwo{\sqrt{n+1}}{n}$.
This is the lowest possible value of $\bval$, since $\twonorm{\abshaarn} = \sqrt{n+1}$.

\begin{definition}
	\label{def:haar}
	The \textit{Haar wavelet transform} on $n$ qubits is defined to be
	\begin{equation}
		\label{eq:haar}
		\haarn =
			\left( \ket{0} \bra{+} \right)^{\ot n} +
			\sum_{m=0}^{n-1}
			\left( \ket{0} \bra{+} \right)^{\ot m}
			\ot \ket{1} \bra{-} \ot I^{\ot n-m-1}.
	\end{equation}
	Note that there are other conventions that differ from this by a permutation in the
	computational basis.  Such permutations do not affect whether the Haar transform is
	$\epstwo{\sqrt{n+1}}{n}$.
\end{definition}

As an example, the Haar transform on three qubits is implemented by the circuit depicted in
\cref{fig:haarcircuit} and in the computational basis takes the form
\begin{equation}
	G_3 = \left[
		\begin{array}{cccccccc}
			\frac{1}{\sqrt{8}} & \frac{1}{\sqrt{8}} & \frac{1}{\sqrt{8}} & \frac{1}{\sqrt{8}} &
			\frac{1}{\sqrt{8}} & \frac{1}{\sqrt{8}} & \frac{1}{\sqrt{8}} & \frac{1}{\sqrt{8}}
			\\
			\frac{1}{\sqrt{8}} & \frac{-1}{\sqrt{8}} & \frac{1}{\sqrt{8}} & \frac{-1}{\sqrt{8}} &
			\frac{1}{\sqrt{8}} & \frac{-1}{\sqrt{8}} & \frac{1}{\sqrt{8}} & \frac{-1}{\sqrt{8}}
			\\
			\frac{1}{\sqrt{4}} & 0 & \frac{-1}{\sqrt{4}} & 0 &
			\frac{1}{\sqrt{4}} & 0 & \frac{-1}{\sqrt{4}} & 0
			\\
			0 & \frac{1}{\sqrt{4}} & 0 & \frac{-1}{\sqrt{4}} &
			0 & \frac{1}{\sqrt{4}} & 0 & \frac{-1}{\sqrt{4}}
			\\
			\frac{1}{\sqrt{2}} & 0 & 0 & 0 & \frac{-1}{\sqrt{2}} & 0 & 0 & 0 \\
			0 & \frac{1}{\sqrt{2}} & 0 & 0 & 0 & \frac{-1}{\sqrt{2}} & 0 & 0 \\
			0 & 0 & \frac{1}{\sqrt{2}} & 0 & 0 & 0 & \frac{-1}{\sqrt{2}} & 0 \\
			0 & 0 & 0 & \frac{1}{\sqrt{2}} & 0 & 0 & 0 & \frac{-1}{\sqrt{2}}
		\end{array}
	\right].
\end{equation}

\begin{figure}
	\begin{center}
		\includegraphics{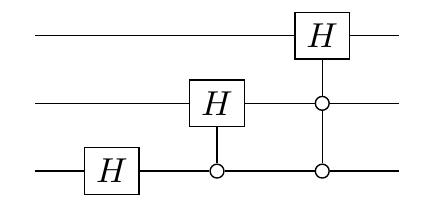}
	\end{center}
	\caption{
		This circuit implements the Haar transform of \cref{def:haar}, on three
		qubits~\cite{arxiv:quant-ph/9702028}.
		The gates in this circuit are controlled-Hadamard gates, and the open circles denote
		that the Hadamard gates are active when all of the controls are in the $\ket{0}$ state.
	}
	\label{fig:haarcircuit}
\end{figure}

\begin{theorem}
	\label{thm:haar_eps}
	The Haar transform on $n$ qubits is $\epstwo{\sqrt{n+1}}{n}$.
\end{theorem}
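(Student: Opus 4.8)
The plan is to produce an explicit optimal EPS decomposition of $\haarn$ by following the recipe behind \cref{thm:best_eps_qnorm}\eqref{part:best_bval_eq_qnorm}: build the two probability distributions from a pair of (positive) singular vectors of $\abshaarn$, which for $p=q=2$ are the ordinary top singular vectors. First I would record $\abshaarn$ from \cref{def:haar}: entrywise absolute value turns $\bra{-}$ into $\bra{+}$, so
\[
	\abshaarn = \left(\ket{0}\bra{+}\right)^{\ot n} + \sum_{m=0}^{n-1} \left(\ket{0}\bra{+}\right)^{\ot m} \ot \ket{1}\bra{+} \ot I^{\ot n-m-1}.
\]
Then I would exhibit the top singular vectors by hand. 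Take $\vvec = \ket{+}^{\ot n}$, a positive unit vector with all entries $1/\sqrt{N}$, $N=2^n$. Using $\left(\ket{0}\bra{+}\right)\ket{+}=\ket{0}$, $\left(\ket{1}\bra{+}\right)\ket{+}=\ket{1}$ and $I\ket{+}=\ket{+}$ gives $\abshaarn\vvec = \ket{0}^{\ot n} + \sum_{m=0}^{n-1}\ket{0}^{\ot m}\ot\ket{1}\ot\ket{+}^{\ot n-m-1}$, a sum of $n+1$ mutually orthonormal vectors (any two differ in a $\ket{0}$‑vs‑$\ket{1}$ slot), hence $\abshaarn\vvec=\sqrt{n+1}\,\uvec$ for a unit vector $\uvec = \abshaarn\ket{+}^{\ot n}/\sqrt{n+1}$. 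A short computation of $\abshaarn^\top\uvec$ — in each summand only the term whose marked position equals the position of the first $1$ survives — yields $\abshaarn^\top\uvec=\sqrt{n+1}\,\vvec$. Moreover $\uvec$ has strictly positive entries: its component at a basis vector $\ket{x}$ whose first $1$ sits in position $a$ (counting from $0$) equals $2^{-(n-a-1)/2}/\sqrt{n+1}$, and its component at $\ket{0}^{\ot n}$ equals $1/\sqrt{n+1}$. In particular $\twonorm{\abshaarn}\ge\twonorm{\abshaarn\vvec}=\sqrt{n+1}$.

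Next I would set $K=\{0\}$, $\alpha_{xy}=(\haarn)_{xy}$, and use the distributions
\[
	\Pdist{y|x} = \frac{\abs{(\haarn)_{xy}}\,v_y}{[\abshaarn\vvec]_x},
	\qquad
	\Qdist{x|y} = \frac{\abs{(\haarn)_{xy}}\,u_x}{[\abshaarn^\top\uvec]_y}.
\]
Since $[\abshaarn\vvec]_x=\sqrt{n+1}\,u_x$ and $[\abshaarn^\top\uvec]_y=\sqrt{n+1}\,v_y$ with $u_x,v_y>0$, these are well‑defined probability distributions, condition~\eqref{cond:eps_alphasum} of \cref{def:eps} is immediate, and the left side of~\eqref{cond:eps_cost} equals $\max_{xy}\bigl([\abshaarn\vvec]_x/v_y\bigr)^{1/2}\bigl([\abshaarn^\top\uvec]_y/u_x\bigr)^{1/2}=\sqrt{n+1}$ exactly, by the singular‑vector identities. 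The efficiency claim rests on the entry structure of $\haarn$: row $x$ is the uniform row $N^{-1/2}\sum_y\bra{y}$ if $x=0^n$, and otherwise, if the first $1$ of $x$ is at position $a$, it has exactly $2^{a+1}$ nonzero entries, all of magnitude $2^{-(a+1)/2}$, supported on the columns $y$ whose last $n-a-1$ bits agree with those of $x$. Hence $\abs{(\haarn)_{xy}}\,v_y$ is constant on the support of each fixed row, so $\Pdist{\cdot|x}$ is \emph{uniform} there and is sampled with at most $n$ coin flips. Symmetrically, the nonzero rows meeting a fixed column $y$ are $0^n$ together with one row $x^{(a)}=0^a1\,y_{a+2}\cdots y_n$ for each $a\in\{0,\dots,n-1\}$, and a direct check shows $\abs{(\haarn)_{x^{(a)}y}}\,u_{x^{(a)}}=1/\sqrt{(n+1)N}$ for every $a$ (the $0^n$ case included), so $\Qdist{\cdot|y}$ is uniform over those $n+1$ rows and is sampled by picking $a$ uniformly. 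Computing $(\haarn)_{xy}$, $u_x$, and the ratios $\alpha_{xy}/\Pdist{y|x}$ and $\alpha_{xy}/\Qdist{x|y}$ reduces to finding a first‑$1$ bit and evaluating a sign and a power of $\sqrt{2}$, i.e.\ $\bigomic(n)$ work, so conditions~\eqref{cond:eps_samplr}–\eqref{cond:eps_samprl} hold with $\fval=n$ and $\haarn$ is $\epstwo{\sqrt{n+1}}{n}$.

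Optimality of $\bval=\sqrt{n+1}$ then follows since \cref{thm:best_eps_qnorm}\eqref{part:best_bval_ge_qnorm} forbids $\bval<\twonorm{\abshaarn}$; the construction (achieving $\bval=\sqrt{n+1}$) together with the lower bound $\twonorm{\abshaarn}\ge\sqrt{n+1}$ also pins down $\twonorm{\abshaarn}=\sqrt{n+1}$, matching the remark preceding the theorem. I expect the main obstacle to be the bookkeeping in the third paragraph: precisely describing the support of each row and column of $\haarn$ and verifying the two "constant on the support" identities — in particular that $\abs{(\haarn)_{x^{(a)}y}}\,u_{x^{(a)}}$ is genuinely $a$‑independent, the exponents $-(a+1)/2$ from the matrix entry and $-(n-a-1)/2$ from $u_{x^{(a)}}$ cancelling the $a$‑dependence. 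Once the singular pair $\uvec=\abshaarn\ket{+}^{\ot n}/\sqrt{n+1}$, $\vvec=\ket{+}^{\ot n}$ is in hand, the rest is routine.
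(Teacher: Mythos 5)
Your proposal is correct, and it arrives at exactly the distributions the paper uses, but by a different route. The paper simply posits $\Pdist{\yvec|\xvec}$ and $\Qdist{\xvec|\yvec}$ to be uniform over the nonzero entries of each row/column of $\haarn$, writes them out in Iverson-bracket form, and verifies condition~\eqref{cond:eps_cost} of \cref{def:eps} by term-by-term division. You instead instantiate the general recipe of \cref{thm:best_eps_qnorm}\eqref{part:best_bval_eq_qnorm}: you identify the top singular pair $\vvec = \ket{+}^{\ot n}$, $\uvec = \abshaarn\vvec/\sqrt{n+1}$ of the nonnegative matrix $\abshaarn$, verify $\abshaarn\vvec = \sqrt{n+1}\,\uvec$ and $\abshaarn^\top\uvec = \sqrt{n+1}\,\vvec$ directly from the tensor-product structure, and then observe that the singular-vector-derived distributions $\Pdist{y|x} = \abs{(\haarn)_{xy}}v_y/[\abshaarn\vvec]_x$ and $\Qdist{x|y} = \abs{(\haarn)_{xy}}u_x/[\abshaarn^\top\uvec]_y$ collapse to uniform distributions on each row's and column's support, because $\abs{(\haarn)_{xy}}v_y$ is row-constant and $\abs{(\haarn)_{xy}}u_x$ is column-constant after the exponent cancellation you flag. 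This is more conceptual: it explains \emph{why} uniform sampling is optimal for the Haar transform (it is the generalized-singular-vector distribution in disguise), and it establishes $\twonorm{\abshaarn} = \sqrt{n+1}$ as a byproduct rather than as a separate remark. The paper's direct verification is perhaps shorter to write out but offers less insight. One cosmetic remark: your description of the rows meeting a fixed column, $x^{(a)} = 0^a 1\, y_{a+2}\cdots y_n$, has an off-by-one relative to the paper's $0$-indexed bits (should be $y_{a+1}\cdots y_{n-1}$), but this does not affect the argument.
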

\begin{proof}
	Since we are dealing with spaces of dimension $2^n$, made of qubits, it will be convenient to
	index the space using bit strings $\xvec, \yvec \in \{0, 1\}^n$.
	We will denote the corresponding basis vectors using the notation
	$\ket{\xvec} = \ket{x_0} \ot \dotsb \ot \ket{x_{n-1}}$.
	To avoid notational confusion regarding subscripts, define $A=\haarn$.
	Then $A_{\xvec \yvec}$ refers to the matrix element $\braopket{\xvec}{\haarn}{\yvec}$.

	Take $K = \{0\}$ (i.e.\ don't make use of the index $k$), and set
	$\alpha_{\xvec \yvec k} = A_{\xvec \yvec}$.  This satisfies
	condition~\eqref{cond:eps_alphasum} of \cref{def:eps} trivially.
	Take the probability distributions $\Pdist{\yvec|\xvec}$ and $\Qdist{\xvec|\yvec}$
	to be uniform over the nonzero elements of the given row or column of
	$A_{\xvec \yvec}$.  Despite the apparent simplicity of this choice, analysis will be
	tedious due to the somewhat complicated definition of $A$.
	These probability distributions can be expressed as follows.
	\begin{align}
		\label{eq:haar_P}
		\Pdist{\yvec|\xvec}
		&= \frac{1}{2^n} [\xvec = 0] + \sum_{m=0}^{n-1}
			\frac{1}{2^{m+1}}
			\left( \prod_{i=0}^{m-1} [ x_i=0 ] \right)
			[x_{m}=1]
			\left( \prod_{i=m+1}^{n-1} [ y_i=x_i ] \right)
	\\
		\label{eq:haar_Q}
		\Qdist{\xvec|\yvec}
		&= \frac{1}{n+1} \left\{
			[\xvec = 0] + \sum_{m=0}^{n-1}
			\left( \prod_{i=0}^{m-1} [ x_i=0 ] \right)
			[x_{m}=1]
			\left( \prod_{i=m+1}^{n-1} [ x_i=y_i ] \right)
		\right\}
	\end{align}
	These can be sampled from in time $\bigomic(n)$.
	Consider first $\Pdist{\yvec|\xvec}$.  Given an $\xvec$,
	only a single one of the $n+1$ terms of~\eqref{eq:haar_P} doesn't vanish, and this term can
	be identified in time $\bigomic(n)$, by searching for the smallest (if any) $m$ for which
	$x_m=1$.  The nonvanishing
	term defines the value of $y_i$ for some of
	the $i$, and gives a uniform distribution for each of the remaining $y_i$.
	For $\Qdist{\xvec|\yvec}$, each of the $n+1$ terms of~\eqref{eq:haar_Q}
	is nonvanishing for a single value of $\xvec$, and each occurs with equal probability.
	Therefore, sampling from $\Qdist{\xvec|\yvec}$ is accomplished by
	drawing from a uniform distribution over $n+1$ possibilities.

	To satisfy conditions~\eqref{cond:eps_samplr} and~\eqref{cond:eps_samprl}
	of \cref{def:eps} we must also show that
	$A_{\xvec\yvec} / \Pdist{\yvec|\xvec}$ and
	$A_{\xvec\yvec} / \Qdist{\xvec|\yvec}$
	can be computed in time $\bigomic(n)$.
	We begin by writing an expression for $A_{\xvec\yvec}$.
	In the equations below, square brackets denote the Iverson bracket, which takes a value
	of 1 if the enclosed expression is true and 0 otherwise.
	\begin{align}
		A_{\xvec\yvec} &= \bra{\xvec} \left(
			\left( \ket{0} \bra{+} \right)^{\ot n} +
			\sum_{m=0}^{n-1}
			\left( \ket{0} \bra{+} \right)^{\ot m}
			\ot \ket{1} \bra{-} \ot I^{\ot n-m-1}
		\right) \ket{\yvec}
		\\
		\label{eq:haar_element}
		&= \frac{1}{\sqrt{2^n}} [\xvec = 0] + \sum_{m=0}^{n-1}
			\frac{(-1)^{y_m}}{\sqrt{2^{m+1}}}
			\left( \prod_{i=0}^{m-1} [ x_i=0 ] \right)
			[x_m=1]
			\left( \prod_{i=m+1}^{n-1} [ x_i=y_i ] \right),
	\end{align}
	Since only a single term for each of~\eqref{eq:haar_P}, \eqref{eq:haar_Q},
	and~\eqref{eq:haar_element} is nonvanishing for each given $\xvec, \yvec$ pair,
	we can divide these equations term-by-term to get
	\begin{align}
		\label{eq:haar_A_over_P}
		\frac{A_{\xvec\yvec}}{\Pdist{\yvec|\xvec}} &=
			\sqrt{2^n} [\xvec = 0] + \sum_{m=0}^{n-1}
			(-1)^{y_m}
			\sqrt{2^{m+1}}
			\left( \prod_{i=0}^{m-1} [ x_i=0 ] \right)
			[x_m=1]
			\left( \prod_{i=m+1}^{n-1} [ x_i=y_i ] \right),
		\\
		\label{eq:haar_A_over_Q}
		\frac{A_{\xvec\yvec}}{\Qdist{\xvec|\yvec}} &=
			(n+1) \left\{
				\frac{1}{\sqrt{2^n}} [\xvec = 0] + \sum_{m=0}^{n-1}
				\frac{(-1)^{y_m}}{\sqrt{2^{m+1}}}
				\left( \prod_{i=0}^{m-1} [ x_i=0 ] \right)
				[x_m=1]
				\left( \prod_{i=m+1}^{n-1} [ x_i=y_i ] \right)
			\right\}.
	\end{align}
	At most a single term of these expressions is nonvanishing for each given $\xvec, \yvec$
	pair, and this term can be identified in time $\bigomic(n)$ by searching for the smallest
	(if any) $m$ for which $x_m=1$.  The value of nonvanishing terms is of the form
	$\pm \sqrt{2^s}$ or $\pm (n+1)/\sqrt{2^s}$ for some $s$, and this can be computed in
	$\bigomic(1)$ time.

	That condition~\eqref{cond:eps_cost} of \cref{def:eps} is satisfied is checked
	directly,
	\begin{align}
		\max_{\xvec \yvec} \left\{
			\frac{\abs{A_{\xvec \yvec}}}
			{\Pdist{\yvec|\xvec}^{1/2} \Qdist{\xvec|\yvec}^{1/2}}
		\right\}
		&=
		\max_{\xvec \yvec} \left\{
			\left(
			\frac{\abs{A_{\xvec \yvec}}}{\Pdist{\yvec|\xvec}}
			\frac{\abs{A_{\xvec \yvec}}}{\Qdist{\xvec|\yvec}}
			\right)^{1/2}
		\right\}
		\\ \label{eq:haar_cost_termwise} &=
		\max_{\xvec \yvec} \left\{
			(n+1)^{1/2}
		\right\}
		\\ &= \sqrt{n+1},
	\end{align}
	where~\eqref{eq:haar_cost_termwise} follows from the fact that only a single term from each
	of~\eqref{eq:haar_A_over_P} and~\eqref{eq:haar_A_over_Q} is nonvanishing, so they can be
	multiplied term-by-term.
\end{proof}

\bibliographystyle{apsrev}
\bibliography{everything}

\end{document}